\documentclass[a4paper,11pt]{article}

\usepackage[margin=3cm,footskip=1cm]{geometry}  
\usepackage[T1]{fontenc}   
\usepackage{xr}
    
\usepackage{amsmath,amssymb,amsthm,bm,mathtools,xspace,booktabs,url}
\usepackage{graphicx,etoolbox}
\usepackage[shortlabels]{enumitem}
\usepackage{mathrsfs}
\usepackage{diagbox}
\usepackage{makecell}
\usepackage{appendix}

\usepackage{lmodern} 
\usepackage{dsfont}
\usepackage{microtype}
\usepackage{syntonly}
\usepackage{float}
\usepackage{xcolor}
\usepackage{amsfonts}
\usepackage{color}
\newcommand{\can}[1]{{\color{magenta} #1 }}
\usepackage{etoc}
\usepackage{multirow} 
\usepackage[citecolor=blue,colorlinks=true,linkcolor=blue]{hyperref}

\usepackage[above,below,section]{placeins} 

%
\setlist{
listparindent=\parindent,
parsep=0pt,
}
      
%
\usepackage[comma,round]{natbib}
    
\AtBeginEnvironment{thebibliography}{%
\small
\setlength\itemsep{0.75em plus 0.5em minus 0.75em}%
}
    
\usepackage{caption}
\captionsetup{
tableposition=top, 
font=small,
labelfont=bf}

%
\usepackage{authblk}


\makeatletter
\newcommand\CorrespondingAuthor[1]{%
\def\@makefnmark{}%
\footnotetext{Corresponding author: #1}%
}
\makeatother

%
%
\makeatletter
\renewenvironment{abstract}{%
\small%
\providecommand\keywords{%
\par\medskip\noindent\textit{Keywords:}\xspace}%
\begin{center}%
\bfseries \abstractname\vspace{-.5em}\vspace{\z@}%
\end{center}%
\quote%
}{\endquote}
\makeatother

\frenchspacing    
\pagestyle{plain} 

\newcommand{\de}{\mathrm{d}}
\newcommand{\M}{{\mathcal M}}

\newcommand{\x}{{\mathbf x}}
\newcommand{\y}{{\mathbf y}}
\newcommand{\z}{{\mathbf z}}
\newcommand{\N}{\mathbb N}
\newcommand{\R}{\mathbb R}
\newcommand{\B}{{\mathcal B}}
\newcommand{\X}{{\mathcal X}}
\newcommand{\1}{{\mathbf 1}}

\def\P{\mathbb P}
\def\E{\mathbb E}
\DeclareMathOperator{\e}{e}
\DeclareMathOperator{\Var}{Var}
\DeclareMathOperator{\Cov}{Cov}
\DeclareMathOperator*{\argmin}{arg \, min} 

\DeclareMathOperator{\supp}{supp}
\DeclareMathOperator{\med}{med}

\newcommand{\beann}{\begin{eqnarray*}}
\newcommand{\eeann}{\end{eqnarray*}}

\newtheorem{definition}{Definition}[section]
\newtheorem{lemma}{Lemma}[section]
\newtheorem{thm}{Theorem}
\newtheorem{remark}{Remark}[section]
\newtheorem{cor}{Corollary}

\graphicspath{{./Figures/}}

\numberwithin{equation}{section}

\begin{document}
    
\title{Statistical learning and cross-validation 
for point processes}

\author[1]{
Ottmar Cronie\footnote{Corresponding author}
}
\affil[1]{
Biostatistics, School of Public Health and Community Medicine,
University of Gothenburg;
}
\affil[1]{
Dept.~of Mathematics and Mathematical Statistics, 
Umeå University, Umeå, Sweden. 

ottmar.cronie@gu.se
}
\author[2]{
Mehdi Moradi
} 
\affil[2]{
Dept.~of Statistics, Computer Science, and Mathematics, 
Public University of Navarre, Pamplona;
}
\affil[2]{
Inst.~of Advanced Materials and Mathematics, 
Public University of Navarre, Pamplona, Spain.  

mehdi.moradi@unavarra.es
}
\author[3]{
Christophe A.N. Biscio
}
\affil[3]{
Dept.~of Mathematical Sciences, Aalborg University,
Aalborg,
Denmark.

christophe@math.aau.dk
}

\date{}

\maketitle


\begin{abstract}
This paper presents the first general (supervised) statistical learning framework for point processes in general spaces. Our approach is based on the combination of two new concepts, which we define in the paper: i) {\em bivariate innovations}, which are measures of discrepancy/prediction-accuracy between two point processes, and ii) {\em point process cross-validation} (CV), which we here define through point process thinning. 
The general idea is to carry out the fitting by predicting CV-generated validation sets using the corresponding training sets; the prediction error, which we minimise, is measured by means of bivariate innovations. 
Having established various theoretical properties of our bivariate innovations, we study in detail the case where the CV procedure is obtained through independent thinning and we apply our statistical learning methodology to three typical spatial statistical settings, namely parametric intensity estimation, non-parametric intensity estimation and Papangelou conditional intensity fitting. Aside from deriving theoretical properties related to these cases, in each of them we numerically show that our statistical learning approach outperforms the state of the art in terms of mean (integrated) squared error.



\keywords 
Bivariate innovation,  Cross-Validation, 
Generalised random samples, 
Kernel intensity estimation,
Loss function, 
Monte-Carlo cross-validation,
Multinomial $k$-fold cross-validation, 
Papangelou conditional intensity function, 
Prediction, 
Subsampling,
Test function, 
Thinning
\end{abstract}

\section{Introduction}

As emphasised by e.g.~\citet{breiman2001statistical} and carefully outlined by e.g.~\citet{vapnik2013nature}, in contrast to classical statistical inference, the philosophy behind statistical learning is that a model's fit should be judged by its ability to predict new/hold-out data. With the ``big data'' age's explosion in data acquisition and complexity, which has required increasingly flexible modelling strategies \citep{hastie2009elements}, the statistical learning paradigm has become increasingly natural to the statistics community. 
Classical statistical learning, which dates back to the 1960s, is rooted in the setting where the data under study constitute a random sample of a fixed size, i.e.~a collection of $k\geq1$ independent and identically distributed (iid) random variables from some (unknown) probability distribution  \citep{hastie2009elements,james2013introduction,vapnik2013nature}. 
More specifically, following e.g.~the setting in \citet{vapnik2013nature}, which deals with what is commonly known as supervised learning, one assumes that there is an unknown distribution $P(\cdot)$ which governs the joint distribution of iid pairs $(y_i,z_i)$, $i=1,\ldots,k$, where $y_1,\ldots,y_k$ are referred to as training data and $z_1,\ldots,z_k$ as validation data. Loosely speaking, the aim is to  ``predict the validation data, using the training data, in an optimal way'': from a given class of functions, which use the training data as input, the objective is to find the candidate which predicts the validation data as well as possible, in the sense of minimised $P(\cdot)$-expected loss, given a suitable loss function. 

As the size and the complexity of the data increase, 
the risk that the underlying independence assumption is violated is increasing and, consequently, statistics in the context of dependent sampling is becoming increasingly relevant  \citep{christensen2019advanced}. In addition, for many datasets we do not typically know the total sample size a priori, i.e.~we do not deal with a controlled trial, and this suggests that the total number of observations should be treated as random. Typical examples of such data structures are given by spatially and/or temporally sampled data; a specific example is the dataset in \citet{bayisa2020large}, which consists of the space-time locations of roughly 500 000 Swedish ambulance calls. 
Such a {\em generalised random sample}, which may be described as a
collection $X=\{x_1,\ldots,x_N\}\subseteq S$, $0\leq N\leq\infty$, of
random points/variables in some general space $S$, where i) $N$ may be random and ii) the points
may be dependent, in essence constitutes what is referred to as a {\em point process} \citep{VanLieshoutBook,DVJ1,MW04,BenesRataj,DVJ2,CSKWM13,Diggle14Book,BRT15,last2017lectures,baccelli2020}; note the somewhat unusual convention that small
letters are used for random variables. Conditioning on $N=n$, 
when the members of $X$ are iid, we obtain the classical notion of a random
sample, which in the point process literature is referred to as a
Binomial point process \citep{VanLieshoutBook,MW04}. 
It is customary
to refer to each point as an {\em event}, since point processes often
are used to describe spatial and/or temporal locations of data which represent events. Typical examples include  astronomical objects  \citep{babu1996spatial,kerscher2000statistical}, 
climatic events \citep{toreticoncurrent}, 
crimes \citep{ABN12,MFJ18,som2021}, 
disease cases \citep{meyer2012space,Diggle14Book}, 
earthquakes \citep{ogata1998space,marsan2008extending,iftimi2019second}, 
farms \citep{biscio2019farming}, 
queuing events \citep{bremaud1981point,baccelli2013elements}, 
traffic accidents \citep{rakshit2019fast,moradispacetime,moradidirectional}, 
and 
trees (forestry)  \citep{stoyan2000recent,cronie2013spatiotemporal}. 
The term point process is rather unfortunate, we argue, seeing
as a point process in itself does not represent a stochastic process in the usual sense, but rather a random sample generalised by the two properties above. Some authors have suggested that a more suited name would be
{\em random point field} \citep{CSKWM13}. 
The historical reason for the name point process stems from the fact that when 
$S=\R$, or
$S=[0,\infty)$, we may view $S$ as a time axis and, consequently, we obtain a {\em temporal point process}
$X=\{t_i\}_{i=1}^N$, which in turn yields the cumulative stochastic
process $X(t)=\#\{t_i\in X:t_i\leq t\}\in\{0,1,\ldots\}$, $t\in S$  \citep{DVJ2}.

It is key to note that, in contrast to the classical setting, observed point process realisations, so-called {\em point patterns}, mostly do not come in the form of repeated samples. Instead, we observe only one realisation $\x=\{x_1,\ldots,x_n\}$ of the underlying point process $X$. This makes the statistical analysis more challenging since we essentially try to extract a large amount of information from only one realisation, where we cannot impose the fixed sample size iid assumption and, consequently, we cannot reduce the problem to one of repeated sampling. 

To the best of our knowledge, this paper introduces the first general statistical learning theory for point processes. This offers a new look on how statistics for point processes can be tackled and it rigorously brings the field into the contemporary era of statistical learning. The setting here is that we observe only one realisation $\x$ of a point process $X$ in some (complete separable metric) space $S$; our theory works equally well under repeated sampling of $X$. 

Our starting point is a family of integral formulas/theorems/relations, commonly referred to as the Campbell, Campbell-Mecke and Georgii-Nguyen-Zessin formulas. These all relate expectations of sums over the points of a point process to integrals with respect to various distributional characteristics of the point process, e.g.~factorial moment measures/densities and Papangelou conditional intensity functions, which characterise many point process models \citep{DVJ2,last2017lectures}. By altering these relations to represent the setting where one point process is ``predicted'' by another point process, we define what we call {\em bivariate innovations}, which essentially can be though of as measures of discrepancy between two point processes; the name {\em innovation} is motivated by the fact that in one particular setting, our bivariate innovations reduce to the ``classical'' innovations of  \citet{baddeley2005residual,baddeley2008properties}. 
Having established that our innovations generalise much of the previously developed statistical theory for point processes (see
e.g.~\citet{moller2017some,cronie2018bandwidth,coeurjolly2019understanding} and the references
therein), we proceed to study different distributional properties of our bivariate innovations. In particular, we arrive at conditions under which they may be exploited to carry out supervised learning. 

Cross-validation (CV) is ubiquitous in modern statistics and data science, and there is a vast literature dealing with CV in the classical iid setting; see e.g.~\citet{arlot2010survey} and the references therein. 
To make our statistical learning framework work (in the single sample setting), we combine the bivariate innovations framework with CV. This allows us to carry out the fitting by minimising the prediction error generated by predicting CV-generated validation sets from CV-generated training sets, by means of our bivariate innovations. 
Due to the underlying (potential)  dependence in a point process, it is not immediately clear how CV should be properly defined for point processes. We here present the first general and theoretically justified treatment of CV for point processes. Inspired by our previous work on point process subsampling \citep{Moradi2019}, we argue that CV in the point process setting should be defined by assuming that the validation sets $\x_i^V$, $i=1,\ldots,k$, are given by $k\geq1$ independently generated thinnings \citep[Section 5.1]{CSKWM13} of the observed point process, and that the training sets are given by $\x_i^T=\x\setminus\x_i^V$, $i=1,\ldots,k$. Formally, we allow $\x_i^V$ to be any kind of (in)dependent thinning, but 
due to many appealing properties of independent thinnings, where one independently retains each point $x\in\x$ with probability $p(x)$, according to some function $p(u)\in(0,1)$, $u\in S$, we mainly argue that CV for point processes should be based on independent thinning. 

Having studied in detail how our general framework can be applied in the settings of i) parametric (factorial) moment estimation, so-called product density/intensity function estimation, ii) Papangelou conditional intensity estimation and iii) non-parametric product density/intensity function estimation, we proceed by looking at specific instances of these. Most notably, through simulation studies we show that in each of these instances, our statistical learning framework outperforms the state of the art.

The paper is structured as follows.
In Section \ref{sec:Background} we give an overview of different point process characteristics, e.g.~product densities and Papangelou conditional intensities, as well as a few common point process models. In addition, we derive some basic, but for our purposes important, results on independent thinning. In Section \ref{s:generalised weigthed innovations} we first present some basics on parameter estimation and, more importantly, we define our bivariate innovations. We then proceed by deriving some distributional properties of our innovations. Section \ref{s:Learning} starts by defining and studying point process cross-validation and then proceeds to laying down our statistical learning framework. At the end of Section \ref{s:Learning} we study in detail the case where we consider independent thinning-based cross-validation. Section \ref{s:Applications} looks closer at a few applications of our approach and Section \ref{s:Discussion} contains a discussion.

\section{Point process preliminaries}\label{sec:Background}

We begin by providing an overview of, for our purposes, relevant point process theory.

\subsection{General notation and outcome spaces}
\label{s:Spaces}
Throughout, $S$ will be a general (complete separable metric) space
with distance metric $d(\cdot,\cdot)$ and $d$-induced Borel sets $\B$;
all subsets under consideration will be members of $\B$ so we
reserve the notation "$\subseteq$" for members of $\B$. A closed ball of
radius $r\geq0$ around a point $u\in S$ will be denoted by
$b(u,r)=\{v\in S:d(u,v)\leq r\}$.
We further endow $S$ with a notion of size in the form of a (locally-
and $\sigma$-finite Borel) reference measure $A\mapsto|A|$,
$A\subseteq S$, where the corresponding integration will be denoted by
$\int\de u$. Throughout, we will often consider functions without
explicitly stating that they are measurable/integrable.


The following examples, which have been illustrated in 
Figure \ref{f:ExamplesSpace}, are commonly encountered in the spatial statistical literature:
\begin{itemize}
\item The $d$-dimensional Euclidean space $S=\R^d$, $d\geq1$, with
  the Euclidean metric $d(u,v)=\|u-v\|_2$, $u,v\in\R^d$, where $\|u\|_2=\|(u_1,\ldots,u_d)\|_2=(\sum_{j=1}^d|u_j|^2)^{1/2}$ is the Euclidean norm, 
  and Lebesgue measure $|\cdot|$.

\item The $\alpha$-radius sphere
  $S=\alpha\mathbb{S}^{d-1}=\{\alpha x\in\R^d:\|x\|_2=1\}$, $\alpha>0$,
  in dimension $d\geq2$; here $d(\cdot,\cdot)$ is the great circle
  distance and $|\cdot|$ is the spherical surface measure \citep{robeson2014point,moller2016functional,lawrence2016point}.

\item A linear network $S=L=\bigcup_{i=1}^{k}l_i$, consisting of
  $k\in\{1,2,\ldots\}$ line segments
  $l_i=[u_i,v_i]=\{tu_i + (1-t)v_i:0\leq t\leq 1\}\subseteq\R^2$; here $L$ is assumed to be graph-connected. Often $d(u,v)$ is
  the shortest-path distance, giving the shortest length of any path
  in $L$ which joins $u,v\in L$ \citep{OS12,ABN12} or, more generally, a so-called 
  regular distance metric
  \citep{rakshit2017second,cronie2020inhomogeneous}. The measure
  $|\cdot|$ here corresponds to integration with respect to arc length
  (1-dimensional Hausdorff measure in $\R^2$).
  
  \item A spatio-temporal domain, where e.g.~one of the spaces $S$ above represents its spatial component, can be defined by $S\times T$, where $T$ is given by either a compact interval in $\R$, $[0,\infty)$ or $\R$  \citep{DVJ2,Diggle14Book,gonzalez2016spatio}. Here $d(\cdot,\cdot)$ is e.g.~given by the maximum of the spatial and the temporal distances and $|\cdot|$ is given by the product measure generated by the reference measure on $S$ and Lebesgue measure on $T$ \citep{Cronie2015}.
\end{itemize}

\begin{figure}[!htpb]
\centering
\includegraphics[scale=0.25]{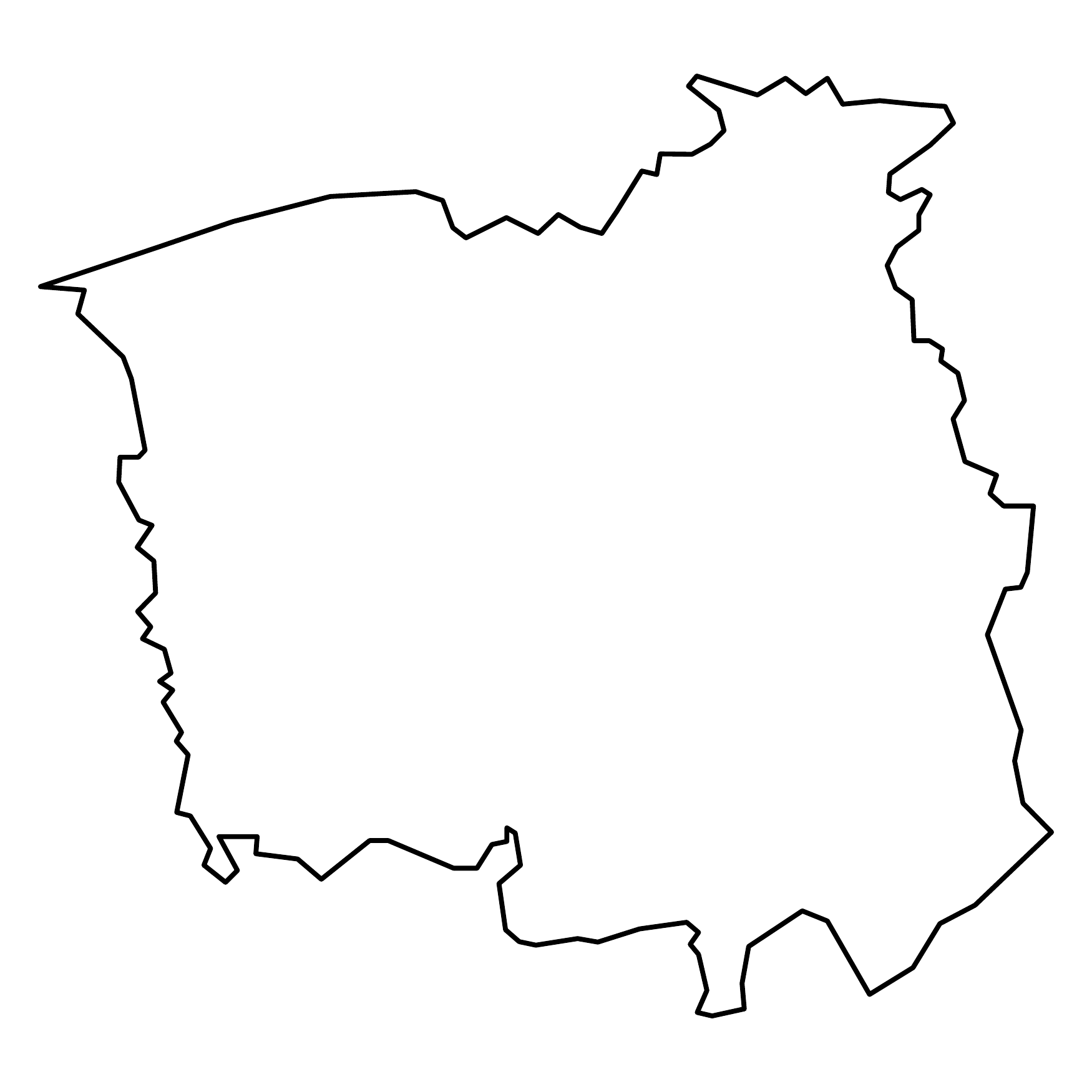}
\includegraphics[scale=0.25]{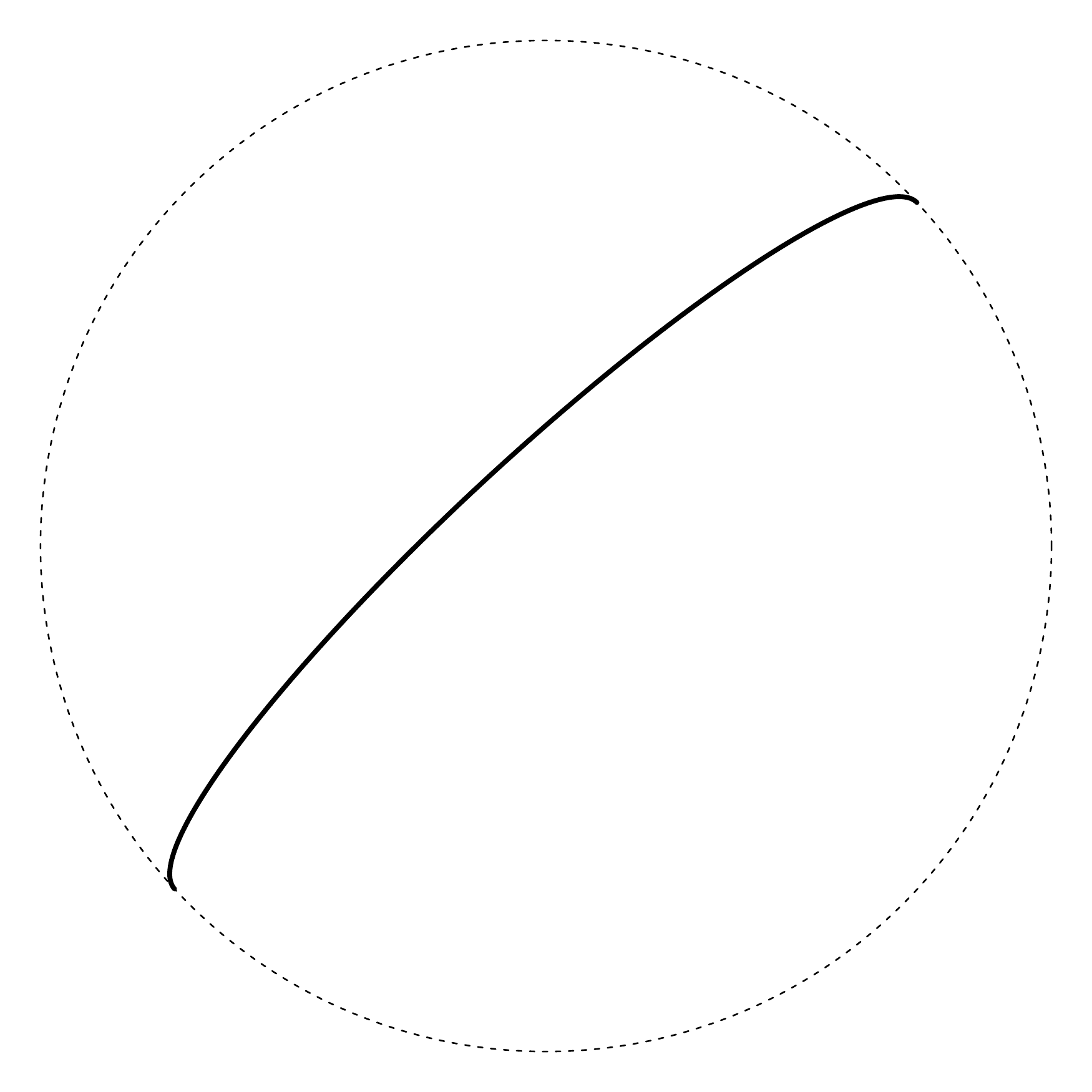}
\includegraphics[scale=0.25]{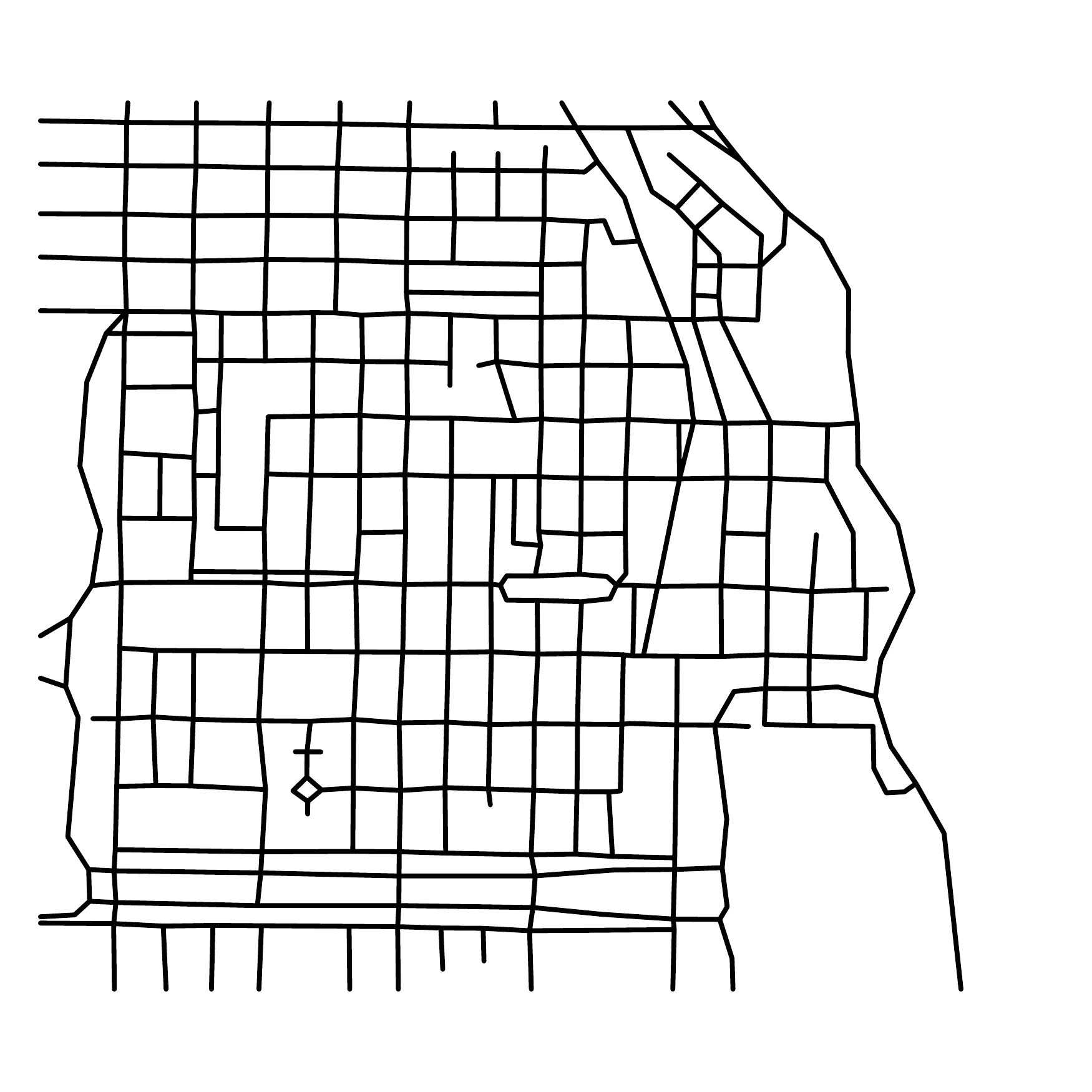}
\caption{Left: A subset of $\R^2$.  Middle: A sphere. Right: A linear network.}
\label{f:ExamplesSpace}
\end{figure}

Throughout, $\#$ will denote cardinality and $\1\{A\}$ will denote the
indicator function for $A$, which is $1$ when $A$ is satisfied and $0$
otherwise. Moreover, $(\Omega,\mathcal{F},\P)$ will be some suitable underlying
abstract probability space which generates the random elements under
consideration; e.g., a random variable/vector
$Y=(Y_1,\ldots,Y_n)\in S^n$, $n\geq1$, is formally a measurable
mapping from $(\Omega,\mathcal{F},\P)$ to $(S^n,\B^n)$.

\subsection{Point processes}

Formally, a (simple) point process $X=\{x_i\}_{i=1}^N$,
$0\leq N<\infty$, in $S$ may be defined as a random element/variable
in the measurable space $(\X,\mathcal{N})$,  where 
$\X=\X_S$ is the collection of point configurations
$\x=\{x_1,\ldots,x_n\}\subseteq S$, $0\leq n\leq\infty$, which are
locally/boundedly finite,
i.e.~$\#(\x\cap A)=\sum_{i=1}^n\1\{x_i\in A\}<\infty$ for any bounded
$A\subseteq S$ \citep{MW04}. Note that $\#(\x\cap S)=n=0$ yields that $\x=\emptyset$ and
if $S$ is bounded then we necessarily have that $n<\infty$. Moreover, $\mathcal{N}$
is the $\sigma$-algebra generated by
the cardinality mappings $\x\mapsto \#(\x\cap A)\in\{0,1,\ldots,\infty\}$,
$A\subseteq S$, $\x\in \X$, 
and it coincides with the Borel $\sigma$-algebra generated by a (modified) Prohorov
metric on $\X$ \citep{DVJ1,DVJ2}; the connection is made
by identifying the random set $X$ with its (discrete) random measure
representation $X(A)=\#(X\cap A)$, $A\subseteq S$. Note that the term
`simple' above refers to the fact that $\#(X\cap\{u\})\in\{0,1\}$
with probability one/almost surely (a.s.) for any $u\in S$; this
follows from the construction of a point process as a random subset of
$S$. Moreover, when $N<\infty$ a.s., which e.g.~is the case if $S$ is bounded, then we say
that $X$ is a {\em finite} point process. 
For general treatments, 
see
e.g.~\citet{VanLieshoutBook,DVJ1,MW04,BenesRataj,DVJ2,CSKWM13,Diggle14Book,BRT15,kallenberg2017random,last2017lectures,baccelli2020}.

As is often the case with distributions of random elements on
abstract spaces, we may here specify the distribution
$P(E)=P_X(E)=\P(X\in E)$, $E\in\mathcal{N}$, of a point process $X$ by
means of its finite dimensional distributions \citep{VanLieshoutBook},
i.e.~the distributions of all vectors $(X(A_1),\ldots,X(A_n))$,
$A_i\subseteq S$, $i=1\ldots,n\geq1$. The sets $E\in\mathcal{N}$ may
be thought of as point process features; we may e.g.~have
$E=\{\x\in\X:\#(\x\cap A)=0\}$ for some $A\subseteq S$. 
When $X$ is finite, its family of Janossy measures, which governs its finite dimensional distributions, sometimes admits densities $\{j_n\}_{n\geq0}$, where  $j_n(u_1,\ldots,u_n)\de u_1\cdots\de u_n$ gives the probability of $X$ having all its points in infinitesimal neighbourhoods of $u_1,\ldots,u_n\in S^n$ \citep{DVJ1}.

A dataset $\x=\{x_1,\ldots,x_n\}\in\X$, which we model/analyse
under the assumption that it has been generated by a point process, is
commonly referred to as a {\em point pattern} and the members of $\x$ and $X$ are often
called {\em events}.



\subsection{Point process characteristics}

Most of the relevant point process characteristics considered in the
literature can be obtained through (combinations of) expectations of
the kind
\begin{align}
\label{eq:sum}
  &\E\left[\mathop{\sum\nolimits\sp{\ne}}_{x_1,\ldots,x_n\in X}
    h(x_1,\ldots,x_n,X\setminus\{x_1,\ldots,x_n\})\right]
    ,
    \qquad
    n\geq1,
\end{align}
where $h:S^n\times\X\to\R$ is 
permutation invariant in its first $n$ arguments; unless $h$ is
non-negative (and possibly infinite), $h$ is assumed to be integrable.  The notation
$\sum^{\neq}$ is used to indicate that the summation is taken over
distinct $n$-tuples and it is noteworthy that \eqref{eq:sum}
corresponds to the expectation of a sum over the elements of the point
process
\begin{align}
\label{e:nTupleProcess}
X_{\neq}^n=\{(x_1,\ldots,x_n)\in X^n:x_i\neq x_j\text{ if }i\neq
j\}\subseteq S^n,
\end{align}
which consists of distinct $n$-tuples of elements of $X$,
i.e.~$\sum_{x_1,\ldots,x_n\in X}^{\neq}=\sum_{(x_1,\ldots,x_n)\in
  X_{\neq}^n}$ \citep{SchneiderWeil}. Note e.g.~that for any
$x_1,x_2\in X$, the points $(x_1,x_2),(x_2,x_1)$ are distinct elements
of $X_{\neq}^2$. Below we show that by considering different
subclasses of functions $h$, we obtain different integral identities
for \eqref{eq:sum}, which are based on different point process
characteristics; restricting such a subclass to non-negative $h$, the
corresponding identity becomes defining for the associated point
processes characteristic. Throughout, when we discuss such characteristics we implicitly assume that they exist.

\subsubsection{Factorial moment characteristics}
\label{s:ProductDensities}

The subclass of functions $h$ in \eqref{eq:sum} which are constant over
$\X$, i.e.~of the form $h(x_1,\ldots,x_n)$, defines the $n$th order
{\em product density/factorial moment density/intensity function} $\rho^{(n)}$ of $X$ through the
{\em Campbell formula/theorem}  \citep[Section 9.5]{DVJ2}, which states that \eqref{eq:sum} equals
\begin{align}
\label{eq:Campbell}
  \int_{S^n}
  h(u_1,\ldots,u_n) \rho^{(n)}(u_1,\ldots,u_n)\de u_1\cdots\de u_n.
\end{align}
Formally, $\rho^{(n)}$ is the Radon-Nikodym
derivative of the $n$th-order  factorial moment measure 
$$
(A_1\times\cdots\times A_n)\mapsto
\E\left[\mathop{\sum\nolimits\sp{\ne}}_{x_1,\ldots,x_n\in X}
    \prod_{i=1}^n\1\{u_i\in A_i\}\right]
, 
\quad A_1,\ldots,A_n\subseteq S,
$$
of $X$, with respect to the product measure $|\cdot|^n$. 
Heuristically, since $X$ is
simple, for infinitesimal neighbourhoods $A_i=du_i$, $\de u_i=|du_i|$, of the points $u_i\in S$, $i=1,\ldots,n$, we
obtain that 
$\P(X(du_1)=1,\ldots,X(du_n)=1)=\E[X(du_1)\cdots
X(du_n)]=\rho^{(n)}(u_1,\ldots,u_n)\de u_1\cdots\de u_n$.

The particular case $n=1$ gives us {\em the intensity function} function $\rho(u)=\rho^{(1)}(u)$, $u\in S$, of $X$, which thus satisfies
$$
\E[X(A)]=\int_A\rho(u)\de u, \qquad A\subseteq S.
$$
From the heuristics above, we see that the intensity function governs the univariate marginal distributional properties of $X$. 
Whenever $\rho(\cdot)\equiv\rho>0$ is constant we
say that $X$ is {\em homogeneous} and otherwise we say that $X$ is
{\em inhomogeneous}. 
Finally, it may be noted that $\rho^{(n)}(\cdot)$ is the
intensity function of the point process $X_{\neq}^n$ in \eqref{e:nTupleProcess}  \citep{SchneiderWeil}.

Clearly, we may have that $\rho^{(n)}(u_1,\ldots,u_n)$ is large
without points of $X$ around $u_1,\ldots,u_n\in S$ being dependent; e.g., 
under independence among the points we have that 
$\rho^{(n)}(u_1,\ldots,u_n)=\rho(u_1)\cdots\rho(u_n)$. Hence, in order to study $n$-point
dependencies among the points of $X$, it is more natural to consider
its $n$th {\em correlation function} (which does not actually
represent correlation in the usual sense):
\begin{equation}
\label{CorrelationFunctions}
g^{(n)}(u_1,\ldots,u_n)
=
\frac{\rho^{(n)}(u_1,\ldots,u_n)}
{\rho(u_1)\cdots\rho(u_n)}, 
\quad u_1,\ldots,u_n \in S.
\end{equation}
Note that $g^{(1)}(\cdot)=\rho(\cdot)/\rho(\cdot)=1$ and under
independence we obtain that $g^{(n)}(\cdot)=1$ for any $n\geq1$. Hence, when
$g^{(n)}(u_1,\ldots,u_n)>1$ we speak of {\em attraction/clustering/aggregation} between points of $X$ located around $u_1,\ldots,u_n\in S$ and when instead $g^{(n)}(u_1,\ldots,u_n)<1$, we speak of
{\em inhibition/regularity/repulsion}. The heuristic idea here is that
we measure joint probability effects after we have scaled away the
individual marginal ones. The archetype model for lack of interaction
is a Poisson process; see Section \ref{s:Poisson} for details.

In the case of $S=\R^d$, when $g^{(n)}(u_1,\ldots,u_n)$ only depends
on the separation vectors $u_i-u_j$, $i\neq j$, and the intensity
function is positive/bounded away from 0, the point process $X$ is
called $n$th-order intensity reweighted stationary
\citep{van11,Cronie2016,ghorbani2020functional}. When $n=2$ this is
referred to as second-order intensity reweighted stationarity (SOIRS)
\citep{InhomK2000}, and we write
$g^{(2)}(u_1,u_2)=g^{(2)}(u_1-u_2)$, whereas when this holds for any $n\geq1$, 
we say that $X$ is intensity reweighted moment stationary
\citep{van11}.
When $X$ is homogeneous, $n$th-order intensity reweighted stationarity
turns into the notion of $n$th-order (moment) stationarity, which, in
turn, is implied by stationarity (provided that all $\rho^{(n)}$,
$n\geq1$, exist); stationarity for a point process in $\R^d$ is
defined as $X$ having the same distribution as $\{y+x:x\in X\}$ for
any $y\in\R^d$. For non-Euclidean spaces $S$, things become more delicate,
however \citep{kallenberg2017random,rakshit2017second,cronie2020inhomogeneous}.

\subsubsection{Conditioning}
Turning to the general case, where $h$ is not necessarily constant over $\X$, 
we obtain that 
\eqref{eq:sum} equals 
\begin{align}
\label{e:CampbellMeasure}
\int_{S^n}
\int_{\X}
h(u_1,\ldots,u_n,\y)
\mathcal{C}_n^!(d((u_1,\ldots,u_n),\y)),
\end{align}
where $\mathcal{C}_n^!(A\times E)$, $A\subseteq S^n$,
$E\in\mathcal{N}$, is the $n$th-order reduced Campbell measure \citep[Section 13]{DVJ2}. Under
assumptions of absolute continuity with respect to the $n$th-order
factorial moment measure and the distribution of $X$, by e.g.~\citet[Sections 13 \& 15]{DVJ2} we obtain
that \eqref{e:CampbellMeasure}, and thereby \eqref{eq:sum}, equal 
\begin{align}
\label{eq:Campbell-Mecke}
&\int_{S^n}
\E_{u_1,\ldots,u_n}^!
\left[
h(u_1,\ldots,u_n,X)
\right]
\rho^{(n)}(u_1,\ldots,u_n)
\de u_1\cdots\de u_n,
\\
\label{eq:GNZ}
&\int_{S^n}
\E
\left[
h(u_1,\ldots,u_n,X)
\lambda^{(n)}(u_1,\ldots,u_n;X)
\right]
\de u_1\cdots\de u_n,
\end{align}
respectively,  
where the former relation is referred to as the {\em reduced Campbell-Mecke
formula/theorem} and the latter as the {\em Georgii-Nguyen-Zessin
(GNZ) formula/theorem.} 
The family $P_{u_1,\ldots,u_n}^!(E)$,
$u_1,\ldots,u_n\in S$, $E\in\mathcal{N}$, of regular conditional
probability distributions governing the expectations in \eqref{eq:Campbell-Mecke} are the so-called $n$th-order {\em reduced
  Palm distributions}, whereas $\lambda^{(n)}(u_1,\ldots,u_n;\x)$, $u_1,\ldots,u_n\in S$, $\x\in\X$, is referred to as the $n$th-order {\em Papangelou conditional intensity function} of $X$.



It follows that
$P_{u_1,\ldots,u_n}^!(\cdot)$ corresponds to a point process
$X_{u_1,\ldots,u_n}^!$, which may be interpreted as $X$ conditioned on having
points at the locations $u_1,\ldots,u_n$, which are removed upon
realisation. 
As one would hereby intuitively guess, the $k$th-order
product density of $X_{u_1,\ldots,u_n}^!$ is given by
\begin{align}\label{e:ReducedPalmIntensity}
&\rho^{!(k)}(v_1,\ldots,v_k|u_1,\ldots,u_n)
=
\frac{\rho^{(k+n)}(v_1,\ldots,v_k,u_1,\ldots,u_n)}
{\rho^{(n)}(u_1,\ldots,u_n)}
,\quad k,n\geq1,
\end{align}
when the $n$th-order product density of $X$ satisfies
$\rho^{(n)}(u_1,\ldots,u_n)>0$, otherwise it is $0$.

Point processes for which the relationship between \eqref{eq:sum} and \eqref{eq:GNZ} is well-defined are commonly referred to
as {\em Gibbs processes} \citep{MollerPalm}. Moreover, we heuristically have
that
\begin{align*}
  &\lambda^{(n)}(u_1,\ldots,u_n;\x)\de u_1\cdots\de u_n =\\
  =&\P(X(du_1)=1,\ldots,X(du_n)=1|X\cap S\setminus(du_1\cup\cdots\cup du_n)=\x\cap S\setminus(du_1\cup\cdots\cup du_n))
\end{align*}
for infinitesimal neighbourhoods $du_i\ni u_i\in S$, $|du_i|=\de u_i$, $i=1,\ldots,n\in S$. In words, this corresponds to 
the
probability of finding points of $X$ in infinitesimal regions around
$u_1,\ldots,u_n$, conditionally on $X$ agreeing with $\x$ outside these
infinitesimal regions. 
Moreover, recalling the Campbell formula and letting $h$ in \eqref{eq:GNZ} be of the
form $h(u_1,\ldots,u_n)$, we
immediately obtain that
$\rho^{(n)}(u_1,\ldots,u_n)=\E[\lambda^{(n)}(u_1,\ldots,u_n;X)]$. 
The first-order Papangelou conditional intensity, 
$\lambda(\cdot)=\lambda^{(1)}(\cdot)$, is commonly referred to as {\em the
  Papangelou conditional intensity} and it 
  is the central building block 
  here since \citep{MollerPalm}
\begin{align}
\label{e:nthPapangelou}
  \lambda^{(n)}(u_1,\ldots,u_n;\x)
  =
  \lambda(u_1;\x)
  \lambda(u_2;\x\cup\{u_1\})
  \cdots
  \lambda(u_n;\x\cup\{u_1,\ldots,u_{n-1}\})
,
\end{align}
as one would intuitively suggest based on the above infinitesimal
conditional probability interpretation. 
In particular, if $X$ is finite, with Janossy densities $j_n$, $n\geq1$,
then the heuristics are formalised by \citep[Section 15.5]{DVJ2}
\begin{align}
\label{e:PapangelouFinitePP}
\lambda(u,\x)
=
\left\{
\begin{array}{rl}
j_{n+1}(\x\cup\{u\})/j_{n}(\x),
& u\notin\x=\{x_1,\ldots,x_n\}\in\X, \\
j_{n}(\x)/j_{n-1}(\x\setminus\{u\}),
& u\in\x=\{x_1,\ldots,x_n\}\in\X,
\end{array}
\right.
\quad 
u\in S.
\end{align}
It should be noted
that this definition more commonly is given in terms of densities with
respect to Poisson process distributions \citep[Theorem
1.6]{VanLieshoutBook}. 

Finally, the connection between these
two notions of conditioning (interior vs exterior) is established
through the relation 
\citep{MollerPalm}
\begin{align}
\label{e:RelationPalmPapangelou}
P_{u_1,\ldots,u_n}^!(E)=\rho^{(n)}(u_1,\ldots,u_n)^{-1}\int_E
\lambda^{(n)}(u_1,\ldots,u_n;\x)P(d\x), \quad E\in\mathcal{N},
\end{align}
where $P(\cdot)$ is the distribution of $X$ on $(\X,\mathcal{N})$.

\subsection{Common point process models}
\label{s:Models}
Below we provide an overview of a few point process model families which are commonly encountered in the literature.

\subsubsection{Fixed size samples}
\label{s:RandomSamples}

The most basic example is the case where we condition on the total point count $X(S)=N\geq1$. This implies that the point process $X=\{x_1,\ldots,x_N\}\subseteq S^N$ is equivalent to an $N$-dimensional random vector where $x_1,\ldots,x_N\in S$ have the same marginal distribution. One may e.g.~think of a multivariate Gaussian random vector where $S=\R$, $\E[x_i]=\mu$, $\Var(x_i)=\sigma^2$ and  $\Cov(x_i,x_j)=\widetilde\sigma$, $i\neq j$, for any $i,j\in\{1,\ldots,N\}$. When we additionally assume that these are independent, so that $X$ is a random sample (iid), we speak of a Binomial point process \citep{VanLieshoutBook,MW04}. 

Assuming that $x_1,\ldots,x_N$ have a joint density $f_N(u_1,\ldots,u_N)$, $u_1,\ldots,u_N\in S$, with marginal densities $f_n(\cdot)$, $1\leq n<N$,
\[
f_n(u_1,\ldots,u_n)
=
\int\cdots\int
f_N(u_1,\ldots,u_n,v_1,\ldots,v_{N-n})
\de v_1\cdots\de v_{N-n},
\quad u_1,\ldots,u_n\in S
\]
the associated $N$th-order Janossy density satisfies  $j_N(u_1,\ldots,u_N)=N!f_N(u_1,\ldots,u_N)$, $u_1,\ldots,u_N\in S$ \citep[Section 5.3]{DVJ1}; the Janossy densities $j_n(\cdot)$ of orders $n\neq N$ are 0. By \citet[Lemma 5.4.III]{DVJ1}, it now follows that the corresponding product densities satisfy 
\begin{align*}
\rho^{(n)}(u_1,\ldots,u_n)
=&
\frac{1}{(N-n)!}
\int_{S^{N-n}}
j_N(u_1,\ldots,u_n,v_1,\ldots,v_{N-n})\de v_1\cdots\de v_{N-n}
\\
=&
\frac{N!}{(N-n)!}
f_n(u_1,\ldots,u_n)
,
\quad 
u_1,\ldots,u_N\in S, 
\quad 
1\leq n\leq N,
\end{align*}
whereby $\rho(u_1)\cdots\rho(u_n)=\rho^{(n)}(u_1,\ldots,u_n)=\frac{N!}{(N-n)!}
f_1(u_1)\cdots f_1(u_n)$ for a Binomial point process. 

Recalling \eqref{e:PapangelouFinitePP}, 
we may here consider the Papangelou conditional intensity given by
\begin{align*}
\lambda(u,\x)
=&\frac{n!f_n(u,x_1,\ldots,x_{n-1})}{(n-1)!f_{n-1}(x_1,\ldots,x_{n-1})}
=
n f_1(u|x_1,\ldots,x_{n-1}),
\qquad
\lambda(u,\emptyset)
=f_1(u),
\end{align*}
where $u\in S,
\quad 
\x=\{x_1,\ldots,x_{n-1}\}\subseteq S^{n-1}$, $2\leq n\leq N-1$, and  for a Binomial point process we have $f_1(u|x_1,\ldots,x_{n-1})=f_1(u)$.

\subsubsection{Poisson processes}
\label{s:Poisson}

A first step towards generalising classical (iid) random samples is to keep
the independence of the points but allow for a random total point
count. Such point process fall into the category of completely random measures \citep[Section 10.1]{DVJ2} and the archetype here, which is also the most prominent family of point
process models, is the family of Poisson processes. If a function
$\rho(u)\geq0$, $u\in S$, governs a well-defined point process $X$ in
the sense that i) $X(A)\sim Poi(\int_A\rho(u)\de u)$ for any
$A\subseteq S$ and ii) for any disjoint $A_1,\ldots,A_n\subseteq S$,
$n\geq1$, the discrete random variables $X(A_1),\ldots,X(A_n)\geq0$
are independent, then $X$ is a Poisson process in $S$ with intensity
function $\rho(u)$, $u\in S$. Consequently,
\begin{align}\label{e:Poisson}
  \lambda^{(n)}(u_1,\ldots,u_n;\x)
  =&\rho
    ^{(n)}(u_1,\ldots,u_n)=\rho(x_1)\cdots\rho(x_n),
  \quad
  g^{(n)}(u_1,\ldots,u_n)
  =
  1,
\\
\label{e:PoissonLogLikelihood}
  j_n(u_1,\ldots,u_n)
  =&
  \left(\int_S\rho(u)\de u\right)^n\exp\left\{-\int_S\rho(u)\de u\right\}
  \prod_{i=1}^n\rho(u_i)
  ,
\end{align}
for any $n\geq1$, and $j_0 = \exp\{-\int_S\rho(u)\de u\}$; recall that the Janossy densities $j_n$, $n\geq1$, refer to the finite case. 
Note further that a Binomial point
process may be defined as a Poisson process conditioned on $X(S)=N$. 
Moreover, 
the family of reduced Palm distributions satisfies 
$P_{u_1,\ldots,u_n}^!(\cdot)=P(\cdot)$, $n\geq1$, i.e.~reduced Palm conditioning
has no effect. Also, an independent thinning (see Section \ref{s:MPPs}) of a Poisson process is again a Poisson process.




\subsubsection{Cox processes}
\label{s:Cox}
A Cox process is essentially the mixed model version of a Poisson
process. More specifically, consider a stochastic/random process/field
$\Lambda(u)$, $u\in S$, which a.s.~is non-negative and satisfies
$\int_A\Lambda(u)\de u<\infty$ for bounded $A\subseteq S$. If, conditional on $\Lambda(\cdot)=\rho(\cdot)$, $X$ is a Poisson process
with intensity $\rho(\cdot)$, then $X$ is said to be a Cox process with random intensity function/driving random field $\Lambda$. It follows that
\begin{align*}
  \rho^{(n)}(u_1,\ldots,u_n)
  =&\E[\Lambda(u_1)\cdots\Lambda(u_n)],
  \quad
  g^{(n)}(u_1,\ldots,u_n)
  =
  \frac{\E[\Lambda(u_1)\cdots\Lambda(u_n)]}
     {\E[\Lambda(u_1)]\cdots\E[\Lambda(u_n)]},
  \\
  \lambda(u;\x)
  =&
    \frac{\E[\exp\{-\int_S\Lambda(v)\de v\}\prod_{x\in\x}\Lambda(x)\Lambda(u)]}
    {\E[\exp\{-\int_S\Lambda(v)\de v\}\prod_{x\in\x}\Lambda(x)]}
                  .
\end{align*}
By Jensen's inequality, $g^{(n)}(u_1,\ldots,u_n)\geq1$ (with equality if $\Lambda$ is
completely independent/noise) for any $n$, whereby a Cox process is clustering.

A particularly tractable and well
studied family of Cox processes is the family of log-Gaussian Cox processes
\citep{moller1998lgcp}. Here the random intensity function is given by
$\Lambda(u)=\exp\{Z(u)\}$, $u\in S$, for a Gaussian random
field $Z$ on $S$. The product densities of such a model can
readily be derived using moment generating functions of Gaussian
random vectors: 
\(
\rho^{(n)}(u_1,\ldots,u_n)
  =\exp\{\sum_{i=1}^n\E[Z(u_i)] + \sum_{i=1}^n\sum_{j=1}^n\Cov(Z(u_i),Z(u_j))/2\}. 
\)
In particular, when $S=\R^d$, if the covariance function
$C(u_1,u_2)=\Cov(Z(u_i),Z(u_j))$ is translation invariant in the sense
that $C(u_1,u_2)=C(u_1-u_2)$, $u_1,u_2\in S$, then $X$ is intensity
reweighted moment stationary and we note that the
variance $\Var(Z(u))=\sigma^2$, $u\in S$, is constant.

\subsubsection{Exponential family Gibbs processes}
\label{s:ExpFamily}

Many common model families fit into the framework of exponential family Gibbs models \citep{VanLieshoutBook,MW04,BRT15}. Such models
have Papangelou conditional intensities of the form
\[
  \lambda_{\theta}(u;\x) = \beta(u;\x) \e^{\theta^{\top}D(u;\x)} =
  \beta(u;\x)
\e^{\theta^{\top}(T(\x\cup\{u\})-T(\x))},
\quad u\in S, 
\x\in\X, 
\theta\in\Theta\subseteq\R^l,
\]
where $\beta:S\times\X\to(0,\infty)$, which we assume to be bounded, 
and $T:\X\to[0,\infty)$ is a canonical sufficient statistic.
Considering a fixed function $\beta(u;\cdot)=\beta(u)>0$, $u\in S$,
specific examples include Poisson processes, area-interaction processes and Strauss processes. 
The quantities above are required to be such that the Papangelou
conditional intensity in question is locally integrable. We stress
that product densities for exponential family models are generally
not available in closed form. Note further that by letting
$\beta(\cdot)\equiv\beta>0$, one obtains a homogeneous version of the process in question. 
Moreover, the function $\beta(\cdot)$ may also
itself belong to some parametric family of functions, in which case
the model in question would be reparametrised so that the
parameter vector $\theta$ would include the parameters of $\beta(\cdot)$.

    

  Letting
    $\theta=\log\eta$ and
    $T(\x)=T_R(\x)=\frac{1}{2}\sum_{x_1,x_2\in\x}^{\neq}\1\{d(x_1,x_2)\leq
    R\}$, we obtain an inhomogeneous Strauss process with Papangelou
    conditional intensity
    $$
    \lambda_{\theta}(u;\x)= \beta(u)\exp\{
    \log\eta(T_R(\x\cup\{u\})-T_R(\x))\} =\beta(u)\eta^{D_R(u;\x)},
    \quad u\in S,\x\in\X,
    $$
    where $R>0$ is called the interaction radius, $\eta\in[0,1]$ is called the
    interaction parameter and
    $D_R(u;\x)=\sum_{x\in\x\setminus\{u\}}\1\{d(u,x)\leq
    R\}=\#\{x\in\x\setminus\{u\}:d(u,x)\leq R\}$, $u\in S$, $\x\in\X$, where we use the convention that $0^0=1$.
    Strauss processes form a basic family of inhibiting point process,
    where $\eta=1$ corresponds to a Poisson process, $\eta\in(0,1)$ corresponds to the family of Strauss soft-core models and $\eta=0$ corresponds to the classical
    hard-core model, which does not allow points to be within distance $R$ from one
    another. Note that the hard-core model's Papangelou conditional intensity may be expressed as
    \begin{align}
    \label{e:HardCorePapangelou}
    \lambda_{\theta}(u;\x)=\beta(u)\1\left\{u\notin\bigcup_{x\in\x}b(x,R)\right\};
    \end{align}
    recall that $b(x,R)$ denotes a closed $R$-ball around $x$. 

    

\subsubsection{Determinantal point processes}
\label{s:DPP}

Determinantal point processes (DPPs) are models which give rise to inhibition among their points. 
They were introduced to statistics in
their current form by \citet{macchi75dpp} and have since been applied in numerous spatial statistical settings  \citep{lavancier2015dpp} 
as well as in machine learning \citep{kulesza2012dpp}. 
Known examples of DPPs when $S=\R^d$ include 
the Poisson and Ginibre point processes.

A point process $X$ is a DPP on $S$ if there exists a complex-valued function $C: S \times S \rightarrow \mathbb{C}$, 
called the kernel of $X$, such 
that for all $n\geq 1$, the product densities are given by 
\begin{equation}\label{eq:def dpp}
    \rho^{(n)}(u_1,\ldots,u_n)
    = 
    \det[C](u_1,\ldots,u_n),
    \qquad u_1,\ldots,u_n \in S,
\end{equation}
where $\det$ denotes the determinant and $[C](u_1,\ldots,u_n)$ denotes the matrix with entry $C(u_i,u_j)$
on the $i$-th row and the $j$-th column, $1\leq i,j \leq n$. 
In order to ensure the existence of a DPP 
with product densities given by~\eqref{eq:def dpp}, 
several conditions need to be enforced on $C$. 
Consider the integral operator defined for all square integrable function $f: S \rightarrow \mathbb{C}$ by
\begin{equation}\label{eq:int operator dpp}
    f \rightarrow \int_S C(x,y) f(y) \de y.
\end{equation}
According to~\citet{hough2009zeros}, if $C$ is hermitian, locally
square integrable and all the eigenvalues of the integral
operator~\eqref{eq:int operator dpp} are in $[0,1]$, then $C$ defines
one and only one DPP. 

DPPs have various appealing properties for statistical applications. 
For instance, it follows directly from~\eqref{eq:def dpp} that most
moment-based summary statistics 
have closed form expressions, which are governed by 
$C$. 
Moreover, according to \citet{shirai2003fermion}, if $X$ is a DPP with kernel $C$ such that all its eigenvalues are in $[0,1)$, then for any $n\geq 1$ and $u_1,\ldots u_n \in S$, $X^!_{u_1,\ldots, u_n}$ is also a DPP with kernel $C^{!}_{u_1,\ldots,u_n}(x,y)$ given by
\begin{equation*}
    C^{!}_{u_1,\ldots,u_n}(x,y) = 
     \frac{\det M}{\det [C](u_1,\ldots,u_n)},
\end{equation*}
where $M$ is the matrix with entry on the $i$-th row and $j$-th column given by $C(u_i,u_j)$ for $2\leq i,j\leq (n+1)$, $C(u_i,y)$ for $j=1$ and $2\leq i \leq (n+1)$,  $C(x,u_j)$ for $i=1$ and $2\leq j \leq (n+1)$, and $C(x,y)$ for $i=j=1$. 
Hence, since by~\eqref{eq:def dpp}  the (factorial) moments of a DPP are known in closed form, with respect to its kernel, also the moments of $X^!_{u_1,\ldots, u_n}$
as well as the $n$th-order Papangelou conditional intensity of $X$ are available in closed form. 
In particular, for any 
$\x= \{x_1,\ldots,x_n\}\in\X$ and $u\in S$,
\begin{equation*}
    \lambda(u;\x) = 
    \frac{\det [C](u,x_1,\ldots,x_n)}{\det [C](x_1,\ldots,x_n)}.
\end{equation*}
Finally, according to~\citet{lavancier2015dpp}, 
a simple and convenient choice of kernel when $S=\R^d$ 
is a real-valued continuous covariance function verifying $C(u,v)=C_0(\|u-v\|)$, $u,v\in\R^d$, where the Fourier transform of $C_0$ belongs to $[0,1]$.
This highlights the fact that dealing with DPPs in non-Euclidean spaces $S$ can be quite challenging \citep[cf.][]{anderes2020isotropic}. 

\subsection{Marked point processes and thinning}
\label{s:MPPs}

We next look closer at marked point processes, which are particular instances of point processes on product spaces. These are usually of interest when each event carries some additional piece of
information, which is not directly connected to $S$, e.g.~a label, some
quantitative measurement or more abstract objects such as functions
and sets \citep{CSKWM13,ghorbani2020functional}. Our main interest in using marking here is related to the fact that so-called thinnings of point processes may be obtained through a particular kind of marking; our cross-validation approaches presented in Section \ref{s:CVgeneral} are based on thinning.

Given two general spaces $S$ and $\M$, with associated reference measures $|A|$, $A\subseteq S$ and $\nu_{\M}(B)$,
$B\subseteq\M$, consider the product space $\breve{S}=S\times\M$. The space $\breve{S}$ is itself a general space which we endow with the product reference measure $\breve{\nu}(A\times B)=|A|\nu_{\M}(B)$,
$A\times B\subseteq S\times\M$. 
Moreover, denote
the space of locally finite point configurations
$\breve{\x}=\{(x_1,m_1),\ldots,(x_n,m_n)\}$ in $\breve{S}$ by
$\breve{\X}$ and the corresponding point configuration $\sigma$-algebra by $\breve{\mathcal{N}}$. A point process
$\breve{X}=\{(x_i,m_i)\}_{i=1}^N$ on $\breve{S}$, i.e.~a random
element in $(\breve{\X},\breve{\mathcal{N}})$, is called a {\em marked point process (MPP)} with {\em marks} $m_i\in\M$, $i=1,\ldots,N$, if the projection $X=\{x_i\}_{i=1}^N$, which is a random element in
$(X,\mathcal{N})$, exists as a well-defined point process on $S$.  In keeping with \citet{DVJ2}, we call $X$ the {\em ground process}, $S$ the {\em ground space} and $\M$ the {\em mark space}.

\begin{remark}
  By letting the mark space be given by $\M=(0,1)$, we may treat each
  mark as an "arrival time" and transition to so-called {\em sequential point processes}, which have the same construction as point processes but with the difference that the elements of $\X$ instead are ordered tuples $\x=(x_1,\ldots,x_n)$
  \citep{LieshoutSequential}.
\end{remark}
 
The product densities 
of $\breve{X}$ satisfy
\citep{Cronie2016}
\begin{align}
\label{e:MarkedProdDens}
\breve{\rho}^{(n)}((u_1,m_1),\ldots,(u_n,m_n))
=&
f_{\M}^{(n)}(m_1,\ldots,m_n|u_1,\ldots,u_n)
\rho_X^{(n)}(u_1,\ldots,u_n)
,
\end{align}
where $(u_i,m_i)\in S\times\M$, $i=1,\ldots,n$, $\rho_X^{(n)}(\cdot)$ is
the $n$th-order product density of the ground process,
$f_{\M}^{(n)}(\cdot|u_1,\ldots,u_n)$, $u_1,\ldots,u_n\in S$, is a
family of density functions on $\M^n$ and $g_X^{(n)}(\cdot)$ is the
$n$th-order correlation function of $X$; we write
$f_{\M}(\cdot|\cdot)=f_{\M}^{(1)}(\cdot|\cdot)$.  This highlights that
the joint distributions of the marks are specified conditionally on
the ground process.

A particular kind of marking which will be of interest to us is
(location-dependent) independent marking, where the marks are
independent conditional on the ground process. Here
$f_{\M}^{(n)}(m_1,\ldots,m_n|v_1,\ldots,v_n)=f_{\M}(m_1|v_1)\cdots
f_{\M}(m_n|v_n)$ and if $X$ is randomly labelled, i.e.~if the marks
are iid conditional on the ground process then
$f_{\M}^{(n)}(m_1,\ldots,m_n|v_1,\ldots,v_n)=f_{\M}(m_1)\cdots
f_{\M}(m_n)$ for a common density $f_{\M}(m)$, $m\in\M$. Note that for
a stationary MPP the latter is the density of what is
commonly referred to as {\em the} mark distribution
\citep{CSKWM13,baccelli2020}. A particular instance of an independently marked point process is a Poisson process $\breve{X}=\{(x_i,m_i)\}_{i=1}^N$ on $S\times\M$, where $X=\{x_i\}_{i=1}^N$ is well-defined (a Poisson process); note e.g.~that a homogeneous Poisson process on $S\times\M=\R^d\times\R^{d'}$, $d,d'\geq1$, with intensity $\breve{\rho}(\cdot)\equiv\rho>0$ is not an MPP with ground space $\R^d$ since the local finiteness of $X=\{x_i\}_{i=1}^N$ is violated \citep{VanLieshoutBook}. 

\subsubsection{Thinning}

Heuristically, a thinning $Z\subseteq X$ of a point process $X$ is
generated by applying some rule/mechanism to $X$ which either retains
or deletes each $x\in X$ \citep{CSKWM13}. We next provide a definition of thinning
which is based on bivariate marking of a point process.

\begin{definition}
\label{def:ZuYmpp}
  Given a point process $X=\{x_i\}_{i=1}^N\subseteq S$, a {\em
  thinning} $Z$ of $X$ with retention probability 
  $p: S \times \X \rightarrow [0,1]$ 
  may be defined as the marginal process
  $Z=\{x:(x,m)\in\breve{X}\cap S\times\{1\}\}$ of a bivariate marking of $X$,
  \begin{align}
  \label{e:ZuYmpp}
   \breve{X}
  =
  \{(x_i,m_i)\}_{i=1}^N
  \subseteq S\times \M,
  \quad \M=\{0,1\},
\end{align}
where 
$m_i=m(x_i)\in \M$, $i=1,\ldots,N$, 
for some (possibly random) marking function $m(\cdot)$, which governs the retention probability function.

When $\breve{X}$ is independently marked, i.e.~the retention
probability $p(u)$, $u\in S$, does not depend on $X$, we say that
$Z$ is an {\em independent thinning}. If, in addition,
$p(\cdot)\equiv p\in[0,1]$, so that the marks are independent and
Bernoulli distributed with parameter $p$, we say that $Z$ is a
{\em $p$-thinning}.
\end{definition}

Here the reference measure on $\M$ is given by $\nu_{\M}(\{i\})=1$,
$i\in\M=\{0,1\}$, i.e.~the counting measure on $\M$. 
Note that the
retention probability function governs how points of $X$ are assigned
to $Z$. Moreover, the complement/remainder
$Y=X\setminus Z$ is a thinning with retention probability function
$1-p(u,\x)$. 

Under independent thinning, which is one of the cornerstones of this paper, 
we independently retain each point $x\in X$ according to $p(u)$, $u\in S$. 
This should be contrasted to the case where the
retention (marking) of a point depends on whether other specific
(e.g.~nearby) points have been retained. 
Independent thinnings are particularly tractable and below we provide a central result on important distributional properties of independent thinnings; its proof can be found in Section \ref{s:Proofs}. We will use this result to establish certain properties of the bivariate innovations presented in Section~\ref{s:generalised weigthed innovations}

\begin{thm}\label{lemma:Thinning}
Let $Z$ be a p-thinning of a point process $X$ on $S$, with retention probability $p(u)\in(0,1)$, $u\in S$, let $Y=X\setminus Z$, let $\breve{X}$ be the associated MPP representation in \eqref{e:ZuYmpp} and consider some $n\geq1$. 

For any non-negative or integrable
$h:S^n\times\X\to\R$, 
\begin{align}
    \label{eq:thinning Z vs Y}
    &\E\left[\mathop{\sum\nolimits\sp{\ne}}_{x_1,\ldots,x_n\in Z}
    h(x_1,\ldots,x_n, Y)
    \prod_{i=1}^n(1-p(x_i))
    \right] 
    =\nonumber
    \\
    =&
    \E\left[\mathop{\sum\nolimits\sp{\ne}}_{x_1,\ldots,x_n\in Y}
    h(x_1,\ldots,x_n, Y\setminus \{x_1,\ldots,x_n\})
    \prod_{i=1}^n p(x_i)
    \right].
  \end{align}
Moreover, provided that they exist, the $n$th-order Papangelou conditional intensity
and the $n$th-order product density of $Z$ a.e.~satisfy
\begin{align}
\lambda_Z^{(n)}(u_1,\ldots,u_n,Z)
\stackrel{a.s.}{=}&
p(u_1)\cdots p(u_n)\E[\lambda_X^{(n)}(u_1,\ldots,u_n;X)|Z],
\nonumber
\\
\label{e:ThinningProdDens}
\rho_Z^{(n)}(u_1,\ldots,u_n)
=&
p(u_1)\cdots p(u_n)
\rho_X^{(n)}(u_1,\ldots,u_n),
\end{align}
where $\lambda_X^{(n)}$ and $\rho_X^{(n)}$ are the $n$th-order Papangelou conditional intensity and product density of $X$. 
In addition, when the $n$th-order Papangelou conditional intensities of $\breve{X}$ and $Y$ exist, they satisfy 
    \begin{equation*}
        \E[\breve\lambda^{(n)}((u_1,1),\ldots,(u_n,1);\breve{X})|Y]
        =
        \frac{\prod_{i=1}^n p(u_i)}{\prod_{i=1}^n (1-p(u_i))}
        \lambda_Y^{(n)}(u_1,\ldots,u_n;Y)
    \end{equation*}
    for almost all $u_1,\ldots, u_n \in S$. 
    In particular, for a $p$-thinning with retention probability $p\in(0,1)$ we set $p(\cdot)\equiv p$ in all the expressions above. 

\end{thm}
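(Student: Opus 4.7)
The plan is to exploit a single unifying observation: conditional on the ground process $X$, the marks $\{m(x):x\in X\}$ are mutually independent Bernoulli$(p(x))$ variables, so that the indicators $\1\{x\in Z\} = m(x)$ and $\1\{x\in Y\} = 1-m(x)$ let us rewrite any sum over $Z$ or $Y$ as a sum over $X$ weighted by products of $m(\cdot)$'s. Combining this reduction with the Campbell formula \eqref{eq:Campbell} and the GNZ formula \eqref{eq:GNZ} applied to $X$ then delivers all four claims.

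For the identity \eqref{eq:thinning Z vs Y}, I would rewrite the left-hand side as
\[
\E\!\left[\mathop{\sum\nolimits\sp{\ne}}_{x_1,\ldots,x_n\in X} \prod_{i=1}^n m(x_i)(1-p(x_i))\, h(x_1,\ldots,x_n,Y)\right]
\]
and condition on $X$. Because the marks $m(x_1),\ldots,m(x_n)$ are independent of the marks of $X\setminus\{x_1,\ldots,x_n\}$ (which generate $Y\setminus\{x_1,\ldots,x_n\}$), and because on the event $\{\prod_i m(x_i)=1\}$ one automatically has $x_1,\ldots,x_n\notin Y$, so that $Y=Y\setminus\{x_1,\ldots,x_n\}$, the conditional expectation of the product factorises as $\prod_i p(x_i)(1-p(x_i))\,\E[h(x_1,\ldots,x_n,Y^\circ)\mid X]$, where $Y^\circ$ denotes the $(1-p)$-thinning of $X\setminus\{x_1,\ldots,x_n\}$ built from the same marks. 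A symmetric manipulation of the right-hand side, with the roles of $p$ and $1-p$ swapped, produces exactly the same expression, establishing the identity.

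For the product density \eqref{e:ThinningProdDens}, I would apply the Campbell formula \eqref{eq:Campbell} to $Z$ with an arbitrary test function $h(u_1,\ldots,u_n)$, rewrite the sum over $Z$ as a sum over $X$ weighted by $\prod_i m(x_i)$, condition on $X$ to pull out $\prod_i p(x_i)$, and invoke the Campbell formula for $X$; uniqueness of Radon--Nikodym derivatives then forces $\rho_Z^{(n)} = \prod_i p(u_i)\,\rho_X^{(n)}$ almost everywhere. For the Papangelou identity for $Z$, the same recipe works with GNZ in place of Campbell: starting from
\[
\E\!\left[\mathop{\sum\nolimits\sp{\ne}}_{x_1,\ldots,x_n\in Z} h(x_1,\ldots,x_n, Z\setminus\{x_1,\ldots,x_n\})\right],
\]
I would rewrite it as a sum over $X$ weighted by $\prod_i m(x_i)$, noting that $Z\setminus\{x_1,\ldots,x_n\}$ depends only on $X\setminus\{x_1,\ldots,x_n\}$ and its independent marks, condition on $X$ to yield $\prod_i p(x_i)$, apply GNZ to $X$, and use the tower property $\E[h(\cdot,Z)\lambda_X^{(n)}] = \E[h(\cdot,Z)\,\E[\lambda_X^{(n)}\mid Z]]$ (since $h(\cdot,Z)$ is $\sigma(Z)$-measurable). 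The resulting integrand matches that of GNZ for $Z$ with $\lambda_Z^{(n)}(u_1,\ldots,u_n;Z) = \prod_i p(u_i)\,\E[\lambda_X^{(n)}(u_1,\ldots,u_n;X)\mid Z]$ a.s., since the test function is arbitrary.

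Finally, for the $\breve{\lambda}^{(n)}$--$\lambda_Y^{(n)}$ relation, the independent marking structure together with \eqref{e:MarkedProdDens} and \eqref{e:nthPapangelou} yields $\breve{\lambda}^{(n)}((u_1,m_1),\ldots,(u_n,m_n);\breve{X}) = \prod_i f_{\M}(m_i\mid u_i)\,\lambda_X^{(n)}(u_1,\ldots,u_n;X)$ with $f_{\M}(1\mid u) = p(u)$, so $\breve{\lambda}^{(n)}((u_1,1),\ldots,(u_n,1);\breve{X}) = \prod_i p(u_i)\,\lambda_X^{(n)}(u_1,\ldots,u_n;X)$. The argument used for $\lambda_Z^{(n)}$ applies verbatim to $Y$ with $p$ replaced by $1-p$, giving $\lambda_Y^{(n)}(u_1,\ldots,u_n;Y) = \prod_i(1-p(u_i))\,\E[\lambda_X^{(n)}(u_1,\ldots,u_n;X)\mid Y]$. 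Taking $\E[\,\cdot\mid Y]$ on both sides of the $\breve{\lambda}^{(n)}$ formula and dividing by $\prod_i(1-p(u_i))$ then yields the stated identity. The main technical obstacle will be the rigorous bookkeeping when transferring expectations between the $Z$- (or $Y$-) and $X$-viewpoints inside the GNZ machinery, so that $Z\setminus\{x_1,\ldots,x_n\}$ is correctly identified with a thinning of $X\setminus\{x_1,\ldots,x_n\}$ that is conditionally independent of $m(x_1),\ldots,m(x_n)$, and then passing from integrated identities against an arbitrary test function to the pointwise almost-sure / almost-everywhere statements for the Papangelou conditional intensities.
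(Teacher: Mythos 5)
Your proof of the central identity \eqref{eq:thinning Z vs Y} is essentially the paper's own argument: both rewrite the sums over $Z$ and $Y$ as sums over $X$ weighted by the Bernoulli marks, condition on $X$, use the conditional independence of $m(x_1),\ldots,m(x_n)$ from the marks generating $Y\setminus\{x_1,\ldots,x_n\}$, and observe that on the relevant events $Y$ (resp.\ $Y\setminus\{x_1,\ldots,x_n\}$) coincides with the common thinning $Y^\circ$ of $X\setminus\{x_1,\ldots,x_n\}$, so that the two sides reduce to the same expression with $\prod_i p(x_i)(1-p(x_i))$. The paper carries this out more pedantically (summing over fixed index tuples $\mathcal{A}_N$ so that the conditional-expectation manipulations are legitimate term by term, and handling $N=\infty$), which is the bookkeeping you correctly flag as the main technical obstacle.

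Where you genuinely diverge is in the auxiliary claims. For $\rho_Z^{(n)}$ and $\lambda_Z^{(n)}$ the paper simply cites external results (Baccelli et al.\ for the product densities; Decreusefond--Vasseur together with \eqref{e:nthPapangelou} for the conditional intensity), whereas you give direct Campbell/GNZ derivations; these are correct and make the proof more self-contained, at the cost of having to be careful that GNZ for $X$ is applied to a function that has already absorbed the extra mark randomness (formally one works with the marked process $\breve X$). For the final identity, the paper applies the GNZ formula to \emph{both sides} of the already-proved \eqref{eq:thinning Z vs Y} --- the left side via GNZ for $\breve X$, the right side via GNZ for $Y$ --- and equates the integrands for arbitrary $h$; this needs only the existence of $\breve\lambda^{(n)}$ and $\lambda_Y^{(n)}$, exactly as the theorem hypothesises. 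Your route instead passes through the factorisation $\breve\lambda^{(n)}((u_1,1),\ldots,(u_n,1);\breve X)=\prod_i p(u_i)\,\lambda_X^{(n)}(u_1,\ldots,u_n;X)$ and the $Y$-analogue $\lambda_Y^{(n)}=\prod_i(1-p(u_i))\,\E[\lambda_X^{(n)}\mid Y]$. That works, but it quietly assumes the existence of $\lambda_X^{(n)}$ (strictly stronger than the stated hypotheses) and leans on the independent-marking factorisation of $\breve\lambda^{(n)}$, which is standard but is itself a nontrivial fact that would need justification at the same level of rigour as the rest of the proof. The paper's detour through \eqref{eq:thinning Z vs Y} is precisely what lets it avoid both of these.
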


\begin{remark}
\label{rem:ClassicalConditionalIntesnsity}
Certain marked temporal point processes, e.g.~Hawkes processes, are often specified through a "classical" conditional intensity function $\lambda^{\dagger}(t,m)$, $t\geq0$, $m\in\M$, which heuristically gives the probability of finding an event with mark $m$ in an infinitesimal future time interval $(t,t+dt)$, given the history of events in $[0,t]$. Given a suitable filtration, such a conditional intensity may be defined through an integral relationship of the same form as the GNZ formula, but with $h$ being a predictable stochastic process and $\lambda$ replaced by the predictable stochastic process $\lambda^{\dagger}$ \citep{DVJ1,DVJ2,flint2019functional}. Under integrability conditions on  $\lambda^{\dagger}$, the existence of a Papangelou conditional intensity implies the existence of $\lambda^{\dagger}$ \citep[Lemma 2.7]{flint2019functional}. Consequently, we expect results similar to the ones provided above to hold for classical conditional intensities. 
\end{remark}

\section{Innovations}
\label{s:generalised weigthed innovations}

In this section we define what we refer to as bivariate innovations. These are tools which may be used  e.g.~to predict properties of one point process from another point process. Together with our cross-validation approaches in Section \ref{s:CVgeneral}, they are one of the building blocks of our statistical learning framework.

\subsection{General parametrised estimator families}
Assume that we observe/sample a point pattern
$\x=\{x_1,\ldots,x_n\}$ within some (bounded) study region/domain
$W\subseteq S$, $|W|>0$, which we assume has been generated by some
unknown point process $X$ (restricted to $W$). Broadly speaking, 
statistics here concerns itself with extracting
information about the underlying point process $X$ through $\x$. 

As we shall see, most of the statistical settings which we will
encounter here deal with estimation/modelling of some particular characteristic of
$X$. It turns out that the associated estimators can be characterised
by a {\em general parametrised estimator family}
$\Xi_{\Theta}^n=\{\xi_{\theta}^n: \theta\in\Theta\}$,
$\Theta\subseteq\R^l$, $l\geq1$, $n\geq1$, where
\begin{align}
\label{e:GeneralEstimator}
\xi_{\theta}^n(u_1,\ldots,u_n;\y),
\quad u_1,\ldots,u_n\in S, 
\quad \y\in\X, 
\quad \theta\in\Theta,
\end{align}
are real-valued and, for any $\y$, $\xi_{\theta}^n(\cdot;\y)$ is
either non-negative or integrable. When each
$\xi_{\theta}^n$ is constant over $\y\in\X$, i.e.~it does not depend on $\y$,
we set
\begin{align}
\label{e:GeneralEstimatorNoPP}
\xi_{\theta}^n(\cdot;\y)
\equiv
\xi_{\theta}^n(\cdot)
\text{ for any }
\y\in\X,
\quad
\theta\in\Theta.
\end{align}
The underlying assumption here is that
$\xi_{\theta_0}^n\in\Xi_{\Theta}^n$, for some unknown
$\theta_0\in\Theta$, represents the true characteristic of interest of the
underlying point process $X$. In other words, we assume that there is
no model miss-specification and our aim is to estimate $\theta_0$ based on $\x$, using $\Xi_{\Theta}^n$. To carry out the estimation of
$\theta_0$, one would need to find a minimiser of some loss function,
$\mathcal{L}(\theta)$, $\theta\in\Theta$, which also depends on the
data, i.e.~the point pattern $\x$. Such a minimiser
$\widehat\theta=\widehat\theta_W(\x)\in\Theta$ is referred to as an
estimate and the random version $\widehat\theta_W(X)$ is referred to
as an estimator.

\subsection{Bivariate and univariate innovations}

We next introduce our bivariate innovations which, as previously mentioned, are one of the main components of the statistical learning framework presented in this paper. 
The essential idea behind them is that they predict properties of one point process from another point process. 
Moreover, as is indicated in Section \ref{s:UnivariateInnovations}, these tools may be used to summarise many existing statistical estimation approaches, both parametric and non-parametric ones. 




\begin{definition}\label{def:Innovations}
  Consider two general parametrised estimator families, 
  $\Xi_{\Theta}^n=\{\xi_{\theta}^n: \theta\in\Theta\}$ and
  $\mathcal{H}_{\Theta}=\{h_{\theta}:\theta\in\Theta\}$, 
  both of either the form \eqref{e:GeneralEstimator} or \eqref{e:GeneralEstimatorNoPP}. 
  We refer to the members of $\mathcal{H}_{\Theta}$ as
  {\em test functions}.

  The associated families of {\em ($n$th-order $\mathcal{H}_{\Theta}$-weighted) bivariate innovations}
  $\{\mathcal{I}_{\xi_{\theta}^n}^{h_{\theta}}(\cdot;\z,\y):
  \y,\z\in\X\}_{\theta\in\Theta}$ and {\em univariate innovations} $\{\mathcal{I}_{\xi_{\theta}^n}^{h_{\theta}}(\cdot;\y):
  \y\in\X\}_{\theta\in\Theta}$ are defined as the signed Borel 
  measures 
\begin{align}
\label{e:Innovations}
\mathcal{I}_{\xi_{\theta}^n}^{h_{\theta}}(A;\z,\y)
=&
\sum_{(x_1,\ldots,x_n)\in\z_{\neq}^n\cap A}
h_{\theta}(x_1,\ldots,x_n;\y\setminus\{x_1,\ldots,x_n\})
\\
&-
\int_{A}
h_{\theta}(u_1,\ldots,u_n;\y)
\xi_{\theta}^n(u_1,\ldots,u_n;\y)
\de u_1\cdots\de u_n,
&
\nonumber
\\
\mathcal{I}_{\xi_{\theta}^n}^{h_{\theta}}(A;\y)
=&
\mathcal{I}_{\xi_{\theta}^n}^{h_{\theta}}(A;\y,\y),
&
A\subseteq S^n,
\nonumber
\end{align}
where
$\z_{\neq}^n=\{(x_1,\ldots,x_n)\in\z^n:x_i\neq x_j\text{ if }i\neq
j\}$ and 
$h_{\theta}(x_1,\ldots,x_n;\y\setminus\{x_1,\ldots,x_n\})=h_{\theta}(x_1,\ldots,x_n;\y)$
if $\y\cap\z=\emptyset$. 
In particular, when $\Xi_{\Theta}^n$ and $\mathcal{H}_{\Theta}$ are of the form
\eqref{e:GeneralEstimatorNoPP} then 
$\mathcal{I}_{\xi_{\theta}^n}^{h_{\theta}}(\cdot;\z,\y)=\mathcal{I}_{\xi_{\theta}^n}^{h_{\theta}}(\cdot;\z)$ for any $\y,\z\in\X$.

\end{definition}

We first note that due to the assumed measurability of all involved quantities, each innovation has the following property: 
for fixed $\y,\z\in\X$ it is a (signed Borel) measure on $S^n$, and
for a fixed $A$ it is a measurable function of $\y\in\X$ and
$\z\in\X$. This is often referred to as being a kernel
\citep{kallenberg2017random}.

\begin{remark}
  At times one has to require that $\y\in\X$ in \eqref{e:Innovations}
  is contained in some (possibly) bounded $W\subseteq S$. This
  restriction can be included by replacing the test function by either
  $h_{\theta}(u_1,\ldots,u_n;W\cap\y\setminus\{u_1,\ldots,u_n\})$ or
  $\1\{\y\in
  W\}h_{\theta}(u_1,\ldots,u_n;\y\setminus\{u_1,\ldots,u_n\})$.
\end{remark}

Turning to the heuristics, for two point processes $Z$ and $Y$, the random
signed measure $\mathcal{I}_{\xi_{\theta}^n}^{h_{\theta}}(A;Z,Y)$ is
an empirical measure of how well $Y$ predicts distinct $n$-tuples of
$Z$ in $A$ via $\xi_{\theta}^n$ and $h_{\theta}$; the test function
$h_{\theta}$ weights the associated contributions of $n$-subsets of
distinct points, $\xi_{\theta}^n$ 
is intended to
describe the distributional properties of the superposition $Z\cup Y$
and $\mathcal{I}_{\xi_{\theta}^n}^{h_{\theta}}(A;Z,Y)$ estimates
how well the specific choice $\theta\in\Theta$ does in predicting $n$-tuples of $Z$ from $Y$.

The name {\em innovation} has been chosen to be in keeping with
\citet{baddeley2005residual,baddeley2008properties} and the related
estimating equation approaches considered in the literature (see
\citet{moller2017some,coeurjolly2019understanding} and the references
therein). 
\citet{baddeley2005residual,baddeley2008properties} used the term
innovation for the univariate innovation $\mathcal{I}_{\lambda_{\theta}}^{h_{\theta}}(A;X)$,
where $X$ is some point process and
$\xi_{\theta}^1(\cdot;\cdot)=\lambda_{\theta}(\cdot;\cdot)$ belongs to
a parametric family of Papangelou conditional intensity functions; here, it is natural to refer to such innovations as {\em classical} innovations. 
From the heuristics above,
conceptually we may view
$\mathcal{I}_{\xi_{\theta}^n}^{h_{\theta}}(W;X)$ as a
measure of how well we can 
predict $n$ distinct hold-out-points of $X$, by means of the remaining points of $X$. Moreover, instead of estimation, they considered point process residuals, $R_{\widehat\theta}(W)=\mathcal{I}_{\xi_{\widehat\theta}^n}^{h_{\widehat\theta}}(W;X)$, obtained by plugging a separately generated estimate $\widehat\theta\in\Theta$ into the innovation.

The next straightforward result, which is proved in Section
\ref{s:Proofs}, indicates that univariate innovations make sense as loss functions/estimating
equations. Also, given the setting of Lemma \ref{lemma:InnovationUnbiased} below, in the case of Papangelou conditional intensities, expressions
for the variance (when $n=1$) can be found in
\citet{baddeley2008properties} and \citet[Lemma 15.5.III]{DVJ2}, and covariance expressions can be found in \citet{coeurjolly2013fast}.

\begin{lemma}\label{lemma:InnovationUnbiased}
  If $X$ has $n$th-order Papangelou conditional intensity
  $\lambda_{\theta_0}^{(n)}\in\Xi_{\Theta}^n$, $n\geq1$, where
  $\Xi_{\Theta}^n$ is of the form \eqref{e:GeneralEstimator}, then
  $\E[\mathcal{I}_{\lambda_{\theta_0}}^{h}(W;X)]=0$ for any test
  function $h:S^n\times\X\to\R$. If $X$ has $n$th-order product
  density $\rho_{\theta_0}^{(n)}\in\Xi_{\Theta}^n$, $n\geq1$, where
  $\Xi_{\Theta}^n$ is of the form \eqref{e:GeneralEstimatorNoPP}, then
  $\E[\mathcal{I}_{\rho_{\theta_0}^{(n)}}^{h}(W;X)]=0$ for any test
  function satisfying $h(\cdot;\y)=h(\cdot)$ for any $\y\in\X$.
\end{lemma}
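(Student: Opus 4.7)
The plan is to derive both claims as essentially one-line consequences of the GNZ formula \eqref{eq:GNZ} and the Campbell formula \eqref{eq:Campbell}, respectively, applied to the sum appearing in the definition of $\mathcal{I}_{\xi_{\theta_0}^n}^{h}(W;X)$.

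For the first claim, I set $\xi_{\theta_0}^n = \lambda_{\theta_0}^{(n)}$ in \eqref{e:Innovations} with $\y=\z=X$ and $A=W$. Taking expectations of the two terms in $\mathcal{I}_{\lambda_{\theta_0}}^{h}(W;X)$ separately, the GNZ formula \eqref{eq:GNZ} applied to the function $(u_1,\ldots,u_n,\y)\mapsto \mathbf{1}\{(u_1,\ldots,u_n)\in W\}\,h(u_1,\ldots,u_n;\y)$ yields
\begin{align*}
\E\!\left[\mathop{\sum\nolimits\sp{\ne}}_{x_1,\ldots,x_n\in X\cap W}
h(x_1,\ldots,x_n; X\setminus\{x_1,\ldots,x_n\})\right]
= \int_{W} \E\!\left[h(u_1,\ldots,u_n;X)\lambda_{\theta_0}^{(n)}(u_1,\ldots,u_n;X)\right]\de u_1\cdots\de u_n,
\end{align*}
which, by Fubini's theorem (justified by the integrability/non-negativity assumption on $h$ together with the implicit local integrability of the Papangelou conditional intensity), is exactly the expectation of the integral term in $\mathcal{I}_{\lambda_{\theta_0}}^{h}(W;X)$. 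Subtracting the two yields zero.

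For the second claim, we are now in the setting \eqref{e:GeneralEstimatorNoPP}: both $\xi_{\theta_0}^n = \rho_{\theta_0}^{(n)}$ and $h$ are constant in their $\y$-argument, so the univariate innovation simplifies to
\begin{align*}
\mathcal{I}_{\rho_{\theta_0}^{(n)}}^{h}(W;X)
=\mathop{\sum\nolimits\sp{\ne}}_{x_1,\ldots,x_n\in X\cap W} h(x_1,\ldots,x_n)
-\int_W h(u_1,\ldots,u_n)\rho_{\theta_0}^{(n)}(u_1,\ldots,u_n)\de u_1\cdots\de u_n.
\end{align*}
Taking the expectation, the Campbell formula \eqref{eq:Campbell} applied to $(u_1,\ldots,u_n)\mapsto \mathbf{1}\{(u_1,\ldots,u_n)\in W\}\,h(u_1,\ldots,u_n)$ turns the expected sum into exactly the integral term, yielding zero. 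The second (deterministic) term requires no expectation.

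The only point requiring minor care is the regularity needed to invoke Fubini and GNZ/Campbell simultaneously; this is handled by the standing assumption in Definition~\ref{def:Innovations} that for every fixed $\y$ the estimator family member and the test function are non-negative or integrable, which by a standard positive/negative part decomposition reduces both claims to the non-negative case where the identities above are unconditional. There is no real obstacle here: the lemma is essentially a reformulation of GNZ/Campbell saying that the ``sum minus integral'' formulation of the innovation is centred precisely when the integrand matches the true characteristic of $X$.
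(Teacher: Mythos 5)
Your proof is correct and follows exactly the route the paper takes: the paper's own proof is the one-line remark that the lemma is an immediate consequence of the GNZ formula \eqref{eq:GNZ} and the Campbell formula \eqref{eq:Campbell}, which is precisely what you spell out term by term. The only (cosmetic) quibble is that the domain of integration should be written $W^n$ rather than $W$ when $n>1$, matching the fact that the innovation is a measure on $S^n$.
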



One approach to using univariate innovations as loss functions for the estimation of the underlying parameter $\theta_0$ is to find a minimiser
$\widehat\theta=\widehat\theta_W(\x)$ of
\[
\mathcal{L}(\theta;\x)
=
\mathcal{I}_{\xi_{\theta}^n}^{h_{\theta}}(W;\x)^2,
\quad \theta\in\Theta.
\]
A particularly interesting choice for $\mathcal{H}_{\Theta}$ is given by
$h_{\theta}(\cdot;\y)=f(\xi_{\theta}^n(\cdot;\y))$, for some suitable function $f:\R\to\R$. 
In Section \ref{s:UnivariateInnovations} 
we look closer at how 
univariate innovations 
summarise many (if not most) existing statistical inference approaches for point processes. In particular, we highlight (non-)parametric intensity estimation, Papangelou conditional intensity fitting and $K$-function based minimum contrast estimation. Consequently, we also indicate different test function choices; these have been summarised in Section \ref{s:TestFunctions}.
The main aim with this exposition is to highlight various contexts in which bivariate innovations may be used for prediction-based inference.

\subsection{Properties of bivariate innovations}


In Theorem
\ref{thm:InnovationsMean} below, which is proved in Section
\ref{s:Proofs}, we derive expressions for the expectation and the variance of 
a bivariate innovation, together with necessary and
sufficient condition to ensure that the expectation is null.



\begin{thm}\label{thm:InnovationsMean}
Given a point process $X$ in $S$, let $Z$ be an arbitrary thinning of $X$, $Y=X\setminus Z$, and 
    $\breve{X}$ the associated bivariate 
    point process representation in Definition~\ref{def:ZuYmpp}. 
    Consider further some fixed $n\geq1$, and let $\Xi_{\Theta}^n=\{\xi\}$ and
  $\mathcal{H}_{\Theta}=\{h\}$ consist of one element each.

    When $\xi,h:S^n\to\R$ are of the form \eqref{e:GeneralEstimatorNoPP}, the univariate innovation  $\mathcal{I}_{\xi}^{h}(\cdot;Z,Y)=\mathcal{I}_{\xi}^{h}(\cdot;Z)$ satisfies 
    \begin{align}
    \label{e:ExpectationInnovationProdDens}
    \E[\mathcal{I}_{\xi}^{h}(A;Z)]
    =&
     \int_A
     h(u_1,\ldots,u_n)
     \left(
     \rho_{Z}^{(n)}(u_1,\ldots,u_n)
     -
     \xi(u_1,\ldots,u_n)
     \right)
     \de u_1\cdots\de u_n
     ,
     \\
\Var(\mathcal{I}_{\xi}^{h}(A;Z))
=&
\sum_{j=0}^n
j!
\binom{n}{j}^2
\int_{S^{2n-j}}
h(u_1,\ldots,u_n)
h(u_1,\ldots,u_j,u_{n+1},\ldots,u_{2n-j})
\times
\nonumber
\\
&\times
\1\{(u_1,\ldots,u_n)\in A\}
\1\{(u_1,\ldots,u_j,u_{n+1},\ldots,u_{2n-j})\in A\}
\times
\nonumber
\\
&\times
\rho_Z^{(2n-j)}(u_1,\ldots,u_{2n-j})
\de u_1\cdots\de u_{2n-j}
\nonumber
\\
&- \left(\int_A
     h(u_1,\ldots,u_n)
     \rho_{Z}^{(n)}(u_1,\ldots,u_n)
     \de u_1\cdots\de u_n\right)^2
,
\nonumber
\end{align}
for any $A\subseteq S^n$, 
where $\rho^{(n)}_Z(\cdot)$, $n\geq1$, denote the product densities of $Z$; here we have that 
$j=0$ yields that $\{u_1,\ldots,u_j,u_{n+1},\ldots,u_{2n-j}\}=\{u_{n+1},\ldots,u_{2n}\}$ and $j=n$ yields that  $\{u_1,\ldots,u_j,u_{n+1},\ldots,u_{2n-j}\}=\{u_1,\ldots,u_n\}$.
    Moreover, the expectation in  \eqref{e:ExpectationInnovationProdDens} is 0 for any $A\subseteq S^n$ and any test function $h$ of the form \eqref{e:GeneralEstimatorNoPP} if and only if 
\begin{align}
  \label{e:ConditionProdDens}
  \xi(u_1,\ldots,u_n)
  \stackrel{a.e.}{=}&
  \rho_Z^{(n)}(u_1,\ldots,u_n).
\end{align}

    If, instead, $\xi,h:S^n\times\X\to\R$ are of the form \eqref{e:GeneralEstimator}, 
    when $\breve{X}$ admits an 
    $n$th-order Papangelou conditional intensity 
    $\breve\lambda^{(n)}(\cdot;\breve{X})$ then, for any $A\subseteq S^n$, 
    the bivariate innovation $\mathcal{I}_{\xi}^{h}(\cdot;Z,Y)$ satisfies 
    \begin{align}
    \label{e:ExpectationInnovation}
    &\E[\mathcal{I}_{\xi}^{h}(A;Z,Y)] 
      = \\
    =&
       \int_A
       \E\left[
       h(u_1,\ldots,u_n;Y)
       \left(
       \breve\lambda_1^{(n)}(u_1,\ldots,u_n;\breve X)
       -
       \xi(u_1,\ldots,u_n;Y)
       \right)
       \right]
       \de u_1\cdots\de u_n
       ,
       \notag
  \end{align}
  where 
  $$
  \breve\lambda_1^{(n)}(u;\breve{X})
  =
  \breve\lambda^{(n)}((u_1,1),\ldots,(u_n,1);\breve{X}),
  \quad 
  u=(u_1,\ldots,u_n)\in S^n, n\geq1,
  $$
  and 
  \begin{align*}
  &\E[\mathcal{I}_{\xi}^{h}(A;Z,Y)^2]
    =
  \\
  =& 
    \sum_{j=0}^n
    j! \binom{n}{j}^2
    \int_{S^{2n-j}}
    \1\{(u_1,\ldots,u_n),(u_1,\ldots,u_j,u_{n+1},\ldots,u_{2n-j})\in A\}
    \E\Big[
    h(u_1,\ldots,u_n;Y)
  \\
  &\times
    h(u_1,\ldots,u_j,u_{n+1},\ldots, u_{2n-j};Y)
    \breve\lambda_1^{(2n-j)}(u_1,\ldots, u_{2n-j};\breve{X})
    \Big]
    \de u_1\cdots \de u_{2n-j} 
  \\
  &+ 
     \int_{S^n}\int_{S^n}
     \1\{u,v\in A\}
     \E\left[
     h(u;Y)
     h(v;Y)
     \xi(u;Y)
     \xi(v;Y)
     \right]
     \de u
     \de v
    \\
    &- 2
    \int_{S^n}\int_{S^n}
     \1\{u,v\in A\}
     \E\left[
     h(u;Y)
     h(v;Y\cup\{u\})
    \xi(v;Y\cup\{u\})
    \breve\lambda_1^{(n)}(u;\breve{X})
    \right]
    \de u
    \de v
    .
    \end{align*}
    Assume further that 
    $\E[\breve\lambda_1^{(n)}(u_1,\ldots,u_n;\breve{X})^2]<\infty$ 
    for $|\cdot|^n$-almost any $(u_1,\ldots,u_n)\in S^n$. 
    Then, for any $A\subseteq S^n$ and any test function $h$ such that 
    $\E[h(u_1,\ldots,u_n;Y)^2]<\infty$ we have that 
    $\E[\mathcal{I}_{\xi}^{h}(A;Z,Y)]=0$ 
    if and only if
    \begin{equation}
    \label{e:ConditionPapangelou}
    \xi(u_1,\ldots,u_n;Y)
    \stackrel{a.e.}{=}
    \left.\E\left[\breve\lambda_1^{(n)}(u_1,\ldots,u_n;\breve{X})\right|Y\right]
    .
    \end{equation}

\end{thm}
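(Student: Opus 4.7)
The plan is to treat the two cases separately, using Campbell's formula \eqref{eq:Campbell} for the case without $\y$-dependence and the GNZ formula \eqref{eq:GNZ} applied to the marked process $\breve{X}$ from Definition~\ref{def:ZuYmpp} for the case with $\y$-dependence. Throughout I denote the two components of the innovation by $S = \sum_{(x_1,\ldots,x_n)\in Z_\neq^n\cap A} h(x_1,\ldots,x_n;Y)$ and $T = \int_A h(u;Y)\xi(u;Y)\de u$, where the $Y$-dependence is dropped in the first case.

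In the first case ($\xi,h$ of form \eqref{e:GeneralEstimatorNoPP}), the integral $T$ is deterministic, so Campbell applied to $Z$ with $\rho_Z^{(n)}$ directly yields \eqref{e:ExpectationInnovationProdDens}. The ``only if'' in \eqref{e:ConditionProdDens} follows from a standard du~Bois--Reymond argument: the vanishing of $\int_A h(u)(\rho_Z^{(n)}-\xi)(u)\de u$ for all measurable $A$ and sufficiently rich $h$ forces $\rho_Z^{(n)}=\xi$ a.e. For the variance I would expand $\E[S^2]$ as a double sum over $Z_\neq^n\times Z_\neq^n$ and partition by $j\in\{0,\ldots,n\}$, the number of coincident indices between the two $n$-tuples. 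There are $\binom{n}{j}^2 j!$ ways to choose and pair up $j$ coinciding positions, and each resulting inner sum over distinct $(2n-j)$-tuples is handled by Campbell with $\rho_Z^{(2n-j)}$; subtracting $\E[S]^2$ yields the stated formula.

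In the second case ($\xi,h$ of form \eqref{e:GeneralEstimator}), the central observation is that $Z_\neq^n\cap A$ is in bijection with those $n$-tuples of $\breve{X}_\neq^n$ whose locations lie in $A$ and whose marks are all equal to $1$, while $Y$ coincides with the $0$-marked projection $\{x:(x,0)\in\breve{X}\}$, which is unchanged under removal of marked-$1$ points. I would therefore apply GNZ to $\breve{X}$ with test functions of the form $F((u_i,m_i)_i;\breve{\y})=G(u_1,\ldots,u_n;\{x:(x,0)\in\breve{\y}\})\prod_i \1\{m_i=1,u_i\in A\}$, which converts $\E[S]$ into $\int_A \E[h(u;Y)\breve\lambda_1^{(n)}(u;\breve{X})]\de u$; combined with Fubini applied to $\E[T]$ this yields \eqref{e:ExpectationInnovation}. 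For the second moment I decompose $\E[(S-T)^2]=\E[S^2]-2\E[ST]+\E[T^2]$: $\E[S^2]$ uses the same $j$-indexed partition as above but with GNZ applied to $\breve{X}$ via its $(2n-j)$th-order Papangelou conditional intensity $\breve\lambda_1^{(2n-j)}$; $\E[T^2]$ is immediate by Fubini; and $\E[ST]$ follows from a further application of GNZ to $\breve{X}$ with the $Y$-measurable factor $T$ carried through the test function.

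For the characterization \eqref{e:ConditionPapangelou}, ``if'' follows from \eqref{e:ExpectationInnovation} together with the tower property $\E[h(u;Y)\breve\lambda_1^{(n)}(u;\breve{X})]=\E[h(u;Y)\E[\breve\lambda_1^{(n)}(u;\breve{X})\mid Y]]$. For ``only if'', varying $A$ forces the integrand in \eqref{e:ExpectationInnovation} to vanish for a.e.~$u$, and the tower property then yields $\E[h(u;Y)(\E[\breve\lambda_1^{(n)}(u;\breve{X})\mid Y]-\xi(u;Y))]=0$ for every admissible $h$; choosing $h(u;Y)=\E[\breve\lambda_1^{(n)}(u;\breve{X})\mid Y]-\xi(u;Y)$, which is $Y$-measurable and square-integrable under the stated moment hypothesis, forces \eqref{e:ConditionPapangelou}. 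The main obstacle I anticipate is the combinatorial bookkeeping for $\E[S^2]$ in the second case: one must correctly derive the $\binom{n}{j}^2 j!$ factor by first choosing $j$ positions out of $n$ in each tuple and then matching them in $j!$ ways, before invoking GNZ at order $2n-j$ on $\breve{X}$. A secondary subtlety is ensuring in $\E[ST]$ that the $Y$-dependence of $h(v;Y)$ and $\xi(v;Y)$ is correctly carried through the GNZ reduction of the inner sum.
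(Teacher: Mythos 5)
Your proposal follows essentially the same route as the paper's proof: the Campbell formula plus the $j!\binom{n}{j}^2$ partition over coincident index positions for the unmarked case, the GNZ formula applied to the marked process $\breve{X}$ with mark-indicator test functions (so that only mark-$1$ tuples survive and $Y=\psi_0(\breve{X})$ is unchanged by removing them) for the bivariate case, the same three-term expansion $\E[S^2]-2\E[ST]+\E[T^2]$ of the second moment, and an $L_2$-orthogonality argument for the characterisation \eqref{e:ConditionPapangelou}. The one step you flag but leave open --- carrying the $Y$-dependence of $h(v;Y)\xi(v;Y)$ through the GNZ reduction of $\E[ST]$ --- is precisely where the paper's argument produces the arguments $Y\cup\{u_1,\ldots,u_n\}$ appearing in the cross term of the stated formula, so you should work that reduction out explicitly and verify it reproduces the claimed expression.
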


Theorem \ref{thm:InnovationsMean} is general in the sense that we have not imposed any specific conditions
on the dependence structure between the two point processes $Z$ and
$Y=X\setminus Z$. 
Things become explicit, and for our purposes particularly interesting, when we require that $Z$ is an independent thinning of some point process $X$, in particular a $p$-thinning of $X$. 
The result below is a direct consequence of combining Theorem \eqref{lemma:Thinning} with the conditions in \eqref{e:ConditionProdDens} and  \eqref{e:ConditionPapangelou}. 

\begin{cor}
Assume the setting in Theorem \eqref{thm:InnovationsMean}. When $Z$ is an independent thinning of a point process $X$ in $S$, based on some retention probability function $p(u)\in(0,1)$, $u\in S$, it follows that the conditions in \eqref{e:ConditionProdDens} and  \eqref{e:ConditionPapangelou} translate to 
\begin{align}
\label{e:WeightsProdDens}
\xi(u_1,\ldots,u_n)
\stackrel{a.e.}{=}&
\prod_{i=1}^n p(u_i)
\rho_{X}^{(n)}(u_1,\ldots,u_n)
,
\\
\label{e:WeightsPapangelou}
\xi(u_1,\ldots,u_n;Y)
\stackrel{a.e.}{=}&
\frac{\prod_{i=1}^n p(u_i)}{\prod_{i=1}^n (1-p(u_i))}\lambda_Y^{(n)}(u_1,\ldots,u_n;Y),
\end{align}
respectively. When $Z$ is a $p$-thinning with retention probability $p\in(0,1)$, we set $p(\cdot)\equiv p$ in \eqref{e:WeightsProdDens} and \eqref{e:WeightsPapangelou}. 
\end{cor}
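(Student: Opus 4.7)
The plan is to recognise that this corollary is a clean consequence of substituting the distributional identities from Theorem \ref{lemma:Thinning} into the abstract equality conditions \eqref{e:ConditionProdDens} and \eqref{e:ConditionPapangelou} obtained in Theorem \ref{thm:InnovationsMean}. In effect, Theorem \ref{thm:InnovationsMean} characterises the $\xi$'s that annihilate the expected bivariate innovation in terms of point process characteristics of $Z$ (respectively of the bivariate marked process $\breve X$), while Theorem \ref{lemma:Thinning} gives us explicit expressions for those very characteristics when $Z$ is obtained from $X$ by independent thinning with retention probability function $p(\cdot)$. Chaining the two together yields the stated reformulations.

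First I would treat the product density side. Condition \eqref{e:ConditionProdDens} reads $\xi(u_1,\ldots,u_n)=\rho_Z^{(n)}(u_1,\ldots,u_n)$ almost everywhere. Equation \eqref{e:ThinningProdDens} of Theorem \ref{lemma:Thinning} tells us that under independent thinning,
\[
\rho_Z^{(n)}(u_1,\ldots,u_n)
=
\prod_{i=1}^{n} p(u_i)\,\rho_X^{(n)}(u_1,\ldots,u_n),
\]
and substituting immediately produces \eqref{e:WeightsProdDens}. The $p$-thinning version follows by setting $p(\cdot)\equiv p$.

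Next I would address the Papangelou side. Condition \eqref{e:ConditionPapangelou} states $\xi(u_1,\ldots,u_n;Y)\stackrel{a.e.}{=}\E[\breve\lambda_1^{(n)}(u_1,\ldots,u_n;\breve X)\mid Y]$, where by construction in Theorem \ref{thm:InnovationsMean} one has $\breve\lambda_1^{(n)}(u_1,\ldots,u_n;\breve X)=\breve\lambda^{(n)}((u_1,1),\ldots,(u_n,1);\breve X)$. The last identity of Theorem \ref{lemma:Thinning} gives, for almost every $(u_1,\ldots,u_n)\in S^n$,
\[
\E[\breve\lambda^{(n)}((u_1,1),\ldots,(u_n,1);\breve X)\mid Y]
=
\frac{\prod_{i=1}^{n}p(u_i)}{\prod_{i=1}^{n}(1-p(u_i))}\,
\lambda_Y^{(n)}(u_1,\ldots,u_n;Y),
\]
and plugging this into \eqref{e:ConditionPapangelou} yields \eqref{e:WeightsPapangelou}. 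Again, specialising to a constant retention probability gives the $p$-thinning statement.

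I do not expect a genuine obstacle here: the only subtleties are bookkeeping ones, namely matching the ``mark $1$'' notation of $\breve\lambda_1^{(n)}$ with the Papangelou intensity of the bivariate marked process $\breve X$ in Theorem \ref{lemma:Thinning}, and noting that ``almost everywhere'' in the two theorems refers to the same reference measure $|\cdot|^n$ on $S^n$ so that the equalities transfer without loss. The assumption $p(u)\in(0,1)$ is needed to keep the ratio $\prod p(u_i)/\prod(1-p(u_i))$ finite and to ensure both $Z$ and $Y$ are genuine (non-degenerate) thinnings, so that the Papangelou intensity of $Y$ and the conditional expectation given $Y$ in \eqref{e:WeightsPapangelou} are both well defined.
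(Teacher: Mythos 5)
Your proposal is correct and matches the paper's own reasoning: the authors explicitly state that the corollary is "a direct consequence of combining Theorem \ref{lemma:Thinning} with the conditions in \eqref{e:ConditionProdDens} and \eqref{e:ConditionPapangelou}", which is precisely the substitution of \eqref{e:ThinningProdDens} and the conditional-expectation identity for $\breve\lambda^{(n)}$ into the two equality conditions that you carry out. Your additional remarks on matching the mark-$1$ notation and on why $p(u)\in(0,1)$ is needed are sound and only make explicit what the paper leaves implicit.
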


These observations will play a crucial role in the development of our statistical learning approach.

\section{Cross-validation and 
point process 
learning}
\label{s:Learning}

Having introduced the first building block of our statistical learning approach, namely the bivariate innovations, we next turn to the second building block, which is the notion of cross-validation for point processes. Once we have these two tools in hand, we combine them to define our (supervised) statistical learning/estimation approach.

\subsection{Cross-validation}
\label{s:CVgeneral}

Broadly speaking, cross-validation (CV) refers to a family of
techniques which are based on the idea of using one set of data to
test a model's predictive performance with respect to 
additional/new/incoming data  \citep{arlot2010survey}. In addition to testing a model's
ability to predict new data, objectives of CV include avoiding
overfitting and balancing bias and variance.

CV is essentially carried out by splitting/partitioning the full
dataset into a {\em training} dataset, on which the model is
fitted/trained, and a {\em validation} dataset, on which the
performance of the model is validated/evaluated/measured. Commonly,
this procedure is repeated a number of times, according to some
scheme/structure, yielding $k\geq1$ pairs
\begin{align*}
(\x_i^T,\x_i^V),
\quad
\x_i^V=\x\setminus\x_i^T,
\quad
i=1,\ldots,k, 
\end{align*}
of training and validation datasets; here,
$\x=\{x_1,\ldots,x_n\}\subseteq W\subseteq S$ is our sampled/observed
point pattern. For point processes, this may be formalised using thinning.

\begin{definition}
Given $k\geq1$ independently generated thinnings $Z_1,\ldots,Z_k$ of a
point process $X\subseteq S$, we refer to the collection of pairs $(X_i^T,X_i^V)=(Y_i,Z_i)$, $Y_i=X\setminus Z_i$, $i=1,\ldots,k$, as a cross-validation (CV) splitting/partitioning.

\end{definition}

Identically, given $k\geq1$ thinnings $\z_1,\ldots,\z_k$ of a point pattern $\x\subseteq S$, we will refer to the collection  $(\x_i^T,\x_i^V)=(\y_i,\z_i)$, $\y_i=\x\setminus\z_i$, $i=1,\ldots,k$, as a cross-validation (CV) splitting/partitioning.

\begin{remark}
If we have access to $k$ independent copies $X_1,\ldots,X_k$ of a point process $X$ (such repeated sampling is quite uncommon in practice), then we may naturally either let each $X_i^V$ be a thinning of $X_i$, $i=1,\ldots,k$, or ii) consider some $k'\geq1$ CV-splittings for each $X_i$, whereby we have $k$ different CV rounds and $kk'$ training-validation pairs.  
\end{remark}

Different CV procedures essentially provide different
ways of creating these pairs, i.e.~splitting the point process/pattern.  
Looking at our general definition for CV above, it immediately becomes clear that we essentially may consider an infinite number of ways to carry out CV-partitioning. We next list a couple of approaches commonly encountered in the literature:
\begin{itemize}
    \item {\em Classical $k$-fold CV:} Split $\x$ into $k$ folds/pieces of fixed equal/similar cardinality and, in each round $i=1,\ldots,k$, the $i$th fold plays the role of $\x_i^V$, while the union of the remaining $k-1$ folds plays the role of $\x_i^T$. 
    
    This sequential algorithm does not result in independent thinning since the assignment of a given point of $\x$ to a given fold depends on how points have previously been assigned to folds; a given fold runs full once it has a given number of points in it.
    
    \item {\em Leave-one-out CV:} This is just classical $k$-fold CV with $k=\#\x$, so that $\x_i^V=\{x_i\}$ and $\x_i^T=\x\setminus\{x_i\}$, $i=1,\ldots,\#\x$. 
    
    It has been argued that leave-one-out CV ensures a lower bias but a larger variance than classical $k$-fold CV with $k<\#\x$. 
    
\end{itemize}

Dependent thinnings are 
hard to work with since, for arbitrary point processes, it is generally hard to derive distributional properties for them -- we essentially have no control over the 
dependence structures between the training and validation sets. For this reason we will not look closer at classical $k$-fold, or leave-one-out CV for that matter. 
In Section \ref{s:Discussion} we discuss a couple of additional CV approaches. 
One of them, which we refer to as domain partitioning CV, is both quite natural and appealing. We do, however, show that this approach cannot be properly combined with our innovation-based statistical learning approach.

We argue that CV procedures for point processes should be based on independent thinning, where the retention probability function is bounded away from 0 and 1. 
The main argument is
that when we apply
such thinning, then, as we saw in Theorem \ref{lemma:Thinning}, we have control over distributional 
properties of most characteristic of interest, e.g.~product densities.


We further see no particular reason for choosing a specific form for the retention probability function used and, consequently, we impose the stronger argument that CV procedures for point processes should be based on $p$-thinning.

\subsubsection{CV based on {\em p}-thinning}
\label{s:PP CV}
We next propose two CV procedures, where the former essentially is
what in the literature is referred to as Monte-Carlo CV, or repeated random
sub-sampling validation, and the latter
is a variant of classical $k$-fold CV.

\begin{definition}[Monte-Carlo CV]
\label{def:MCCV}
Given $k\geq1$ $p$-thinnings $\z_1,\ldots,\z_k$, $p\in(0,1)$, of a
point pattern $\x$, we define {\em Monte-Carlo CV (MCCV)} as
generating the validation and training data splittings as 
$\x_i^V=\z_i$ and $\x_i^T=\x\setminus\z_i$, $i=1,\ldots,k$.
\end{definition}

Note that we in the case of MCCV may have that $\x_i^V\cap \x_j^V \neq \emptyset$,
$i\neq j$. One particularly appealing property of MCCV is that for any point pattern $\x$, by the law of large numbers and the central limit theorem, for a suitable function $f$ on $\X$, the mean $\frac{1}{k}\sum_{i=1}^k f(\x_i^V)$ converges a.s.~to
$\E[f(\x_1^V)]$ and weakly to a Gaussian random variable. 
Note further
that when we consider MCCV with $p\approx0$ we obtain something similar to classical leave-one-out CV, where  
the advantage of the former over the latter clearly is that we have theoretical control over things such as moments
characteristics; recall that Theorem \ref{lemma:Thinning} indicates different
distributional properties of $X_i^T$ and $X_i^V$.

As an alternative, where the training sets are not allowed to overlap,
we also propose a variant of classical $k$-fold CV. It should be noted that for large datasets the two types of $k$-fold CV should yield very similar results.

\begin{definition}[Multinomial CV]\label{def:kFoldMCCV}
Given some $k\geq2$, randomly label the point pattern $\x$ with iid marks $m(x)\in\{1,\ldots,k\}$, $x\in\x$, from a multinomial distribution with parameters $k$ and $p_1=\ldots=p_k=1/k$. We define {\em ($k$-fold) multinomial CV} as generating the validation and training data splittings as $\x_i^V=\{x\in\x:m(x)=i\}$ and $\x_i^T=\x\setminus\x_i^V$, $i=1,\ldots,k$.
  %
\end{definition}

First note that the main difference between this approach and MCCV is
that here $\x_i^V\cap \x_j^V = \emptyset$, $i\neq j$. In addition, each validation set is a $p$-thinning with retention probability $1/k$ and each training set is a $p$-thinning with retention probability $1-1/k=(k-1)/k$, whereby various distributional properties are known, e.g., the product densities of $X_i^V$ are given by $\rho_p^{(n)}(\cdot)=\rho^{(n)}(\cdot)/k^n$,
$n\geq1$; recall Theorem \ref{lemma:Thinning}.

We further note that an alternative (algorithmic) construction of multinomial CV is obtained by letting $\x_0^V=\emptyset$ and,  sequentially, letting $\x_i^V$ be a $p$-thinning of $\x\setminus\bigcup_{j=0}^{i-1}\x_j^V$ with retention probability $p_i=1/(k-(i-1))$ and $\x_i^T=\x\setminus\x_i^V$, $i=1,\ldots,k$.

Comparing the two approaches, aside from the validation sets not
overlapping in the multinomial CV approach, an upside to multinomial CV is that it only requires the specification of $k$, as opposed to the
pair $(p,k)$ in the MCCV case. Moreover, since we only have to deal with a
total of $k$ training-validation pairs in multinomial CV, it may be viewed as a computationally efficient version of MCCV. On the other
hand, one may argue that a drawback of multinomial MCCV is that we
(subjectively) should choose some fixed value $k$, where the "optimal"
choice very well may depend on e.g.~the samples size and/or the degree
of dependence in the underlying point process. Also, it may be that multinomial MCCV with $k=k'\geq2$ has worse statistical properties (bias/variance) than MCCV with $p=1/k'$ and $k$ (much) larger than $k'$. Moreover, in the case of MCCV, although $p$ is being kept fixed, $k$ is allowed to
be sequentially increased, which should result in sequentially
increased performance.

Concerning classical $k$-fold CV in the context of regression
analysis, the proper value for $k$ has been debated in the literature
and it has been argued that it should be chosen based on the sample
size $n$; e.g., \citet{james2013introduction} suggest that $k$ should
be chosen to be between 5 and 10. Finally, (for large samples) the point counts of the folds in classical $k$-fold CV and $k$-fold multinomial MCCV are approximately the same.

\subsubsection{Additional partitioning}
\label{s:TestSet}

Certain situations call for an additional layer of CV, so that we have triples  $(\x_i^T,\x_i^V,\x_i^E)$, $i=1,\ldots,k$,
where $\x_i^E=\x\setminus(\x_i^V\cup\x_i^T)$ is used for
one additional operation of evaluation of the fit; as we will see,
$\x_i^V$ is mostly used in the actual fitting process.  E.g., a
test set $\x_i^E$ is a subset of the data which is taken out before
the training-validation splitting takes place, and it is used to
evaluate the goodness of fit of the final fitted model by evaluating
how well the model predicts the test data. 
If warranted, we may naturally let $\x_i^E=\x_j^E$ for all $i,j=1,\ldots,k$, and if this is not the case we may generate $\x_i^E$, $i=1,\ldots,k$, as follows: apply some CV approach to $\x$ and denote the corresponding validation sets by $\x_i^E$, $i=1,\ldots,k$; then use another round of CV to generate $(\x_i^T,\x_i^V)$ from $\x\setminus\x_i^E$, $i=1,\ldots,k$. A possible extension here is to generate a CV-partitioning  $\{(\x_{ij}^T,\x_{ij}^V)\}_{j=1}^k$ from each $\x\setminus\x_i^E$,  $i=1,\ldots,k$. 
When we consider $p$-thinning-based CV here, each of $\x_i^T$, $\x_i^V$ and $\x_i^E$ will be a $p$-thinning of $\x$, but each with a different retention probability.

\subsection{Point process learning}
\label{s:CV based bivariate innovation statistics}

The heuristic/philosophical argument behind the 
approach laid out below is that a good estimation approach should result in a model which
does well in predicting "new" (validation) data, given the "current" 
(training) data. As we shall see, this is indeed the case in most settings. 

Recall from Section \ref{s:generalised weigthed innovations} that we consider an observed point pattern $\x=\{x_1,\ldots,x_n\}\subseteq W$, where the study region/domain $W\subseteq S$ is mostly bounded, and we assume that $\x$ is 
generated by (the restriction to $W$ of) some unknown point process $X$. 
Further, consider training-validation pairs $(\x_i^T,\x_i^V)$,
$i=1,\ldots,k$, which have been generated in accordance with Section \ref{s:CVgeneral}.



Our (supervised) learning/estimation approach is based on the idea of finding a minimiser $\widehat\theta\in\Theta$ of
some loss function $\mathcal{L}(\theta)$, $\theta\in\Theta$, which is based on some combination of either of
\begin{align}
  \label{e:InnovationsCV}
  \mathcal{I}_{\xi_{\theta}^n}^{h_{\theta}}(W^n;\x_i^V,\x_i^T)
  =&
     \mathop{\sum\nolimits\sp{\ne}}_{x_1,\ldots,x_{n}\in\x_i^V\cap W}
     h_{\theta}(x_1,\ldots,x_{n};\x_i^T)
  \\
   &-
     \int_{W^n}
     h_{\theta}(u_1,\ldots,u_{n};\x_i^T)
     \xi_{\theta}^n(u_1,\ldots,u_{n};\x_i^T)
     \de u_1\cdots\de u_{n},\nonumber
     \\
       \label{e:InnovationsCVNoPP}
  \mathcal{I}_{\xi_{\theta}^n}^{h_{\theta}}(W^n;\x_i^T)
  =&
  \mathop{\sum\nolimits\sp{\ne}}_{x_1,\ldots,x_{n}\in\x_i^T\cap W}
     h_{\theta}(x_1,\ldots,x_{n})
  \\
  &-
     \int_{W^n}
     h_{\theta}(u_1,\ldots,u_{n})
     \xi_{\theta}^n(u_1,\ldots,u_{n})
     \de u_1\cdots\de u_{n},
     \qquad i=1,\ldots,k.
     \nonumber
\end{align}
We employ the bivariate innovations \eqref{e:InnovationsCV} if the parametrised estimator family
$\Xi_{\Theta}^n=\{\xi_{\theta}^n: \theta\in\Theta\}$ and test function
family $\mathcal{H}_{\Theta}=\{h_{\theta}:\theta\in\Theta\}$ are of the form \eqref{e:GeneralEstimator}, and the univariate innovations  \eqref{e:InnovationsCVNoPP} if $\Xi_{\Theta}^n$ and $\mathcal{H}_{\Theta}$ are of the form \eqref{e:GeneralEstimatorNoPP}.
The choice of general innovation family, i.e.~parametrised estimator family and test function
family, is both context dependent and the important item here; e.g., as we shall see, non-parametric intensity estimation will require a different form for $\Xi_{\Theta}^n$ than parametric intensity estimation. Once these choices have been made, we proceed by
making a choice for the loss function $\mathcal{L}(\theta)$, $\theta\in\Theta$, to be employed.

\begin{remark}
  It should be emphasised that we here may combine several different
  collections of innovations, where in each we use different test
  functions and possibly also different training-validation pair generation approaches. 
\end{remark}

We emphasise that the use of \eqref{e:InnovationsCVNoPP} in fact results in a {\em point process subsampling} approach, very much akin to the one proposed in \citet{Moradi2019}. More specifically, we do not make explicit use of $\x_i^V$, $i=1,\ldots,k$, which is different from CV-based approaches. 

\begin{remark}
Regarding subsampling, more generally, one could plug $\x_i^T$, $i=1,\ldots,k$, into a family of univariate innovations 
and combine the resulting  subsample-innovations into a loss function.
\end{remark}


\subsubsection{Loss functions and point process learning}
\label{s:FixedTestFunctionLoss}

When specifying a loss function, 
once we have made a choice for $\Xi_{\Theta}^n=\{\xi_{\theta}^n: \theta\in\Theta\}$, which governs what we are interested in fitting, 
there are a few choices left to be made: the test function family $\mathcal{H}_{\Theta}$ to be considered in the bivariate innovations, the way the bivariate innovations are combined to form the loss function and the CV parameters used; regarding the latter, clearly, different CV partitioning approaches may yield completely varying results.

Before we look closer at specific loss function choices, we note that if $\#\x_i^V< n$, or equivalently $\#\x_i^T = \#\x-\#\x_i^V > \#\x-n$, then the sum in \eqref{e:InnovationsCV}
will be 0 and, consequently, it may be the case that
$\mathcal{I}_{\xi_{\theta}^n}^{h_{\theta}}(W^n;\x_i^V,\x_i^T)$  is
given by
$-\int_{W^n} h_{\theta}(\mathbf{u};\x_i^T)
\xi_{\theta}^n(\mathbf{u};\x_i^T) \de\mathbf{u}$. 
Similarly, in certain cases it is problematic to have $\#\x_i^T = 0$ both in \eqref{e:InnovationsCV} and \eqref{e:InnovationsCVNoPP}, where in the former case this would mean that we would try to predict
$\x_i^V=\x$ using $\x_i^T=\emptyset$. In some cases this is not a problem (see e.g.~Section \ref{s:ConstantIntensity}) but in other cases this will be problematic.  
When we do not want such a pair
$(\x_i^T,\x_i^V)$ to contribute to the final loss function, we may multiply the corresponding innovation by an indicator function $I_i$, $i=1,\ldots,k$, for bounds on the cardinality of the training set, 
and let the loss function be given by a combination of the $k$ resulting terms. 
Hence, typically we would let 
\begin{align}
\label{e:IndicatorInnovation}
I_i
=
\left\{
\begin{array}{ll}
\1\{1\leq\#\x_i^T\} &
\text{if we consider \eqref{e:InnovationsCVNoPP} ($\Xi_{\Theta}^n$ and $\mathcal{H}_{\Theta}$ are of form \eqref{e:GeneralEstimatorNoPP})},
\\
\1\{1\leq\#\x_i^T\leq\#\x-n\} & 
\text{if we consider \eqref{e:InnovationsCV} ($\Xi_{\Theta}^n$ and $\mathcal{H}_{\Theta}$ are of form \eqref{e:GeneralEstimator})},
\end{array}
\right.
\end{align}
and let the loss function be given by a combination of the terms
$$
\widetilde{\mathcal{I}}_{\xi_{\theta}^n}^{h_{\theta}}(W^n;\x_i^T)
=
I_i
\mathcal{I}_{\xi_{\theta}^n}^{h_{\theta}}(W^n;\x_i^T)
\text{ or }
\widetilde{\mathcal{I}}_{\xi_{\theta}^n}^{h_{\theta}}(W^n;\x_i^V,\x_i^T)
=
I_i
\mathcal{I}_{\xi_{\theta}^n}^{h_{\theta}}(W^n;\x_i^V,\x_i^T)
,
\quad i=1,\ldots,k,
$$
where the former is used when we consider \eqref{e:InnovationsCVNoPP} and the latter is used when we consider \eqref{e:InnovationsCV}. 
Defining $\widetilde h_{\theta}(u_1,\ldots,u_{n};\x_i^T) 
= \1\{1\leq\#\x_i^T\leq\#\x-n\}h_{\theta}(u_1,\ldots,u_{n};\x_i^T)$, 
$i=1,\ldots,k$, note that the latter satisfies 
\begin{align*}
\widetilde{\mathcal{I}}_{\xi_{\theta}^n}^{h_{\theta}}(W^n;\x_i^V,\x_i^T)
=
\1\{1\leq\#\x_i^T\leq\#\x-n\}
\mathcal{I}_{\xi_{\theta}^n}^{h_{\theta}}(W^n;\x_i^V,\x_i^T)
=
\mathcal{I}_{\xi_{\theta}^n}^{\widetilde h_{\theta}}(W^n;\x_i^V,\x_i^T),
\end{align*}
i.e.~the indicator function may be absorbed into the test function and thereby into the innovation. 
When we consider \eqref{e:InnovationsCVNoPP} we see no reason for including the event $\{\#\x_i^T\leq\#\x-n\}$ in the indicator function, since we here essentially deal with a subsampling approach, rather than a prediction approach; if required, one may here naturally choose to multiply the innovation by  $I_i=\1\{1\leq\#\x_i^T\leq\#\x-n\}$ instead. 
Similarly, we do not want to rule out the possibility that there may be situations where one would want $I_i=\1\{1\leq\#\x_i^T\}$ when considering \eqref{e:InnovationsCV}. 
However, as we will see (e.g.~Section \ref{s:ConstantIntensity}), there are situations where the preferred choice is to set $I_i=1$ for all $i=1,\ldots,k$, i.e.~to not include indicator functions for the cardinalities of the training sets in the loss function. 

\begin{definition}
Consider the supervised statistical learning framework above, where the objective is to find a minimiser of a loss function 
\begin{align}
\label{e:GeneralCVLoss}
\mathcal{L}(\theta)
=
\mathcal{L}(\theta;\{(\x_i^T,\x_i^V)\}_{i=1}^k,n,p,k,\Xi_{\Theta}^n,\mathcal{H}_{\Theta}), 
\quad \theta\in\Theta,
\end{align}
which is generated by a combination of $\widetilde{\mathcal{I}}_{\xi_{\theta}^n}^{h_{\theta}}(W^n;\x_i^V,\x_i^T)$ or $\widetilde{\mathcal{I}}_{\xi_{\theta}^n}^{h_{\theta}}(W^n;\x_i^T)$, $i=1,\ldots,k$. We refer to this approach as {\em point process learning} and we refer to any minimiser $\widehat\theta$ of $\mathcal{L}(\theta)$, $\theta\in\Theta$, as a {\em 
point process-learned (PPL) estimate}. 
\end{definition}

Defining 
\[
\mathcal{T}_k=\{j\in\{1,\ldots,k\}:I_j=1\},
\]
where $\mathcal{T}_k=\{1,\ldots,k\}$ when we set $I_i=1$ for all $i=1,\ldots,k$, 
we here in particular see the following three loss function candidates as particularly interesting/natural:
\begin{align}
  \label{e:EstFunGeneralMedian}
  \mathcal{L}_1(\theta)
  =&
     \frac{1}{k}
     \sum_{i=1}^k
     |\widetilde{\mathcal{I}}_{\xi_{\theta}^n}^{h_{\theta}}(W^n;\x_i^V,\x_i^T)|
     \propto
     \frac{1}{\#\mathcal{T}_k}
     \sum_{i\in\mathcal{T}_k}
     |\mathcal{I}_{\xi_{\theta}^n}^{h_{\theta}}(W^n;\x_i^V,\x_i^T)|
     ,
  \\
  \label{e:EstFunGeneral}
  \mathcal{L}_2(\theta)
  =&
     \frac{1}{k}
     \sum_{i=1}^k
     \widetilde{\mathcal{I}}_{\xi_{\theta}^n}^{h_{\theta}}(W^n;\x_i^V,\x_i^T)^2
     \propto
     \frac{1}{\#\mathcal{T}_k}
     \sum_{i\in\mathcal{T}_k}
     \mathcal{I}_{\xi_{\theta}^n}^{h_{\theta}}(W^n;\x_i^V,\x_i^T)^2
     ,
  \\
  \label{e:EstFunGeneralMean}
  \mathcal{L}_3(\theta)
  =&
     \left(\frac{1}{k}
     \sum_{i=1}^k
     \widetilde{\mathcal{I}}_{\xi_{\theta}^n}^{h_{\theta}}(W^n;\x_i^V,\x_i^T)
     \right)^2
     \propto
     \left(\frac{1}{\#\mathcal{T}_k}
     \sum_{i\in\mathcal{T}_k}
     {\mathcal{I}}_{\xi_{\theta}^n}^{h_{\theta}}(W^n;\x_i^V,\x_i^T)
     \right)^2
     ,
\end{align}
where we replace $\widetilde{\mathcal{I}}_{\xi_{\theta}^n}^{h_{\theta}}(W^n;\x_i^V,\x_i^T)$ by $\widetilde{\mathcal{I}}_{\xi_{\theta}^n}^{h_{\theta}}(W^n;\x_i^T)$ when $\xi_{\theta}^n$ and $h_{\theta}$ are of the form
\eqref{e:GeneralEstimatorNoPP}. 
The loss function \eqref{e:EstFunGeneral} is probably the most natural
one and essentially corresponds to an $L_2$-loss. Similarly,
\eqref{e:EstFunGeneralMedian} is a robust version which essentially
corresponds to an $L_1$-loss.  Employing
\eqref{e:EstFunGeneralMean} on the other hand, means finding a parameter $\theta$ such
that the mean of all innovation terms is (close to) 0.  By Hölder's and
Jensen's inequalities,
$\mathcal{L}_1(\theta)^2\leq\mathcal{L}_2(\theta)\leq
k\mathcal{L}_3(\theta)$, and by setting $k=1$ and squaring
\eqref{e:EstFunGeneralMedian}, these three losses coincide.

As an alternative approach here, one may exploit
the empirical distribution 
of
\begin{align}
\label{e:IndividualEstimates}
\widehat\theta_i^{\mathcal{H}_{\Theta}}
=
\widehat\theta((\x_i^T,\x_i^V),n,p,W,\Xi_{\Theta}^n,\mathcal{H}_{\Theta})
\in\Theta,
\quad 
i\in\mathcal{T}_k,
\end{align}
which are minimisers of $\theta\mapsto\widetilde{\mathcal{I}}_{\xi_{\theta}^n}^{h_{\theta}}(W^n;\x_i^V,\x_i^T)^2$ or $\theta\mapsto\widetilde{\mathcal{I}}_{\xi_{\theta}^n}^{h_{\theta}}(W^n;\x_i^T)^2$,  $\theta\in\Theta$, $i\in\mathcal{T}_k$. 
The
corresponding sample median or mean,
\begin{align}
\label{e:IndividualEstimatesMedian}
    &\med\{\widehat\theta_i^{\mathcal{H}_{\Theta}}:i\in\mathcal{T}_k\}
    ,
    \\
    \label{e:IndividualEstimatesMean}
    &\frac{1}{\#\mathcal{T}_k}\sum_{i\in\mathcal{T}_k}\widehat\theta_i^{\mathcal{H}_{\Theta}},
\end{align}
may serve as alternatives to the estimates obtained through \eqref{e:GeneralCVLoss}. 

\begin{remark}
Empirical quantiles of \eqref{e:IndividualEstimates} 
may serve as
confidence/uncertainty regions for $\theta_0$ and $\widehat\xi^n=\frac{1}{\#\mathcal{T}_k}\sum_{i\in\mathcal{T}_k}\xi_{\widehat\theta_i}^n$ as a final point estimate; note the connections to the so-called resample-smoothing approach of \citet{Moradi2019}. 
A further loss function alternative is to minimise the maximum of the individual squared/absolute innovations. 
These alternatives are currently being explored in a parallel paper. 
\end{remark}

\subsubsection{Test function choices, hyperparameters and indentifiability}
\label{s:TestFunctions}

As can be seen in Section \ref{s:UnivariateInnovations}, the
literature offers a few suggestions on suitable test functions
$h_{\theta}$, $\theta\in\Theta$, to be employed.
Most notably, when $\xi_{\theta}^n$ is differentiable in $\theta$, in the univariate setting the
test function
$h_{\theta}(\cdot)=\partial\xi_{\theta}^n(\cdot)/\partial\theta=\nabla\xi_{\theta}^n(\cdot)$
turns $\theta\mapsto\mathcal{I}_{\xi_{\theta}^n}^{h_{\theta}}(\cdot)$
into a Poisson process
likelihood score-type function. A further group of candidates
which is encountered may be summarised as
$h_{\theta}(\cdot)=f(\xi_{\theta}^n(\cdot))$, where $f(x)=x^{\gamma}$, $\gamma\in\R$
\citep{baddeley2005residual,cronie2018bandwidth}. E.g., $\gamma=0$
corresponds to so-called raw innovations, $\gamma=-1/2$ corresponds to
so-called Pearson innovations and $\gamma=-1$ corresponds to so-called
Stoyan-Grabarnik/inverse innovations \citep{baddeley2005residual}.
The interesting thing with $\gamma=-1$ is that we obtain
\begin{align*}
  \mathcal{I}_{\xi_{\theta}^n}^{h_{\theta}}(A;\z,\y)
  =&
     \sum_{(x_1,\ldots,x_n)\in\z_{\neq}^n\cap A}
     h_{\theta}(x_1,\ldots,x_{n};\y)
     -
     \left|A\cap\supp(\xi_{\theta}^n(\cdot;\y))\right|,
\end{align*}
where the size of the support
$A\cap\supp(\xi_{\theta}^n(\cdot;\y))=\{\mathbf{u}\in
A:\xi_{\theta}^n(\mathbf{u};\y)>0\}\subseteq A$, which may vary depending on $\y$, is given by $|A|$ if
$\xi_{\theta}^n(\cdot;\y)$ is strictly positive. When 
$\xi_{\theta}^n(\cdot;\y)$ is not necessarily strictly positive, a
convenient approximation could be to simply replace the support by $A$ in
the expression above and proceed with the minimisation of the
corresponding loss function. This is convenient from a computational
point of view, since we can omit computing the (possibly hard to deal
with) integrals in the innovations.

Here, as a proof of concept, we have considered the scenario where we fix the test function family (and the CV parameters) a priori, whereby we proceed by specifying how to combine the innovations to obtain the loss function. 
Note, however, that the CV parameters as well as the test function family (or more generally the loss function) considered may each be viewed a hyperparameter. 
In Section \ref{s:Discussion} we discuss different strategies to finding "optimal" test functions and, more generally, hyperparameters.


A closely related and quite important issue, which we have avoided
mentioning up to this point, is identifiability. Ideally we should
have that
$\mathcal{L}(\theta_1)=\mathcal{L}(\theta_2)$
implies that $\theta_1=\theta_2$ for any $\theta_1,\theta_2\in\Theta$. We note e.g.~that both $x\mapsto|x|$ and $x\mapsto x^2$,
$x\in\R$, are convex and that sums of convex functions are convex so
if all 
the $k$ innovations 
are convex then so will the loss functions be and,
consequently, 
any local minimiser will be a global minimiser. 
Moreover, 
differentiation of the innovations  
may reveal whether convexity holds, but looking closer at the innovations we note that tractable derivatives may be a bit too much to hope for in many situations. Moreover, the fact that $f(x)=x^{-\gamma}$,
$x>0$, $\gamma>0$, is convex may possibly also be exploited to show
that the innovations are convex. We finally note that an innovation
does not necessarily attain the value 0 (see e.g.~the
"leave-one-out"-discussion in \citet{cronie2018bandwidth}), whereby it cannot be viewed as an estimating equation, and this makes the derivation of closed form estimators based on loss functions involving combinations of innovations hard.



\subsection{Point process learning with {\em p}-thinning-based CV}

Assuming that we use either MCCV or multinomial CV in our point process learning approach, we next look closer at how the different components of the bivariate
innovations
$\mathcal{I}_{\xi_{\theta}^n}^{h_{\theta}}(W^n;\x_i^V,\x_i^T)$,
$i=1,\ldots,k$, are specified in three of the most typical estimation
settings, namely parametric product density/intensity estimation, non-parametric product density/intensity estimation and Papangelou conditional intensity fitting; Section \ref{s:UnivariateInnovations} indicates how a few other common estimation approaches may be combined with our point process learning framework. Here we focus on the choice of $\Xi_{\Theta}^n=\{\xi_{\theta}^n: \theta\in\Theta\}$, since we view the choice of test function family as something related to the choice of loss function.

\subsubsection{Parametric product density/intensity estimation}
\label{s:ProductDensity}

When we carry out parametric $n$th-order product density estimation,
$\Xi_{\Theta}^n$ and
$\mathcal{H}_{\Theta}$ are of the form \eqref{e:GeneralEstimatorNoPP}, whereby we use the innovations in  \eqref{e:InnovationsCVNoPP} and set
$$
    \xi_{\theta}^n(u_1,\ldots,u_n)
    =
    w\rho_{\theta}^{(n)}(u_1,\ldots,u_n)
    =
    (1-p)^n\rho_{\theta}^{(n)}(u_1,\ldots,u_n)
    , 
    \quad u_1,\ldots,u_n\in W, 
    \quad \theta\in\Theta,
$$
for some parametric family $\rho_{\theta}^{(n)}$, $\theta\in\Theta$,
of $n$th-order product densities. 
Consequently,  
the innovations 
$\mathcal{I}_{\xi_{\theta}^n}^{h_{\theta}}(W^n;\x_i^T) =
\mathcal{I}_{(1-p)^n\rho_{\theta}^{(n)}}^{h_{\theta}}(W^n;\x_i^T)$ 
in \eqref{e:InnovationsCVNoPP} become 
\begin{align}
  \label{e:InnovationParametric}
  &
    \mathcal{I}_{(1-p)^n\rho_{\theta}^{(n)}}^{h_{\theta}}(W^n;\x_i^T)
    =
  \\
  =&
     \mathop{\sum\nolimits\sp{\ne}}_{x_1,\ldots,x_{n}\in\x_i^T\cap W}
     h_{\theta}(x_1,\ldots,x_{n})
     -
     (1-p)^n
     \int_{W^n}
     h_{\theta}(u_1,\ldots,u_{n})
     \rho_{\theta}^{(n)}(u_1,\ldots,u_{n})
     \de u_1\cdots\de u_{n}.
     \nonumber
\end{align} 
We make the choice $w=(1-p)^n$ since when $X$ has $n$th-order
product density $\rho_{\theta_0}^{(n)}$, $\theta_0\in\Theta$, 
equation~\eqref{e:WeightsProdDens}, which here is equivalent to the condition in  \eqref{e:ConditionProdDens}, yields that this is a necessary and sufficient condition for 
$\E[\mathcal{I}_{\xi_{\theta_0}^n}^{h_{\theta_0}}(W^n;X_i^T)]=0$ to hold 
for any training-validation pair $(X_i^T,X_i^V)$, $i=1,\ldots,k$, and
any test function $h_{\theta_0}$. 

  \begin{remark}
    As we shall see in Section \ref{s:PapangelouEstimation}, when
    we let $\xi_{\theta}^n(\cdot;\cdot)=w\lambda_{\theta}^{(n)}(\cdot;\cdot)$
    for a family of $n$th-order Papangelou conditional intensities
    $\lambda_{\theta}^{(n)}$, $\theta\in\Theta$, the weight should be
    $w=p^n/(1-p)^n$. Since the $n$th order product densities and
    Papangelou conditional intensities coincide in the case of a
    Poisson process, one could get the impression that there is a
    contradiction. To see that this is not the case, note that the conclusions of 
    \eqref{e:ConditionPapangelou} and \eqref{e:ConditionProdDens},
    which determine how we specify the weights, coincide for a
    Poisson process so the weight should be $w=p^n$.
  \end{remark}

It is further worth noting that in the MCCV case, by the law
of large numbers, we 
obtain that 
$\lim_{k\to\infty}\frac{1}{k}\sum_{i=1}^k\mathcal{I}_{\xi_{\theta}^n}^{h_{\theta}}(W^n;\x_i^T)
\stackrel{a.s.}{=}
(1-p)^n\mathcal{I}_{\xi_{\theta}^n}^{h_{\theta}}(W^n;\x)$ and, by the central limit theorem,  
$(\frac{1}{k}\sum_{i=1}^k\mathcal{I}_{\xi_{\theta}^n}^{h_{\theta}}(W^n;\x_i^T)-(1-p)^n\mathcal{I}_{\xi_{\theta}^n}^{h_{\theta}}(W^n;\x))/(p^n(1-p)^n\sum_{x_1,\ldots,x_{n}\in\x\cap
  W}^{\neq} h_{\theta}(x_1,\ldots,x_{n}))^{1/2}$ tends weakly to a standard
normal distribution. 



In Section \ref{s:ConstantIntensity} we look closer at intensity estimation in the homogeneous
case: 
$\rho_{\theta}(\cdot)\equiv\theta$, $\theta\in\Theta=(0,\infty)$, where the point process $X$ has constant intensity
$\rho_{\theta_0}(\cdot)\equiv\theta_0\in\Theta$, and we estimate $\theta_0$ based on \eqref{e:InnovationParametric} with $n=1$.

\subsubsection{Parametric Papangelou conditional intensity estimation}
\label{s:PapangelouEstimation}

In the case of Papangelou conditional intensities, the formulation of
prediction is straightforward, since Papangelou conditional intensities may be interpreted as a
conditional densities; recall \eqref{e:PapangelouFinitePP}. 
To carry out parametric $n$th-order Papangelou
conditional intensity estimation, we set
\begin{align*}
\xi_{\theta}^n(u_1,\ldots,u_n;\y)
=&
w\lambda_{\theta}^{(n)}(u_1,\ldots,u_n;\y)
\\
=&\frac{p^n}{(1-p)^n}\lambda_{\theta}^{(n)}(u_1,\ldots,u_n;\y),
\quad u_1,\ldots,u_n\in W, 
\quad \y\in\X,
\quad \theta\in\Theta,
\end{align*}
for some parametric family $\lambda_{\theta}^{(n)}$,
$\theta\in\Theta$, of $n$th-order Papangelou conditional intensities; here $\Xi_{\Theta}^n$ and $\mathcal{H}_{\Theta}$
are of the form \eqref{e:GeneralEstimator} so we use the innovations in \eqref{e:InnovationsCV} for the estimation. 
Regarding the choice $w=p^n/(1-p)^n$ for the weight, equation \eqref{e:WeightsPapangelou} tells us that if $X$ has Papangelou
conditional intensity $\lambda_{\theta_0}$ for some $\theta_0\in\Theta$, 
making this choice is equivalent to having 
$\E[\mathcal{I}_{\xi_{\theta_0}^n}^{h_{\theta_0}}(W^n;X_i^V,X_i^T)]=0$
for any training-validation pair $(X_i^T,X_i^V)$, $i=1,\ldots,k$, and
any test function $h_{\theta_0}$. 
The innovations $\mathcal{I}_{\xi_{\theta}^n}^{h_{\theta}}(W^n;\x_i^T,\x_i^V)=\mathcal{I}_{(p/(1-p))^n\lambda_{\theta}^{(n)}}^{h_{\theta}}(W^n;\x_i^V,\x_i^T)$ in \eqref{e:InnovationsCV} here become
  \begin{align}
    \label{e:PapangelouInnovation}
    \mathcal{I}_{(p/(1-p))^n\lambda_{\theta}^{(n)}}^{h_{\theta}}(W^n;\x_i^V,\x_i^T)
    =&
       \mathop{\sum\nolimits\sp{\ne}}_{x_1,\ldots,x_{n}\in\x_i^V\cap W}
       h_{\theta}(x_1,\ldots,x_{n};\x_i^T)
    \\
     &-
       \int_{W^n}
       h_{\theta}(u_1,\ldots,u_{n};\x_i^T)
       \frac{p^n
       \lambda_{\theta}^{(n)}(u_1,\ldots,u_{n};\x_i^T)
       }{(1-p)^n}
       \de u_1\cdots\de u_{n}
       .
       \nonumber
\end{align}
Due to 
the relationship between
the $n$th-order and the first-order Papangelou conditional intensities
in expression \eqref{e:nthPapangelou}, we henceforth focus on the case
$n=1$, i.e.
\begin{align}
\label{e:PapangelouInnovation1}
  \mathcal{I}_{\xi_{\theta}^1}^{h_{\theta}}(W;\x_i^V,\x_i^T)
  =&
     \mathcal{I}_{p(1-p)^{-1}\lambda_{\theta}}^{h_{\theta}}(W;\x_i^V,\x_i^T)
  \\
  =&
     \sum_{x\in\x_i^V\cap W}
     h_{\theta}(x;\x_i^T)
     -
     \frac{p}{1-p}
     \int_{W}
     h_{\theta}(u;\x_i^T)
     \lambda_{\theta}(u;\x_i^T)
     \de u
     .
     \nonumber
\end{align}

\begin{remark}
Recalling the connection between Papangelou conditional intensities and classical conditional intensities mentioned in Remark \ref{rem:ClassicalConditionalIntesnsity}, when $\xi_{\theta}^1$ is based on a parametric family $\lambda_{\theta}^{\dagger}$, $\theta\in\Theta$, of classical conditional intensities, we strongly suspect that the weight used should be $w=p/(1-p)$. This is not something we will look closer at in the current paper. On the other hand, it may be noted that e.g.~Hawkes processes, which are extensively applied temporal point processes, have known closed forms for their Papangelou conditional intensities \citep{yang2019stein}, so \eqref{e:PapangelouInnovation1} may be used to fit Hawkes processes.

\end{remark}

For illustrational purposes, in Section \ref{s:HardCore} we look closer at how
\eqref{e:PapangelouInnovation1} can be used to fit a hard-core process
(recall Section \ref{s:ExpFamily}).

\begin{remark}
  As an aside, given an estimate $\widehat\theta$ of $\theta_0$, note
  that the bivariate residual 
  $\mathcal{I}_{\lambda_{\widehat\theta}}^{h_{\widehat\theta}}(W;\z,\y)$
  may be used as an estimate of dependence between two point processes
  $Y$ and $Z$, with realisations $\y$ and $\z$, respectively.
Similarly, 
  $\frac{1}{k}\sum_{i=1}^k\lambda_{\widehat\theta}^{\#\x_i^V}(\x_i^V;\x_i^T)$
  may be used to measure the goodness of fit and consequently compare
  the performance of competing Papangelou conditional intensity models
  (given a fixed collection of training-validation sets), since this
  essentially is an average of conditional densities. The higher the
  value, the better the model predicts validation data from training
  data. A caveat here though: the training and validation sets used here should
  be other than the ones used in the original fitting.
\end{remark}

\subsubsection{Non-parametric product density/intensity estimation}
\label{s:NonParametricsCV}

The non-parametric product density/intensity 
estimation setting is a bit more delicate than the parametric
estimation setting. Here $\Xi_{\Theta}^n$ and $\mathcal{H}_{\Theta}$ are of the
form \eqref{e:GeneralEstimator} and we set
    $$
    \xi_{\theta}^n(u_1,\ldots,u_n;\y)=w\widehat\rho_{\theta}^{(n)}(u_1,\ldots,u_n;\y),
    \quad u_1,\ldots,u_n\in W, 
    \quad \y\in\X,
    \quad \theta\in\Theta,
    $$
    where $\widehat\rho_{\theta}^{(n)}$ is some non-parametric
    product density estimator with tuning/smoothing
    parameter $\theta\in\Theta$.  Recall the setting in Section
    \ref{s:NonParametrics}, where we consider a non-parametric
    intensity estimator $\widehat\rho_{\theta}(u,\y)$, $u\in W$,
    $\y\in\X$, $\theta$, i.e.~$n=1$, and our aim is to choose the tuning parameter
    $\theta$ optimally.
    
    We set the weight to $w=p^n/(1-p)^n$, whereby the innovations  $\mathcal{I}_{\xi_{\theta}^n}^{h_{\theta}}(W^n;\x_i^V,\x_i^T)
        =
        \mathcal{I}_{p^n(1-p)^{-n}\widehat\rho_{\theta}^{(n)}}^{h_{\theta}}(W^n;\x_i^V,\x_i^T)$ in \eqref{e:InnovationsCV}
    become
    \begin{align*}
      \mathcal{I}_{p^n(1-p)^{-n}\widehat\rho_{\theta}^{(n)}}^{h_{\theta}}(W^n;\x_i^V,\x_i^T)
        =&
        \mathop{\sum\nolimits\sp{\ne}}_{x_1,\ldots,x_{n}\in\x_i^V\cap W}
        h_{\theta}(x_1,\ldots,x_{n};\x_i^T)
      \\
      &-
        \int_{W^n}
        h_{\theta}(u_1,\ldots,u_n;\x_i^T)
        \frac{p^n \widehat\rho_{\theta}^{(n)}(u_1,\ldots,u_n;\x_i^T)}{(1-p)^n}
        \de u_1\cdots\de u_n
        .
\end{align*}
The heuristic motivation for this choice is the following. If $X$ has $n$th-order
product density $\rho^{(n)}(\cdot)$, Theorem \ref{lemma:Thinning} yields that
$\rho^{(n)}(\cdot)=\rho_{1-p}^{(n)}(\cdot)/(1-p)^n$, where
$\rho_{1-p}^{(n)}(\cdot)$ is the $n$th-order product density of
$X_i^T$, which is a $p$-thinning of $X$ with retention probability
$1-p$. Hence, $\widehat\rho_{\theta}^{(n)}(\cdot,X_i^T)/(1-p)^n$ is a
sensible choice for the estimation of the product density of $X$ and,
consequently,
$\widetilde\rho_{\theta}^{(n)}(\cdot,\x_i^T)=\widehat\rho_{\theta}^{(n)}(\cdot,\x_i^T)p^n/(1-p)^n$
is a sensible choice for the estimation of the product density of
$X_i^V$, which is an independent thinning with retention probability $p$. 
More formally, \eqref{e:WeightsPapangelou} tell us that for the 
  corresponding bivariate innovation to have expectation 0, we must
  have that $\xi_{\theta}^n(u_1,\ldots,u_n;X_i^T)$ should coincide with 
  $\lambda_{X_i^T}^{(n)}(u_1,\ldots,u_n;X_i^T)p^n/(1-p)^n$, where
  $\lambda_{X_i^T}^{(n)}(\cdot;X_i^T)$ is the $n$th-order Papangelou conditional
  intensity of $X_i^T$. 
  Taking expectations, we would thus (at least) need
  that
    $$
    \E[w\widehat\rho_{\theta}^{(n)}(u_1,\ldots,u_n;X_i^T)]
    =
    p^n\rho_X^{(n)}(u_1,\ldots,u_n)=\rho_{X_i^V}^{(n)}(u_1,\ldots,u_n),
    $$
    where $\rho_X^{(n)}$ and $\rho_{X_i^V}^{(n)}$ are the (true)
    $n$th-order product densities of $X$ and $X_i^V$,
    respectively. This in turn would require that
    $\widehat\rho_{\theta}^{(n)}$ is unbiased for arbitrary point
    processes and, to the best of our knowledge, no such estimators
    exist. However, minimising the squared innovation would on average
    "force" the estimator to resemble an unbiased one.

  In Section \ref{s:Kernel} we look closer at kernel intensity
  estimation, in particular optimal bandwidth selection.

\section{Applications}
\label{s:Applications}
As a proof of concept, we next look closer at a few special scenarios, which deal with some of the most common estimation settings encountered in the literature.

\subsection{Parametric intensity estimation: constant intensity estimation}
\label{s:ConstantIntensity}

Consider a homogeneous point process $X$ with unknown constant intensity 
$\theta_0 \in (0,\infty)$. 
To carry out parametric intensity estimation, we consider the setting 
of Section~\ref{s:ProductDensity}, where in  \eqref{e:InnovationParametric}
we 
set $\Theta = (0,\infty)$, $n=1$, 
$\rho_{\theta}(\cdot)\equiv\theta\in(0,\infty)$, and 
$\xi_{\theta}^1(\cdot)=(1-p) \theta$. 
We further assume that the test function family  $\mathcal{H}_{\Theta}=\{h_{\theta}:\theta\in\Theta\}$ contains only one element $h$, which consequently does not depend on $\theta$.
We may then estimate $\theta_0\in (0,\infty)$ by 
minimising e.g.~one of the loss
functions \eqref{e:EstFunGeneralMedian}, \eqref{e:EstFunGeneral} or
\eqref{e:EstFunGeneralMean}, using the
univariate innovations
\begin{align}
\label{e:InnovationParametricIntensity}
    \mathcal{I}_{\xi_{\theta}^1}^{h}(W;\x_i^T)
    =
    \sum_{x\in\x_i^T\cap W}
     h(x)
     -
     \theta (1-p)
     \int_{W}
     h(u)
     \de u
     , 
     \quad i=1,\ldots,k.
\end{align} 
Below, 
we show that
for specific choices of $h$, our parametric intensity estimation approach outperforms the classical estimator
\begin{align}
  \label{e:ClassicalEstimator}
  \widetilde\theta=\widetilde\theta(X,W)=\frac{X(W)}{|W|}
\end{align}
in certain cases. 

Central to the results in this section is the following generalisation of the classical intensity estimator in  \eqref{e:ClassicalEstimator}.
\begin{definition}
Given some $W\subseteq S$ and a test function $h:W\to\R$ such that $0<|\int_{W} h(u) \de u|<\infty$, the {\em $h$-weighted intensity estimator} of the intensity $\theta_0>0$ of a homogeneous point process $X$ on $S$ is given by
\begin{align}
\label{e:hWeightedIntensityEst}
\widetilde\theta_h(X,W)
  =
  \frac{ \sum_{x\in X\cap W} h(x) }
{\int_{W} h(u) \de u }
.
\end{align}
In particular, $\widetilde\theta_h(X,W)=\widetilde\theta(X,W)$ for any constant test function $h(\cdot)\equiv c\neq0$. 

\end{definition}

Given training-validation pairs $\{(\x_i^T,\x_i^V)\}_{i=1}^k$ in accordance with Section~\ref{s:PP CV}, below we consider scaled versions 
\begin{equation*}
\widehat\theta((\x_i^T,\x_i^V),p,W,h)
=
    \frac{
  \sum_{x\in\x\cap W} h(x)  \1\{x\in\x_i^T\} }{ (1-p) \int_{W} h(u) \de u}
  =
  \frac{ \sum_{x\in\x_i^T\cap W} h(x) }
{ (1-p) \int_{W} h(u) \de u }
=
  \frac{\widetilde\theta_h(\x_i^T,W)}{1-p}
\end{equation*}
of \eqref{e:hWeightedIntensityEst}, for which we use the short notation $\widehat\theta_i^h$, $i=1,\ldots,k$. In particular, if the test function $h$ is given by a non-null constant then $\widehat\theta_i^h=\widetilde\theta(\x_i^T,W)/(1-p)=\x_i^T/((1-p)|W|)$. As we shall see in the next result, which is proved in Section \ref{s:Proofs}, these are the estimates \eqref{e:IndividualEstimates} in the current context.


\begin{thm}
\label{thm:ConstantIntensity}
Let $\x$ be a realisation of a homogeneous point process $X$, with constant intensity $\theta_0>0$, which is observed within $W\subseteq S$. 
Let $\{(\x_i^T,\x_i^V)\}_{i=1}^k$ be the associated CV-partitioning, as presented in Section~\ref{s:PP CV}, and let $h$ be a test function satisfying  $0<|\int_{W} h(u) \de u|<\infty$. 

Let the indicator functions in \eqref{e:IndicatorInnovation} be given by $I_i=\1\{1\leq\#\x_i^T\}$, $i=1,\ldots,k$, whereby $\mathcal{T}_k = \{i\in \{1,\ldots,k\}; \#\x_i \geq 1\}$. It then follows that the estimates in \eqref{e:IndividualEstimates} are given by $\widehat\theta_i^h$, $i=1,\ldots,k$, whereby 
the estimate in  \eqref{e:IndividualEstimatesMedian} is given by
\begin{align*}
\widehat{\theta}_1(\{(\x_i^T,\x_i^V)\}_{i=1}^k,p,W,h) 
&=
\med\{\widehat\theta_i^h : 
i \in \mathcal{T}_k\}
=
\med\{(1-p)^{-1}\widetilde\theta_h(\x_i^T,W) : 
i \in \mathcal{T}_k\}
,
\end{align*}
where $\med\{\cdot\}$ denotes the sample median, and this coincides with the estimate obtained by minimising the loss function $\mathcal{L}_1$ in \eqref{e:EstFunGeneralMedian} with respect to $\theta\in\Theta$.
Moreover, the estimate in  \eqref{e:IndividualEstimatesMean} here takes the form 
\begin{align*}
\widehat{\theta}_j(\{(\x_i^T,\x_i^V)\}_{i=1}^k,p,W,h) 
=&
\frac{1}{\#\mathcal{T}_k}
\sum_{i \in \mathcal{T}_k} 
\widehat\theta_i^h
=
\frac{1}{(1-p)\#\mathcal{T}_k}
\sum_{i \in \mathcal{T}_k} 
\widetilde\theta_h(\x_i^T,W)
,
\quad j=2,3,
\end{align*}
and it coincides with the estimates obtained by minimising
any of the loss functions $\mathcal{L}_j$, $j=2,3$, in \eqref{e:EstFunGeneral} and \eqref{e:EstFunGeneralMean} with respect to $\theta\in\Theta$. 
In addition,  
\begin{align*}
\E[\widehat\theta_j(\{(\x_i^T,\x_i^V)\}_{i=1}^k,p,W,h)]
  =&
  \widetilde\theta_h(\x,W)
  =
\frac{\sum_{x\in\x\cap W} h(x)}{
\int_{W}h(u) \de u}, 
\quad j=2,3.
\end{align*}
Here we use the conventions that empty sums are 0 and $0/0=0$. We thus obtain the median and the mean of the estimates \eqref{e:IndividualEstimates}, and in the case of multinomial CV we set $p=1/k$  throughout.

If we instead let the indicator functions in \eqref{e:IndicatorInnovation} be given by $I_i=1$, $i=1,\ldots,k$, whereby $\widetilde{\mathcal{I}}_{\xi_{\theta}^1}^{h}(W;\x_i^T)
=\mathcal{I}_{\xi_{\theta}^1}^{h}(W;\x_i^T)$ and $\widehat\theta_i^h
=0$ if $\x_i^T=\emptyset$, then the results above remain the same but with $\mathcal{T}_k = \{1,\ldots,k\}$.

\end{thm}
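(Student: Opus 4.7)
The plan is to exploit the fact that for each $i$, the univariate innovation is affine in $\theta$. Writing $S_i = \sum_{x \in \x_i^T \cap W} h(x)$ and $H = (1-p)\int_{W} h(u)\,\de u$, with $|H|\in(0,\infty)$ by assumption, we obtain
\[
\mathcal{I}_{\xi_\theta^1}^{h}(W;\x_i^T) = S_i - \theta H,
\qquad
\widetilde{\mathcal{I}}_{\xi_\theta^1}^{h}(W;\x_i^T) = I_i(S_i - \theta H).
\]
Hence each individual squared innovation $I_i(S_i - \theta H)^2$ is a (possibly degenerate) convex quadratic in $\theta$, minimised at $\theta = S_i/H = \widehat\theta_i^h$ whenever $I_i=1$; when $I_i=0$ the squared innovation is identically zero in $\theta$, and the $0/0=0$ convention pins $\widehat\theta_i^h = 0$. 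This identifies the collection $\{\widehat\theta_i^h\}$ of \eqref{e:IndividualEstimates}.

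Next I would factor $H$ (respectively $|H|$) out of each of $\mathcal{L}_1,\mathcal{L}_2,\mathcal{L}_3$ and reduce each to a one-dimensional optimisation in $\theta$. The loss $\mathcal{L}_1$ is proportional to $\sum_{i \in \mathcal{T}_k} |\widehat\theta_i^h - \theta|$, minimised by the sample median by the classical $L^1$-regression argument; $\mathcal{L}_2$ is proportional to $\sum_{i \in \mathcal{T}_k}(\widehat\theta_i^h - \theta)^2$, minimised at the sample mean by differentiation; and $\mathcal{L}_3$ is proportional to $\bigl(\sum_{i \in \mathcal{T}_k}(\widehat\theta_i^h - \theta)\bigr)^2$, a non-negative square whose root is again the sample mean. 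The assertions that $\widehat\theta_1$ arises from $\mathcal{L}_1$ and that the $\widehat\theta_j$, $j=2,3$, coincide, follow immediately.

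For the expectation claim the key input is the marginal probability that a point $x \in \x$ lies in $\x_i^T$. For MCCV this equals $1-p$ by the definition of a $p$-thinning (Theorem \ref{lemma:Thinning} applied to the complement of $\x_i^V$), and for multinomial CV with $p=1/k$ it equals $1 - 1/k = 1-p$. By linearity,
\[
\E\bigl[\widetilde\theta_h(\x_i^T,W)\bigr] = (1-p)\,\widetilde\theta_h(\x,W),\qquad i=1,\ldots,k.
\]
In the $I_i\equiv 1$ regime, $\#\mathcal{T}_k = k$ is deterministic, and summing $k$ copies of the above identity yields $\E[\widehat\theta_j] = \widetilde\theta_h(\x,W)$ directly. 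For the $I_i = \1\{\#\x_i^T\geq 1\}$ regime one would exploit exchangeability of the $k$ rounds together with the $0/0=0$ convention, which forces empty rounds to contribute zero compatibly in both the numerator and the denominator of the sample mean.

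The main obstacle lies in the expectation calculation under $I_i = \1\{\#\x_i^T \geq 1\}$: the random denominator $\#\mathcal{T}_k$ is non-trivially coupled with the random numerator, so a naive application of linearity fails and a careful argument (for example, rewriting the sum as $\sum_{x \in \x \cap W} h(x)\,\#\{i: x \in \x_i^T\}/\#\mathcal{T}_k$ and exploiting symmetric dependencies among the point-level retention counts, or conditioning on $\#\mathcal{T}_k$) is required. All other parts — the explicit form $\widehat\theta_i^h = \widetilde\theta_h(\x_i^T,W)/(1-p)$, the identification of minimisers of $\mathcal{L}_1$, $\mathcal{L}_2$, $\mathcal{L}_3$, and the coincidence of the mean-type estimates — follow from the affinity of the innovations in $\theta$ and standard convex analysis.
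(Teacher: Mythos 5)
Your treatment of the optimisation part is correct and is essentially the paper's own argument: the paper likewise exploits that $\mathcal{I}_{\xi_\theta^1}^{h}(W;\x_i^T)=\sum_{x\in\x_i^T\cap W}h(x)-\theta(1-p)\int_W h(u)\,\de u$ is affine in $\theta$, divides out $(1-p)\int_W h(u)\,\de u$, and identifies the minimiser of $\mathcal{L}_1$ via the sign of the (a.e.) derivative, i.e.\ the sample median, and the minimiser of $\mathcal{L}_2,\mathcal{L}_3$ via the vanishing of $\sum_{i\in\mathcal{T}_k}(\widehat\theta_i^h-\theta)$, i.e.\ the sample mean. Your identification of the individual estimates \eqref{e:IndividualEstimates} and your expectation computation in the $I_i\equiv1$ regime (deterministic denominator $k$, linearity, $\P(x\in\x_i^T)=1-p$ for both MCCV and multinomial CV with $p=1/k$) are also fine and agree with the paper.

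The genuine gap is the one you flag yourself but do not close: the expectation identity $\E[\widehat\theta_j]=\widetilde\theta_h(\x,W)$ must also be established for $I_i=\1\{1\leq\#\x_i^T\}$, where the denominator $\#\mathcal{T}_k$ is random and coupled to the numerator. Naming candidate strategies ("exchangeability", "conditioning on $\#\mathcal{T}_k$") is not a proof, and naive linearity indeed fails here, as you observe. The paper's route is the tower property with conditioning on the random set $\mathcal{T}_k$ itself (not merely its cardinality): it writes
\begin{align*}
\E\Big[\tfrac{1}{\#\mathcal{T}_k}\sum_{i\in\mathcal{T}_k}\widehat\theta_i^h\Big]
=
\frac{\sum_{x\in\x\cap W}h(x)}{(1-p)\int_W h(u)\,\de u}\,
\E\Big[\tfrac{1}{\#\mathcal{T}_k}\sum_{i\in\mathcal{T}_k}\E\big[\1\{x\in\x_i^T\}\,\big|\,\mathcal{T}_k\big]\Big],
\end{align*}
and then identifies the outer expectation with $\E[\E[\1\{x\in\x_1^T\}\mid\mathcal{T}_k]]=\P(x\in\x_1^T)=1-p$, using that the training/validation assignments are iid across the $k$ rounds. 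Your instinct that this is the delicate step is sound: the event $\{x\in\x_i^T\}$ is contained in $\{i\in\mathcal{T}_k\}$, so conditioning on $i\in\mathcal{T}_k$ tilts $\E[\1\{x\in\x_i^T\}\mid\mathcal{T}_k]$ away from $1-p$, and the cancellation has to come from averaging against $1/\#\mathcal{T}_k$. To complete your proof you must either reproduce this conditioning argument or supply an alternative; as written, the theorem is only established by you for the choice $I_i\equiv1$.
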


In Theorem~\ref{thm:ConstantIntensity}, the first thing we note is that if the test function is given by a non-null constant then 
the expectation above reduces to $\widetilde\theta(\x,W)$. 
We emphasise that if we let $I_i
=\1\{1\leq\#\x_i^T\}$ then we remove any term $\widehat\theta_i^h$ where $\x_i^T=\emptyset$ from consideration, as opposed to letting $\widetilde{\mathcal{I}}_{\xi_{\theta}^1}^{h}(W;\x_i^T)
=\mathcal{I}_{\xi_{\theta}^1}^{h}(W;\x_i^T)$ where we instead include  $\widehat\theta_i^h=0$. 
Aside from the challenge of deriving closed form variance expressions when $I_i=\1\{1\leq\#\x_i^T\}$, in the case of $\widehat{\theta}_2=\widehat{\theta}_3$, it further turns out that the choice  $I_i=\1\{1\leq\#\x_i^T\}$ yields a higher variance than the the choice $I_i=1$, whereby $I_i=1$ will be the preferred choice. This is summarised in  Lemma~\ref{lemma:constant intensity majoration variance} below, which is proved in Section \ref{s:Proofs}.

\begin{lemma}
\label{lemma:constant intensity majoration variance}
  Let the situation be as in Theorem~\ref{thm:ConstantIntensity}. Then, 
  \begin{equation*}
  \Var\left(\frac{1}{\#\mathcal{T}_k}
\sum_{i \in \mathcal{T}_k} 
\widehat\theta_i^h\right)
\geq
  \Var\left( 
      \frac{1}{k}
      \sum_{i=1}^k 
      \widehat\theta_i^h
      \right)
     =\frac{p}{k(1-p)} \frac{\sum_{x\in \x\cap W}h(x)^2}
  {(\int_{W} h(u) \de u)^2},
  \end{equation*}
  where the right hand side tends to 0 if $k\to\infty$ or/and $p\to0$. 
\end{lemma}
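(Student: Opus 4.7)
The plan is to prove the two claims of the lemma separately: the closed form on the right first, and then the inequality on the left. Convergence to zero as $k\to\infty$ or $p\to 0$ will follow immediately from the explicit formula.

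\emph{Step 1 (closed form for $V_2 := \Var(k^{-1}\sum_i \widehat\theta_i^h)$).}
In the MCCV setup, conditionally on $\x$, the $k$ training sets are independent $p$-thinnings of $\x$ with retention probability $1-p$. Writing
\[
\widehat\theta_i^h = \frac{1}{(1-p)H}\sum_{x\in\x\cap W}h(x)\, B_{i,x},
\qquad
H := \int_W h(u)\de u,
\]
with $B_{i,x} := \1\{x\in\x_i^T\}$ iid $\mathrm{Bern}(1-p)$, the $\widehat\theta_i^h$ are conditionally iid given $\x$. Hence $V_2 = \Var(\widehat\theta_1^h\mid\x)/k$, and a direct Bernoulli-variance calculation yields $\Var(\widehat\theta_1^h\mid\x) = p H_2/[(1-p)H^2]$ with $H_2 := \sum_x h(x)^2$, giving the claimed formula. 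Convergence of $V_2$ to $0$ as $k\to\infty$ or $p\to 0$ is then immediate.

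\emph{Step 2 (inequality $V_1 \geq V_2$).}
The key observation is that $\widehat\theta_i^h = 0$ precisely when $\x_i^T = \emptyset$, so $\sum_{i\in\mathcal{T}_k} \widehat\theta_i^h = S_k := \sum_{i=1}^k \widehat\theta_i^h$ and $N := \#\mathcal{T}_k \sim \mathrm{Bin}(k,q)$ with $q := 1 - p^{\#\x}$. Conditioning on $N$ and using exchangeability of the training sets, given $N = m \geq 1$ the retained $\widehat\theta_i^h$'s act as $m$ iid copies with mean $\mu' := \mu/q$ (where $\mu := \E[\widehat\theta_1^h\mid\x]$) and variance $\sigma'^2 := \Var(\widehat\theta_1^h\mid\x_1^T\neq\emptyset,\x)$. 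The law of total variance then gives
\[
V_1 = \sigma'^2\,\E[N^{-1}\1\{N\geq 1\}] + (\mu')^2\, r(1-r),
\qquad
r := 1 - (1-q)^k,
\]
while decomposing $\Var(\widehat\theta_1^h\mid\x)$ over $\{\x_1^T\neq\emptyset\}$ gives $V_2 = [\sigma'^2 q + (\mu')^2 q(1-q)]/k$.

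\emph{Step 3 (comparison).}
For the $\sigma'^2$ contribution, Jensen's inequality on $1/N$ conditionally on $\{N\geq 1\}$ gives $\E[N^{-1}\1\{N\geq 1\}] \geq r^2/(kq)$, and Bernoulli's inequality $(1-q)^{k-1}\leq 1$ yields $r\geq q$; together these show $\sigma'^2\,\E[N^{-1}\1\{N\geq 1\}]\geq\sigma'^2 q/k$, so this part of $V_1$ already dominates its counterpart in $V_2$. The main obstacle is the $(\mu')^2$ contribution: termwise one does not have $r(1-r)\geq q(1-q)/k$, so I would combine the slack from the Jensen bound with the term $(\mu')^2 r(1-r)$ and exploit the explicit form $r = 1 - (1-q)^k$ together with elementary Binomial identities to dominate $(\mu')^2 q(1-q)/k$. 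This combined algebraic step is the technical heart of the proof.
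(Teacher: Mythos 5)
Your Step 1 is correct and is, up to reorganisation, the paper's own computation: the paper writes $\frac{1}{k}\sum_{i=1}^k\widehat\theta_i^h$ as $\sum_{x\in\x\cap W}b_xh(x)\big/\big(k(1-p)\int_Wh(u)\de u\big)$ with $b_x=\sum_{i=1}^k\1\{x\in\x_i^T\}\sim\mathrm{Bin}(k,1-p)$ independent across $x$ and reads off the variance; your fold-wise conditional-iid argument gives the same thing.

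The inequality is where you diverge from the paper, and your Step 3 is a genuine gap rather than a deferred calculation. The paper's route is much shorter: since empty training sets contribute nothing to the sum, $\frac{1}{\#\mathcal{T}_k}\sum_{i\in\mathcal{T}_k}\widehat\theta_i^h=\frac{k}{\#\mathcal{T}_k}\cdot\frac{1}{k}\sum_{i=1}^k\widehat\theta_i^h$ almost surely; it then asserts, citing Theorem~\ref{thm:ConstantIntensity}, that both sides have the same expectation, so the pointwise inflation factor $(k/\#\mathcal{T}_k)^2\geq1$ gives a larger second moment and hence a larger variance. Your law-of-total-variance decomposition is a different route, and it stalls exactly where it cannot be pushed through. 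The Jensen slack you propose to borrow is attached to $\sigma'^2$, so it is worth nothing when $\sigma'^2=0$; the comparison then collapses to precisely the termwise inequality $r(1-r)\geq q(1-q)/k$ that you already observed is false. This is not a removable technicality: take $\#\x=1$ with $h(x_0)\neq0$, so each $\widehat\theta_i^h$ is a scaled Bernoulli and $\sigma'^2=0$; the claimed inequality reduces to $k(1-p^k)p^{k-1}\geq 1-p$, which fails at, e.g., $p=0.1$, $k=2$. Your own bookkeeping explains why no binomial identity will rescue it: $\E\big[\frac{1}{\#\mathcal{T}_k}\sum_{i\in\mathcal{T}_k}\widehat\theta_i^h\big]=\mu'r$ while $\E\big[\frac{1}{k}\sum_{i=1}^k\widehat\theta_i^h\big]=\mu'q$, and $r=1-(1-q)^k>q$ for $k\geq2$, so the two estimators do not share a mean once $\P(\x_i^T=\emptyset)>0$ --- and equality of means is exactly the input on which the paper's second-moment comparison rests. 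A complete argument therefore has to either establish (or assume) that equal-expectation claim, or restrict to the regime where $\P(\x_i^T=\emptyset)$ vanishes, where the two estimators coincide with overwhelming probability; as written, your plan for the $(\mu')^2$ term cannot be carried out.
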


Theorem~\ref{thm:ConstantIntensity} gives the expectation of one of the estimators, conditionally on $X\cap W=\x$. The unconditional case is treated in Lemma \ref{lemma:ConstantIntensityX} below, which is proved in Section \ref{s:Proofs}, and in particular it tells us that $\widehat\theta_2=\widehat\theta_3$ is unbiased for arbitrary $h$. 

\begin{lemma}
\label{lemma:ConstantIntensityX}
  Let the situation be as in Theorem~\ref{thm:ConstantIntensity} 
  and let $g_X^{(2)}(\cdot)$ be the pair correlation function of $X$.
  
  Irrespective of whether 
  the indicator functions in \eqref{e:IndicatorInnovation} are set to i) $I_i=\1\{1\leq\#\x_i^T\}$ 
or ii) 
$I_i=1$, $i=1,\ldots,k$, 
the expectation
  of 
  $\widehat\theta_j(\{(X_i^T,X_i^V)\}_{i=1}^k,p,W,h)$, $j=2,3$,  
  is given by 
  \begin{equation*}
  \E[\widehat\theta_j(\{(X_i^T,X_i^V)\}_{i=1}^k,p,W,h)]
  =\theta_0.
  \end{equation*}
  Moreover, the variance of $\widehat\theta_j(\{(X_i^T,X_i^V)\}_{i=1}^k,p,W,h)$ which corresponds to the choice i) is larger than or equal to the variance corresponding to the choice ii), and the variance corresponding to ii) satisfies  
 \begin{align*}
 \Var\left(\widehat\theta_j(\{(X_i^T,X_i^V)\}_{i=1}^k,p,W,h)\right)
 =&
   \Var\left( 
    \frac{1}{k}
    \sum_{i=1}^k 
    \widehat\theta((X_i^T,X_i^V),p,W,h)
    \right)
    \\
    =&
    \left(\frac{p}{(1-p)k}
     +
     1 \right) \theta_0
     \frac{ \int_W h(u)^2\de u }
     {(\int_{W} h(u) \de u)^2} 
     \\
     &+
     \theta_0^2
     \left( \frac{
     \int_W\int_W h(u_1)h(u_2)g_X^{(2)}(u_1,u_2)\de u_1\de u_2
     }
     {( \int_{W} h(u) \de u)^2} - 1\right).
 \end{align*}
 Recall that in the multinomial CV case we have $p=1/k$.
\end{lemma}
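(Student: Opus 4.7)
The plan is to exploit the fact that, conditional on the point process $X$, the training sets $X_1^T,\dots,X_k^T$ are independent $(1-p)$-thinnings of $X$. This reduces all computations to binomial calculations conditional on $X$, followed by applications of the first- and second-order Campbell formulas to turn sums over $X$ into integrals against $\theta_0$ and $g_X^{(2)}$.

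For the unbiasedness claim in case ii) ($I_i\equiv 1$), linearity yields $\E[\widehat\theta_j]=\E[\widehat\theta_1^h]$, and Theorem~\ref{lemma:Thinning} combined with Campbell's formula gives $\E[\sum_{x\in X_1^T\cap W}h(x)]=(1-p)\theta_0\int_W h(u)\,\de u$, which divided by $(1-p)\int_W h(u)\,\de u$ is $\theta_0$. In case i) ($I_i=\1\{\#\x_i^T\geq 1\}$), the key observation is that $\widehat\theta_i^h=0$ whenever $X_i^T\cap W=\emptyset$, so $\sum_{i\in\mathcal{T}_k}\widehat\theta_i^h=\sum_{i=1}^k\widehat\theta_i^h$; an exchangeability argument conditional on $X$ applied to the pairs $(\widehat\theta_i^h,\1\{\#X_i^T\geq 1\})$, combined with the $0/0=0$ convention, is intended to reduce the expectation back to $\theta_0$.

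For the variance formula in case ii) I would write
\[
  \widehat\theta_j=\frac{1}{k(1-p)\int_W h(u)\,\de u}\sum_{x\in X\cap W}h(x)\,N_x,
\]
where $N_x=\sum_{i=1}^k\1\{x\in X_i^T\}$ are, conditional on $X$, independent $\mathrm{Bin}(k,1-p)$ variables indexed by $x\in X\cap W$. The law of total variance splits $\Var(\widehat\theta_j)$ into $\E[\Var(\widehat\theta_j\mid X)]+\Var(\E[\widehat\theta_j\mid X])$. The conditional variance equals $p\sum_{x\in X\cap W}h(x)^2/[k(1-p)(\int_W h)^2]$, and taking expectation via the first-order Campbell formula produces $p\theta_0\int_W h^2/[k(1-p)(\int_W h)^2]$. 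The conditional mean is $\widetilde\theta_h(X,W)$; its variance, obtained from the second-order Campbell formula applied with $\rho_X^{(2)}(u,v)=\theta_0^2 g_X^{(2)}(u,v)$, equals $\theta_0\int_W h^2/(\int_W h)^2+\theta_0^2(\int_W\!\!\int_W h(u)h(v)g_X^{(2)}(u,v)\,\de u\,\de v/(\int_W h)^2-1)$. Adding the two pieces recovers the closed-form expression in the lemma.

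For the variance comparison, the representation above shows that the case-i) and case-ii) estimators coincide on the event $\{\#\mathcal{T}_k=k\}$ and both vanish on $\{\#\mathcal{T}_k=0\}$, so the difference is concentrated on $\{0<\#\mathcal{T}_k<k\}$, where case~i) replaces the fixed denominator $k$ by the smaller random $\#\mathcal{T}_k$. The plan is to condition on $X$ and on $\#\mathcal{T}_k$, compute each conditional variance using the same binomial/Campbell machinery as above, and verify that the extra variability introduced by the random denominator dominates. The main obstacle I anticipate is precisely this step: the random $1/\#\mathcal{T}_k$ factor destroys the clean linear decomposition that works for case~ii), so the covariance terms must be handled via exchangeability in $i$ and a Jensen-type convexity argument rather than a single Campbell identity.
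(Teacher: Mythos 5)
Your computations for the unbiasedness and for the closed-form variance in case ii) are correct and essentially reproduce the paper's argument. The paper also reduces everything to the variables $b_x=\sum_{i=1}^k\1\{x\in X_i^T\}$, which conditionally on $X$ are independent and binomially distributed with parameters $k$ and $1-p$, and then applies the first- and second-order Campbell formulas; its expansion $\E[b_x^2]=kp(1-p)+k^2(1-p)^2$ is exactly your law-of-total-variance split written out term by term, so the two derivations differ only in bookkeeping.

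The genuine gap is the comparison of the two variances, which you explicitly leave as an ``anticipated obstacle'' and for which you propose conditioning on $\#\mathcal{T}_k$ together with a Jensen/exchangeability analysis of covariance terms. None of that is needed, and the route you sketch is considerably harder than the problem warrants. The key identity --- which you in fact state but do not exploit --- is that $\widehat\theta_i^h=0$ whenever $X_i^T=\emptyset$, so the case-i) estimator equals $\frac{k}{\#\mathcal{T}_k}$ times the case-ii) estimator pointwise (with the convention $0/0=0$). Since $k/\#\mathcal{T}_k\geq 1$ almost surely, squaring gives $\bigl(\tfrac{k}{\#\mathcal{T}_k}Y\bigr)^2\geq Y^2$ pointwise irrespective of the sign of $Y$, hence the second moment of the case-i) estimator dominates that of the case-ii) estimator; combined with the equality of the two means (the unbiasedness you have already established), this yields the variance inequality in one line. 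This is precisely how the paper argues in Lemma~\ref{lemma:constant intensity majoration variance}, to which the proof of Lemma~\ref{lemma:ConstantIntensityX} defers. A secondary, smaller issue: your treatment of the expectation in case i) (``an exchangeability argument \ldots is intended to reduce the expectation back to $\theta_0$'') is left at the level of intention; the paper carries this out by conditioning on $\mathcal{T}_k$ and using the tower property, and you need that step written out explicitly, since the equality of the two means is exactly the input that makes the one-line variance comparison work.
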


Since we have unbiasedness for any choice of $k$ and $p$ in the MCCV case, we would like to see how $k$ and $p$ should be chosen in order to minimise the variance. 
We start by noting that for $j=2,3$, when
$k\to\infty$ or/and $p\to0$ we have that
\begin{align*}
\Var(\widehat\theta_j(\{(X_i^T,X_i^V)\}_{i=1}^k,p,W,h))
  \to&
     \theta_0
     \frac{ \int_W h(u)^2\de u }
     {(\int_{W} h(u) \de u)^2} 
     \\
     &+
     \theta_0^2
     \left( \frac{
     \int_W\int_W h(u_1)h(u_2)g_X^{(2)}(u_1,u_2)\de u_1\de u_2
     }
     {( \int_{W} h(u) \de u)^2} - 1\right)
\end{align*}
monotonically in $k$ and $p$. Moreover, since by the law of large numbers we have that $\lim_{k\to\infty}\widehat\theta_j(\{(\x_i^T,\x_i^V)\}_{i=1}^k,p,W,h)=\widetilde\theta_h(\x,W)$ a.s.~for any point configuration $\x\in\X$, it also follows that 
$$
\lim_{k\to\infty}\widehat\theta_j(\{(X_i^T,X_i^V)\}_{i=1}^k,p,W,h)
\stackrel{a.s.}{=}
\widetilde\theta_h(X,W). 
$$
In other words, the $h$-weighted intensity estimator in  \eqref{e:hWeightedIntensityEst} corresponds to the unbiased minimum-variance case here. 

Turning to consistency under an increasing-domain regime, it follows that if $h$ and $g_X^{(2)}$ are such that
\begin{align*}
&\int_{W_n} \int_{W_n} h(u_1) h(u_2) g_X^{(2)}(u_1,u_2) \de u_1 \de u_2
\left/ \left( \int_{W_n} h(u) \de u \right)^{2}\to 1\right.,
\\
&\int_{W_n} h(u)^2 \de u \left/
\left( \int_{W_n} h(u) \de u \right)^{2}\to0\right.,
\end{align*}
for some
increasing sequence $W_n\subseteq S$, $n\geq1$, then, for any $k$ and
$p$ we have that 
$\E[(\widehat\theta_j(\{(X_i^T,X_i^V)\}_{i=1}^k,p,W,h)-\theta_0)^2]\to0$,
whereby $\widehat\theta_j(\{(X_i^T,X_i^V)\}_{i=1}^k,p,W,h)\to\theta_0$ in
probability when $n\to\infty$. 

The increasing-domain asymptotics above are satisfied e.g.~when $X$ is a homogeneous Poisson process and $h$ is constant; when $X$ is a homogeneous Poisson process, the pair correlation function is given by $g_X^{(2)}(\cdot)\equiv1$, which in turn implies that the second term of the variance in Lemma \ref{lemma:ConstantIntensityX} vanishes. 

Depending on the underlying point process $X$, finding variance-optimal choices for $h$ may be a challenge. 
Note first that Jensen's inequality tells us that $\int_W h(u)^2\de u \geq (\int_{W} h(u) \de u)^2$, with equality if $h$ is linear. This yields that the first of the variance terms is minimised when $h$ is linear. 
Hence, for a homogeneous Poisson process we have that the classical estimator \eqref{e:ClassicalEstimator}, which is obtained by letting $h$ be constant, is both variance-optimal and consistent. 
Dealing with the combination of the two variance terms simultaneously is a more delicate matter, which depends on the (unknown) dependence structure of the underlying point process $X$.

\subsubsection{Numerical evaluations}

Next, we evaluate our constant intensity estimators numerically, and we do so by considering the following models observed on $[0,1]^2=W\subseteq S=\R^2$ (see Section \ref{s:Models} for details):
\begin{itemize}
    \item A homogeneous Poisson process with intensity $\rho=250$.
    
    \item A homogeneous log-Gaussian Cox process (LGCP) with driving random field $\Lambda(u)=\exp\{Z(u)\}$, where $Z(u)$, $u\in W$, is a Gaussian random field with constant mean function $u\mapsto 3.5$ and exponential covariance function
    $(u,v)\mapsto \sigma^2\exp\{-r\|u-v\|_2\}$, $u,v\in W$, with $\sigma^2=4$ and $r=0.1$. Its intensity is given by $\rho=\e^{3.5 + 4/2}\approx245$.
    
    \item A homogeneous determinantal point process (DPP) with kernel given by $(u,v)\mapsto \sigma^2\exp\{-r\|u-v\|_2\}$, $u,v\in W$. We here set $(\sigma^2,r)=(250,50)$, whereby the intensity is given by $\rho=\sigma^2=250$.
\end{itemize}

Drawing inspiration from Section \ref{s:TestFunctions}, we will here consider the family of test functions given by $h^{\gamma}(u)=h^{\gamma}(u_1,u_2)=u_1^{\gamma}u_2^{\gamma}$, $u=(u_1,u_2)\in W$, $\gamma\in\R$. Note that $\gamma=0$ yields $h^{\gamma}(\cdot)\equiv1$, which in turn yields $\widehat{\theta}_2(\{(\x_i^T,\x_i^V)\}_{i=1}^k,p,W,h^{\gamma})=\widehat{\theta}_3(\{(\x_i^T,\x_i^V)\}_{i=1}^k,p,W,h^{\gamma})\approx\widetilde\theta(\x,W)=\#\x/|W|$ for large $k$, i.e.~the classical estimator; in practice, we consider each $\gamma$ in the sequence $\Theta_{\gamma}=\{-1,\ldots,-0.1,0,0.1,\ldots,1\}$. The classical estimator is also the estimator which we will compare our newly derived estimators to, so it is sufficient to look at which choice of $\gamma$ is optimal. To study the performance of the estimators, we report 
the mean squared error ($\mathrm{MSE}$), which theoretically is the same as the variance since the bias is 0 in the case of $\widehat\theta_2=\widehat\theta_3$. Due to the higher variance for multinomial CV, we here only consider MCCV and we let $p=0.1,0.2,\ldots,0.9$ and $k=400$. In Figure \ref{f:constantmse} we find the results for the three models.


\begin{figure}[!htpb]
 \centering
\includegraphics[scale=0.16]{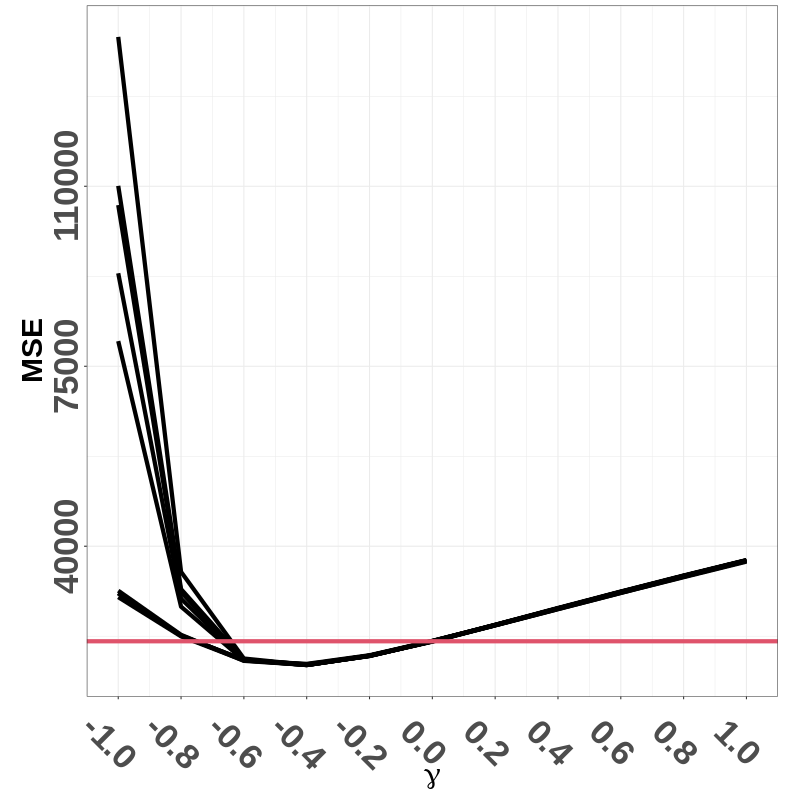}
\includegraphics[scale=0.16]{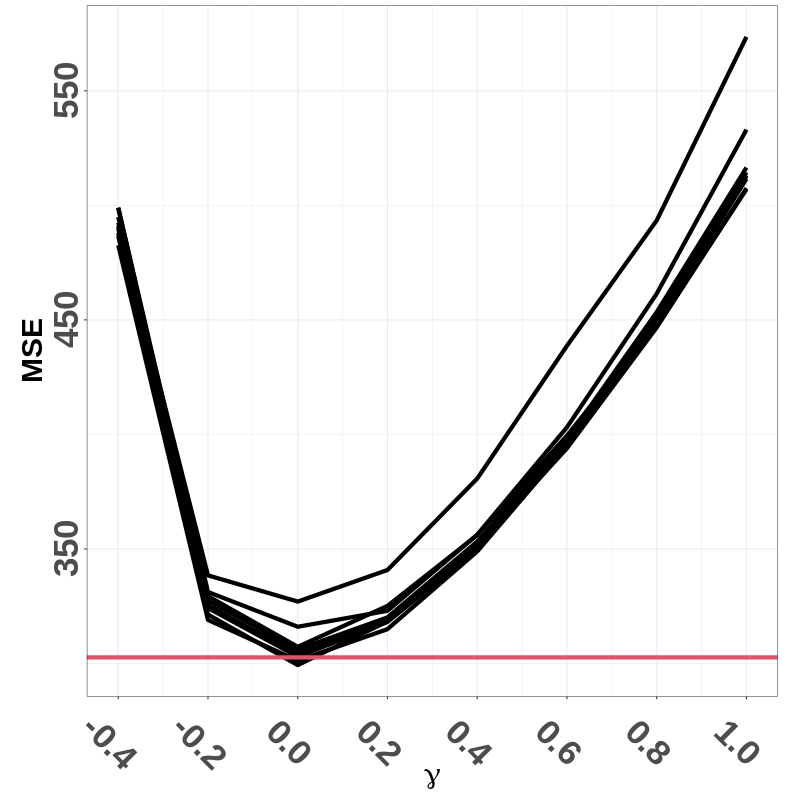}
\includegraphics[scale=0.16]{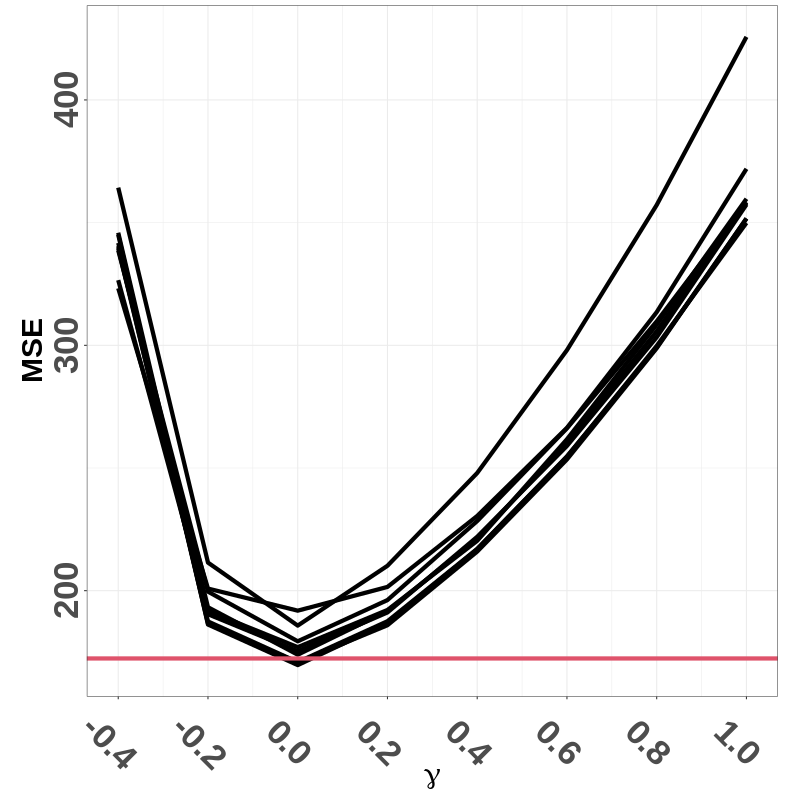}

\includegraphics[scale=0.16]{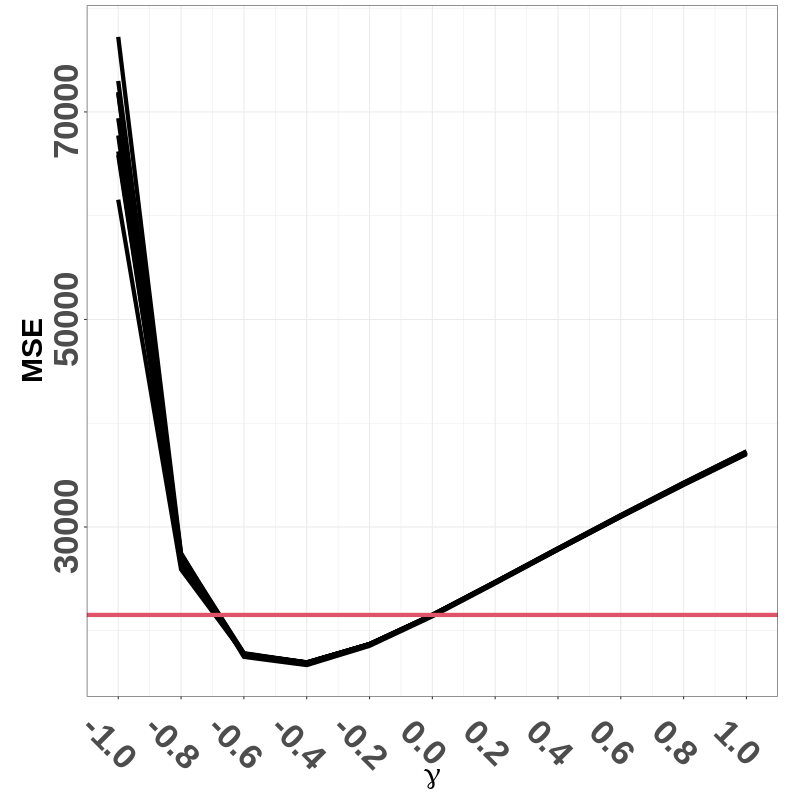}
\includegraphics[scale=0.16]{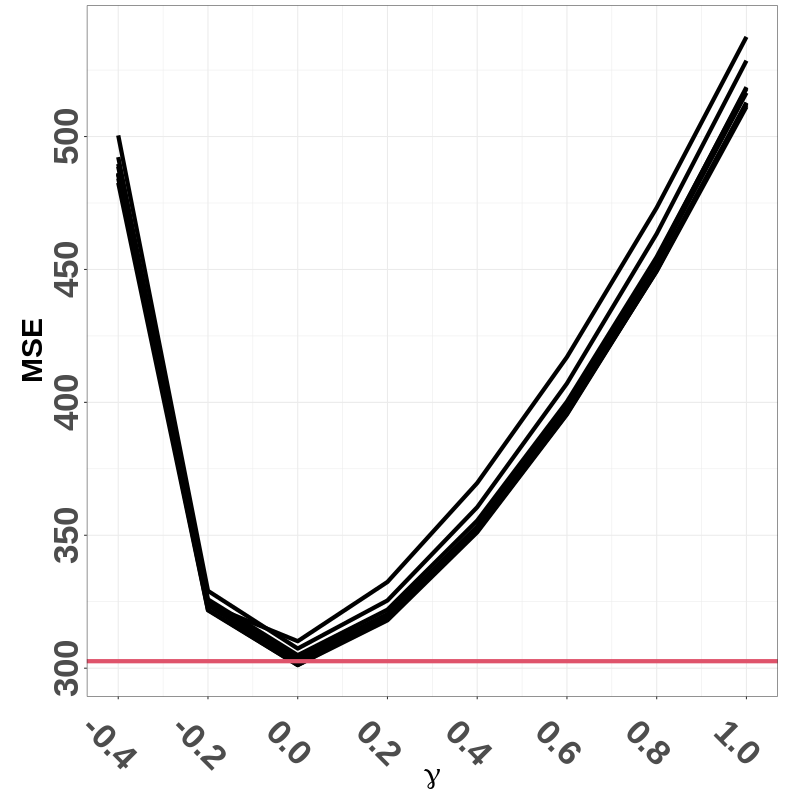}
\includegraphics[scale=0.16]{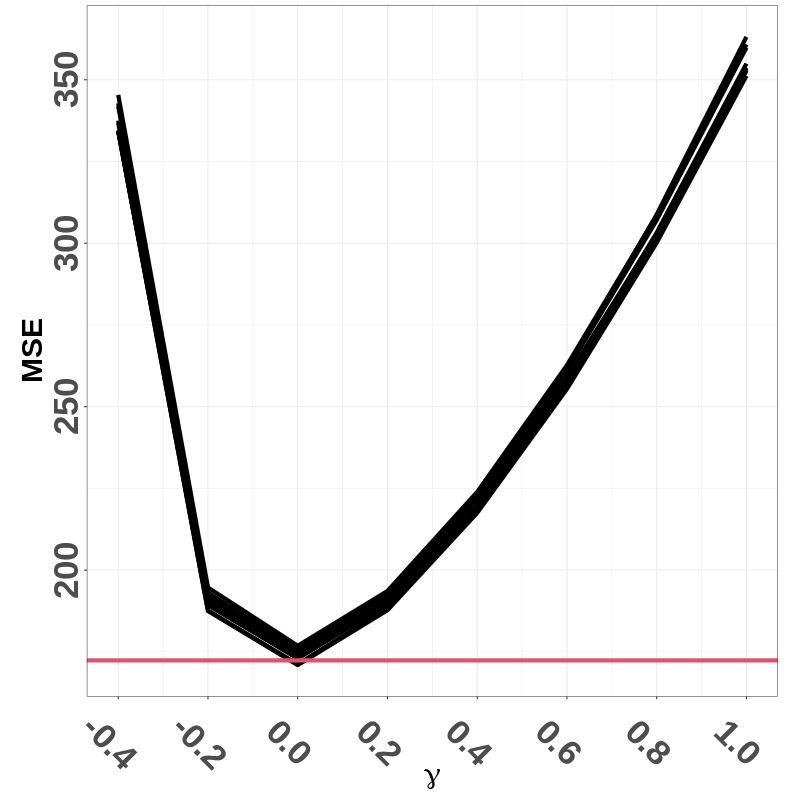}

\caption{
Mean squared error ($\mathrm{MSE}$). 
Rows: $\widehat\theta_1$ (upper) and $\widehat\theta_2=\widehat\theta_3$ (lower). 
Columns: 
LGCP (left), Poisson (middle) and DPP (right). 
The estimates have been obtained using MCCV with $p=0.1,0.2,\ldots,0.9$ and $k=400$. 
The different curves in each plot represent the different choices for $p$.
}
\label{f:constantmse}
\end{figure}

We see that regardless of whether we employ $\widehat\theta_1$ or  $\widehat\theta_2=\widehat\theta_3$, for the Poisson process and the DPP the $\mathrm{MSE}$-optimal choice is $\gamma=0$, i.e.~the classical estimator, and for the LGCP the $\mathrm{MSE}$-optimal choice is $\gamma=-0.4$. Morover, we see that $\widehat\theta_2=\widehat\theta_3$ yields a lower $\mathrm{MSE}$ than $\widehat\theta_1$. 
Note further that the tightness of the different curves in the case of $\widehat\theta_2=\widehat\theta_3$ reflects the variance asymptotics when $k$ tends to infinity; $\widehat\theta_1$ seems more sensitive to the choice of $p$ since the curves are more sparse in the plots  associated to $\widehat\theta_1$. We further note that the choice $\gamma\in[-0.2,0]$ puts us, relatively speaking and  $\mathrm{MSE}$-wise, within a short range of the optimal choice of $\gamma$ for each model. Since the gain of letting $\gamma$ be only slightly smaller than 0 in the LGCP (clustering) is quite large, compared to what we would lose in $\mathrm{MSE}$ in the case of the Poisson process and the DPP (regular), our general suggestion is to choose $\gamma$ slightly smaller than 0, say $\gamma\in[-0.2,0]$. However, if there are clear signs of regularity in an observed pattern then one should naturally choose the classical estimator, i.e.~$\gamma=0$; note that it is generally hard to characterise complete randomness (Poisson process) by means of visual inspection of only one point pattern. If one is convinced that the point pattern comes from a clustered model, then one should clearly choose $\gamma<0$, say, $\gamma$ between $-0.4$ and $0$.

\subsection{Papangelou conditional intensity fitting: hard-core processes}
\label{s:HardCore}
We next turn to one of the most common statistical settings, which is fitting a Papangelou conditional intensity model to data. 
Recall from Section \ref{s:PapangelouEstimation} that in the context of Papangelou conditional intensity-based fitting, our point process learning approach uses the bivariate innovations in \eqref{e:PapangelouInnovation1}.

We here choose to illustrate our approach in the context of Strauss processes (recall
Section \ref{s:ExpFamily}). Since the Poisson process case ($\eta=1$)
reduces to parametric intensity estimation, which has been covered in
Section \ref{s:ConstantIntensity}, to retain tractability, we here focus on the hard-core
process ($\eta=0$). More specifically, we consider a hard-core process $X$ in $W=S$ with Papangelou conditional intensity belonging to the family 
  $$
  \lambda_{\theta}(u;\x)
  =\beta_{\theta'}(u)\1\left\{u\notin\bigcup_{x\in\x}b(x,R)\right\}, 
  \qquad u\in W, 
  \x\in\X, 
  $$
  where $\theta=(\theta',R)\in\Theta=\Theta'\times(0,\infty)$,  $\Theta'\subseteq\R^{l'}$, $l'\geq1$, and the true parameters of $X$ are denoted by $\theta_0=(\theta_0',R_0)$. 
  
  It is noteworthy that the likelihood estimate of $R_0$ is given by \citep[Example 3.17]{VanLieshoutBook}
  $$
\bar R=\min_{x,y\in\x,x\neq y} d(x,y)
$$
  but, to the best of our knowledge, a closed form likelihood estimator for $\theta_0'$ is not available in the literature \citep{VanLieshoutBook}.

\subsubsection{Pseudolikelihood}
Before we proceed, we will have a brief look at the state of the art, namely pseudolikelihood estimation (recall Section \ref{s:PapangelouUnivariateInnovation}). Here we maximise
\begin{align*}
  \theta
  \mapsto&
     \sum_{x\in\x\cap W}
     \log\lambda_{\theta}(x;\x\setminus\{x\})
     -
     \int_{W}
     \lambda_{\theta}(u;\x)
     \de u
     \\
     =&
     \sum_{x\in\x\cap W}
     \log
     \beta_{\theta'}(x)
     +
     \sum_{x\in\x\cap W}
     \log\left(
     \1\left\{x\notin\bigcup_{y\in\x\setminus\{x\}}b(y,R)\right\}
     \right)
     -
     \int_{W\setminus\bigcup_{x\in\x}b(x,R)}
     \beta_{\theta'}(u)
     \de u
\end{align*}
and from the second term we see that the estimate of $R_0$ satisfies $\widehat R_{PL}\in(0,\bar R)$. By fixing $R\in(0,\bar R)$, the second term vanishes and to carry out the estimation of $\theta'$, we may set the gradient of the resulting expression to 0; we see that the model is not identifiable under the pseudolikelihood regime. Assuming that $\beta_{\theta'}(\cdot)$ is such that integration and differentiation may be interchanged, we thus solve
\begin{align*}
  \sum_{x\in\x\cap W}
     \frac{\nabla_{\theta'} \beta_{\theta'}(x)}{\beta_{\theta'}(x)}
     -
     \int_{W\setminus\bigcup_{x\in\x}b(x,R)}
     \nabla_{\theta'}\beta_{\theta'}(u)
     \de u=0\in\R^{l'},
     \quad R\in(0,\bar R),
\end{align*}
which is a vector of univariate innovations set to 0 (estimating equations). 
In particular, when  $\beta_{\theta'}(\cdot)$ is constant, i.e.~$\beta_{\theta'}(\cdot)\equiv\beta\in\Theta'=(0,\infty)$, where $\beta_{\theta_0'}(\cdot)=\beta_0\in\Theta'$, this reduces to setting 
\(
\beta\mapsto
  \sum_{x\in\x\cap W}
     \frac{1}{\beta}
     -
     |W\setminus\bigcup_{x\in\x}b(x,R)| 
\)
to 0, whereby we obtain the estimates 
\begin{align*}
  \widehat\beta_{PL}=&\widehat\beta_{PL}(\widehat R_{PL})=\frac{|W|}{|W\setminus\bigcup_{x\in\x}b(x,\widehat R_{PL})|}
  \widetilde\theta(\x,W),
  \quad
  \widehat R_{PL}\in(0,\bar R)
  ,
\end{align*}
and we see that the former is an adjusted version of the classical intensity estimator $\widetilde\theta$ in \eqref{e:ClassicalEstimator}. 
This estimate makes sense since having a smaller hard core range means that we can squeeze in more points into $W$; note that $\widehat\beta_{PL}(\widehat R_{PL})$ decreases as $\widehat R_{PL}$ increases. 
Pseudolikelihood estimation for a hard core model in $\R^2$ may be practically carried out by means of the function \verb|ppm| in the \textsf{R} package \textsf{spatstat} \citep{BRT15}; the function \verb|ppm| uses the choice $\widehat R_{PL}=\bar R\#\x/(\#\x+1)$. 

\subsubsection{Point process learning}

Turning to our point process learning approach, we here consider a class of test functions 
$h_{\theta}(\cdot)=f(p(1-p)^{-1}\lambda_{\theta}(\cdot))$, where $f:\R\to\R$ is such that $\lim_{x\to0}|f(x)|=\infty$. 
This 
includes e.g.~$f(x)=x^{-\gamma}$, $\gamma>0$. 

Given $\x_i^T,\x_i^V\neq\emptyset$, the innovations in  \eqref{e:PapangelouInnovation1} here become 
\begin{align*}
    \mathcal{I}_{\xi_{\theta}^1}^{h_{\theta}}(W;\x_i^V,\x_i^T)
    =&
       \sum_{x\in\x_i^V\cap W}
       f\left(
       \frac{p\lambda_{\theta}(x;\x_i^T)}{1-p}\right)
       -
       \frac{p}{1-p}
       \int_{W}
       f\left(
       \frac{p\lambda_{\theta}(u;\x_i^T)}{1-p}\right)
       \lambda_{\theta}(u;\x_i^T)
       \de u
    \\
    =&
       \sum_{x\in\x_i^V\cap W}
       f\left(
       \frac{p\beta(x)\1\{x\notin\bigcup_{y\in\x_i^T}b(y,R)\}}{1-p}\right)
    \\
     &-
       \frac{p}{1-p}
       \int_{W\setminus\bigcup_{x\in\x_i^T}b(x,R)}
       \beta(u)
       f\left(
       \frac{p\beta(u)\1\{u\notin\bigcup_{x\in\x_i^T}b(x,R)\}}{1-p}\right)
       \de u
       ,
\end{align*}
where for the right hand side to be finite, we need that $\1\{x\notin\bigcup_{y\in\x_i^T}b(y,R)\}=1$ for each $x\in\x_i^V$, i.e.~$\x_i^V\cap\bigcup_{y\in\x_i^T}b(y,R)=\emptyset$, which is to say that $R$-balls around the points of $\x_i^T$ cannot contain any points of $\x_i^V$. Since $(W\setminus\bigcup_{x\in\x_i^T}b(x,R))\cap\bigcup_{x\in\x_i^T}b(x,R)=\emptyset$, the integral is finite. 
Hence, we see that the innovations are finite only if $R$ belongs to
\begin{align*}
\mathcal{R}
=
\mathcal{R}_p(\{(\x_i^V,\x_i^T)\}_{i=1}^{k})
=&\left\{r>0:\x_i^V\cap\bigcup_{x\in\x_i^T}b(x,r)=\emptyset \text{ for
    all }i\in\mathcal{T}_k\right\}
    \\
    =&\bigcap_{i\in\mathcal{T}_k}\left\{r>0:\x_i^V\cap\bigcup_{x\in\x_i^T}b(x,r)=\emptyset\right\}
    ,
\end{align*}
where 
\begin{align*}
\mathcal{T}_k=&\{i\in\{1,\ldots,k\}:I_i=1\}=\{i\in\{1,\ldots,k\}:1\leq\#\x_i^T\leq\#\x-1\}.
\end{align*}
In other words, the estimate $\widehat R=\widehat R_p(\{(\x_i^V,\x_i^T)\}_{i=1}^{k})$ of the interaction/hard-core range $R_0$ belongs to $\mathcal{R}$ and in the MCCV case we obtain
that
$$
\lim_{k\to\infty}\mathcal{R}_p(\{(\x_i^V,\x_i^T)\}_{i=1}^{k})=(0,\bar R),
$$
i.e.~the upper bound is given by the likelihood estimate of $R_0$.  This suggests a data-driven lower bound for $k$ in the MCCV case: 
sequentially increase $k$ at least until $\mathcal{R}_p(\{(\x_i^V,\x_i^T)\}_{i=1}^{k})=(0,\bar R)$. 

By imposing the restriction that 
$\theta=(\theta',R)\in\Theta'\times\mathcal{R}$, the innovations 
\eqref{e:PapangelouInnovation1} reduce to 
\begin{align}
\label{e:ReducedInnovationHardCore}
\mathcal{I}_{\xi_{\theta}^1}^{h_{\theta}}(W;\x_i^V,\x_i^T)
=&
     \mathcal{I}_{p(1-p)^{-1}\lambda_{\theta}}^{f(\lambda_{\theta}(\cdot))}(W;\x_i^V,\x_i^T)
     \\
     =&
  \sum_{x\in\x_i^V\cap W} f\left(
    \frac{p\beta_{\theta'}(x)}{1-p}\right) - 
  \int_{W\setminus\bigcup_{x\in\x_i^T}b(x,R)} f\left(
    \frac{p\beta_{\theta'}(u)}{1-p}\right) 
    \frac{p\beta_{\theta'}(u)}{1-p} \de u
    ,
    \nonumber
\end{align}
i.e. the loss function for estimating $\theta_0'$ is given by a combination of  $\mathcal{I}_{p(1-p)^{-1}\lambda_{\theta}}^{f(\lambda_{\theta}(\cdot))}(W;\x_i^V,\x_i^T)$, $\theta=(\theta',R)\in\Theta'\times\mathcal{R}$, $i\in\mathcal{T}_k$. 
Since $\mathcal{I}_{\xi_{\theta_1}^1}^{h_{\theta_1}}(W;\x_i^V,\x_i^T)=\mathcal{I}_{\xi_{\theta_2}^1}^{h_{\theta_2}}(W;\x_i^V,\x_i^T)$ does not imply that $\theta_1=\theta_2$ (if $\theta_1=(\theta',R_1)$ and $\theta_2=(\theta',R_2)$ these two innovations are the same
for any $R_1,R_2\in\mathcal{R}$), the loss function $\theta\mapsto\mathcal{I}_{\xi_{\theta}^1}^{h_{\theta}}(W;\x_i^V,\x_i^T)$ is not identifiable for a fixed  $i\in\mathcal{T}_k$. One would typically deal with this by fixing a point estimate $\widehat R$ of $R_0$, most naturally $\widehat R=\widehat R_p(\{(\x_i^V,\x_i^T)\}_{i=1}^{k})=\sup\mathcal{R}$, and then proceed by exploiting the innovations in  \eqref{e:ReducedInnovationHardCore} for the estimation of $\theta'$. However, we have seen that, numerically, this is not necessary when employing any of the loss functions $\mathcal{L}_1$, $\mathcal{L}_2$ and $\mathcal{L}_3$ in \eqref{e:EstFunGeneralMedian}, \eqref{e:EstFunGeneral} and \eqref{e:EstFunGeneralMean}, i.e.~we may let both $R\in\mathcal{R}$ and $\theta'\in\Theta'$ be free parameters to be estimated. In other words, the component-wise unidentifiability seems to not spill over on the loss functions. 

\begin{remark}
Regarding the test function family considered here, note in contrast e.g.~that a test function for which $\lim_{x\to0}f(x)=0$ will instead
minimise the squared/absolute innovations when the hard-core
constraint is violated, which is clearly not what we want here. 
\end{remark}



We next turn to the special case where $\beta_{\theta'}(\cdot)\equiv\beta\in\Theta'=(0,\infty)$ and $\beta_{\theta_0'}(\cdot)=\beta_0\in\Theta'$. Here the loss function terms become 
\begin{align}
\label{e:ReducedInnovationHardCoreConstantBeta}
\widetilde{\mathcal{I}}_{\xi_{\theta}^1}^{h_{\theta}}(W;\x_i^V,\x_i^T)
=&
     \widetilde{\mathcal{I}}_{p(1-p)^{-1}\lambda_{\theta}}^{f(\lambda_{\theta}(\cdot))}(W;\x_i^V,\x_i^T)
     \\
     =&
    I_i
    f\left(
    \frac{p\beta}{1-p}\right)
    \left(\#\x_i^V 
    - 
    \frac{p\beta}{1-p}\left|W\setminus\bigcup_{x\in\x_i^T}b(x,R)\right|\right),
    \quad R\in\mathcal{R}.
    \nonumber
\end{align}
If we impose that $|f(x)|>0$, $x>0$, which e.g.~holds for $f(x)=x^{-\gamma}$, $\gamma>0$, then this is 0 if either $I_i=\1\{1\leq\#\x_i^T\leq\#\x-1\}=0$ or if $\beta$ is given by 
\[
\widehat\beta_i|R
=
\frac{(1-p)\#\x_i^V}
{p|W\setminus\bigcup_{x\in\x_i^T}b(x,R)|}
=
\frac{|W|}{|W\setminus\bigcup_{x\in\x_i^T}b(x,R)|/(1-p)}
\overbrace{
\frac{\widetilde\theta(\x_i^V,W)}{p}
}^{\approx\widetilde\theta(\x,W)}
,
\quad R\in\mathcal{R},
\]
which essentially is equivalent to a CV-based version of $\widehat\beta_{PL}(R)$, $R\in\mathcal{R}$. 
It should be emphasised that $|W\setminus\bigcup_{x\in\x_i^T}b(x,R)|/(1-p)$ is not linear in $p$ (from a distributional point of view) so we expect the choice of $p$ to be of significance here. 
Recalling Theorem \ref{thm:ConstantIntensity}, we see that the estimate obtained by minimising either $\mathcal{L}_1$ in \eqref{e:EstFunGeneralMedian} or $\mathcal{L}_2$ in \eqref{e:EstFunGeneral} here tries to find a pair $(\beta,R)$ such that, on average (in a median sense in the former case and in a mean sense in the latter case), we estimate $\#\x_i^V$ as well as possible by means of $\widehat\beta_i|R$, $i=1,\ldots,k$.

\begin{remark}

As an alternative to optimising with respect to $\beta$ and $R$ jointly, one could consider the profile alternative where one fixes $R\in\mathcal{R}$, e.g.~$R=\sup\mathcal{R}$. When this is the case, minimising  \eqref{e:EstFunGeneralMedian} yields the estimate 
\(
\widehat\beta 
=
\med\{\widehat\beta_i|R : 
i \in \mathcal{T}_k\}; 
\) 
cf.~the estimate in expression \eqref{e:IndividualEstimatesMedian}. 
Similarly, the estimates obtained using $\mathcal{L}_2$ and $\mathcal{L}_3$ in \eqref{e:EstFunGeneral} and \eqref{e:EstFunGeneralMean} are here given by 
\(
\widehat\beta 
=
\frac{1}{\#\mathcal{T}_k}
\sum_{i\in\mathcal{T}_k}
\widehat\beta_i|R; 
\) 
cf.~the estimate in  \eqref{e:IndividualEstimatesMean}.

\end{remark}

\subsubsection{Numerical evaluations}

We next evaluate our approach numerically in the case where $\beta_{\theta'}(\cdot)\equiv\beta\in\Theta'=(0,\infty)$ and $\beta_{\theta_0'}(\cdot)=\beta_0\in\Theta'$. More specifically, we consider 100 realisations of a hard core model on $W=[0,1]^2$ with parameters $R_0=0.05$ and $\beta_0=100$; this particular choice of parameters, which give rise to an average point count of 58.51, was made completely arbitrarily.

We here consider the loss functions $\mathcal{L}_1$, $\mathcal{L}_2$ and $\mathcal{L}_3$ in \eqref{e:EstFunGeneralMedian}, \eqref{e:EstFunGeneral} and \eqref{e:EstFunGeneralMean}, in combination with \eqref{e:ReducedInnovationHardCoreConstantBeta}, $I_i=\1\{1\leq\#\x_i^T\leq\#\x-1\}$ and MCCV, where $k=400$ and $p \in \{ 0.05, 0.07, 0.1, 0.15, 0.2, 0.3, 0.5, 0.7, 0.9 \}$. We further focus on the test functions $f(x)=1/x$ and $f(x)=1/\sqrt{x}$, $x\in\R$. 
To compare our approaches to the state of the art, we additionally carry out pseudolikelihood estimation. 

In Figure \ref{f:Strauss} we report 
the absolute bias, the variance and 
the mean squared error ($\mathrm{MSE}$) for each estimator; 
as anticipated, the choice of $p$ plays a significant role here. 
All estimators, irrespective of the choice of $p$, yield estimated biases of $R_0$ which are close to 0; this is also the the case for the pseudolikelihood estimator. 
More interestingly, our point process learning approach does not require that a fixed estimate of $R_0$ is plugged into \eqref{e:ReducedInnovationHardCoreConstantBeta} in order to obtain a good estimate of $\beta_0$. In the case of $\beta_0$, compared to the pseudolikelihood estimator ($\mathrm{MSE}=410.26$), we can find choices for $p$ such that we always achieve either a smaller bias ($p=0.07$ with $f(x)=1/\sqrt{x}$, using either $\mathcal{L}_1$ or $\mathcal{L}_2$, and $p=0.1$ with $f(x)=1/x$, using $\mathcal{L}_2$), variance (essentially any $p$; note that it decreases with $p$) or $\mathrm{MSE}$. The superior performance in $\mathrm{MSE}$ with respect to pseudolikelihood estimation holds for $f(x)=1/\sqrt{x}$ with $p\in\{0.05,0.07,0.1\}$ and either of $\mathcal{L}_1$ or $\mathcal{L}_2$; it is minimised with $p=0.07$ and $\mathcal{L}_2$, giving $\mathrm{MSE}=225.14$. Similarly, when
$f(x)=1/x$ this is true for  $p\in\{0.07,0.1,0.15\}$ and either of $\mathcal{L}_1$ or $\mathcal{L}_2$; it is minimised with $p=0.1$ and $\mathcal{L}_2$, yielding $\mathrm{MSE}=202.23$. Hence, it seems that $\mathcal{L}_1$ and $\mathcal{L}_2$, which perform next to identically, are the preferred loss functions here. Finally, we have observed that increasing $\beta_0$, while keeping $R_0=0.05$ fixed, results in the $\mathrm{MSE}$-optimal $p$ being slightly smaller; $p=0.07$ in combination with $\mathcal{L}_2$ still performs better than pseudolikelihood estimation in terms of $\mathrm{MSE}$.

\begin{figure}[!hbtp]
    \centering
\includegraphics[scale=0.15]{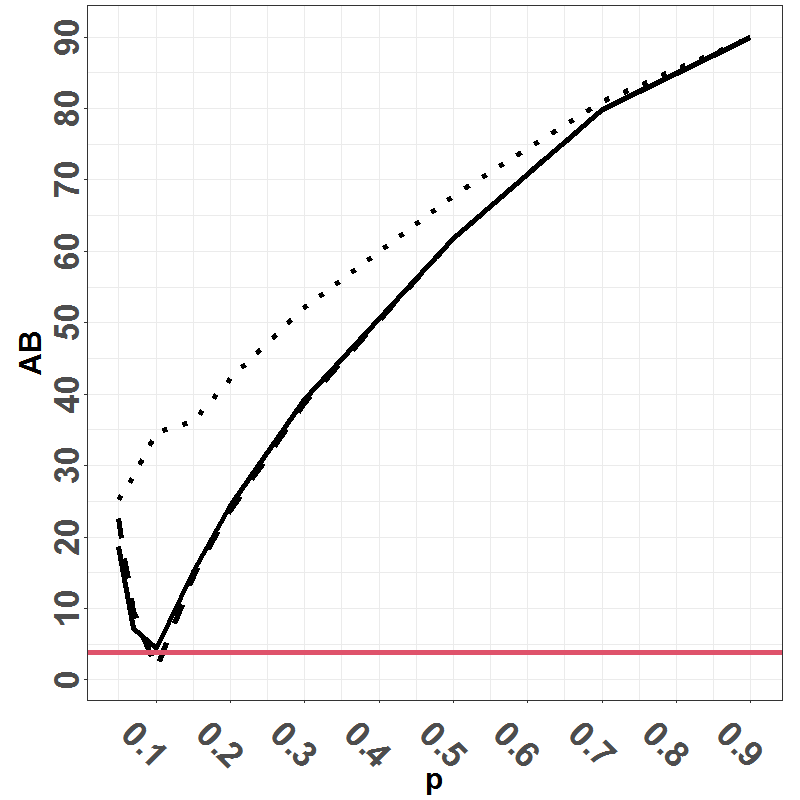}
\includegraphics[scale=0.15]{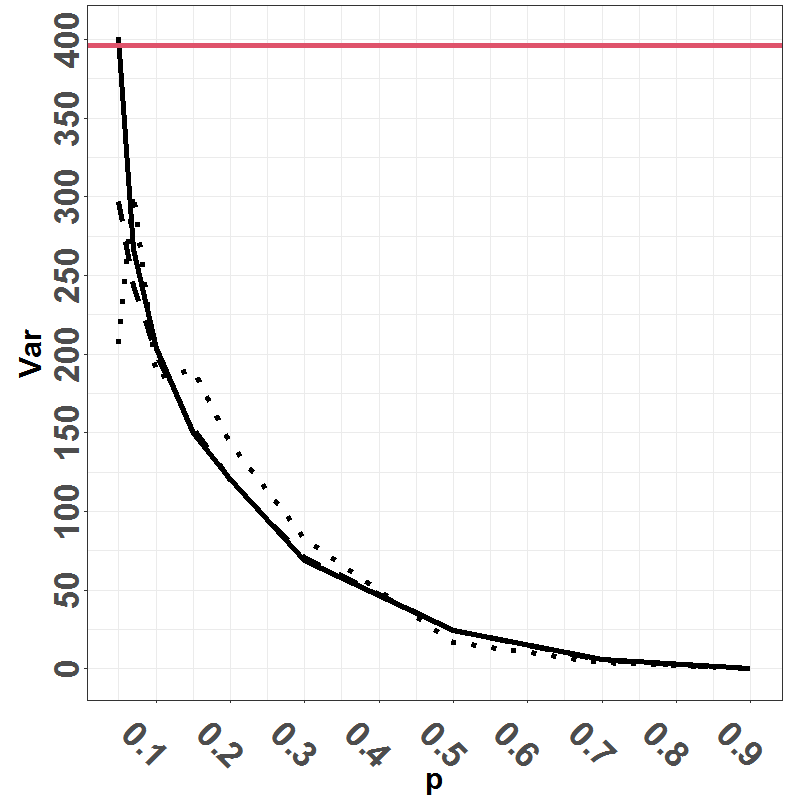}
\includegraphics[scale=0.15]{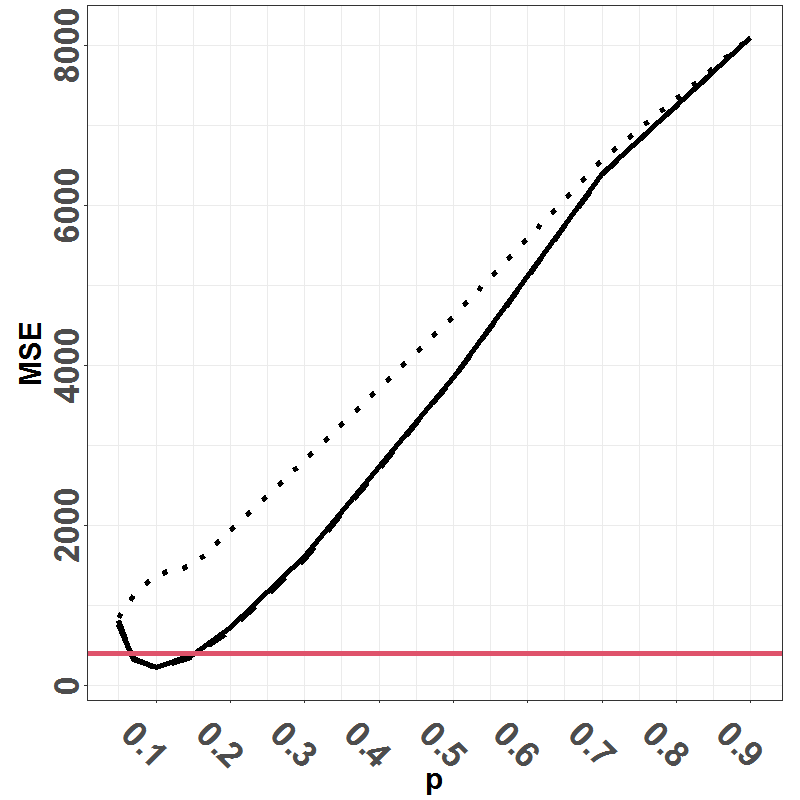}

\includegraphics[scale=0.15]{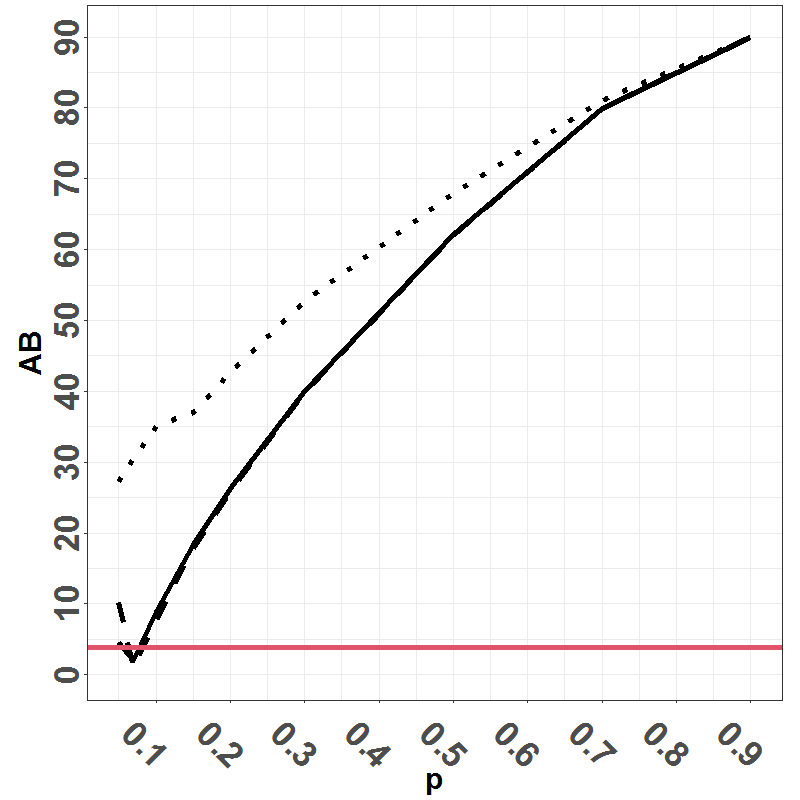}
\includegraphics[scale=0.15]{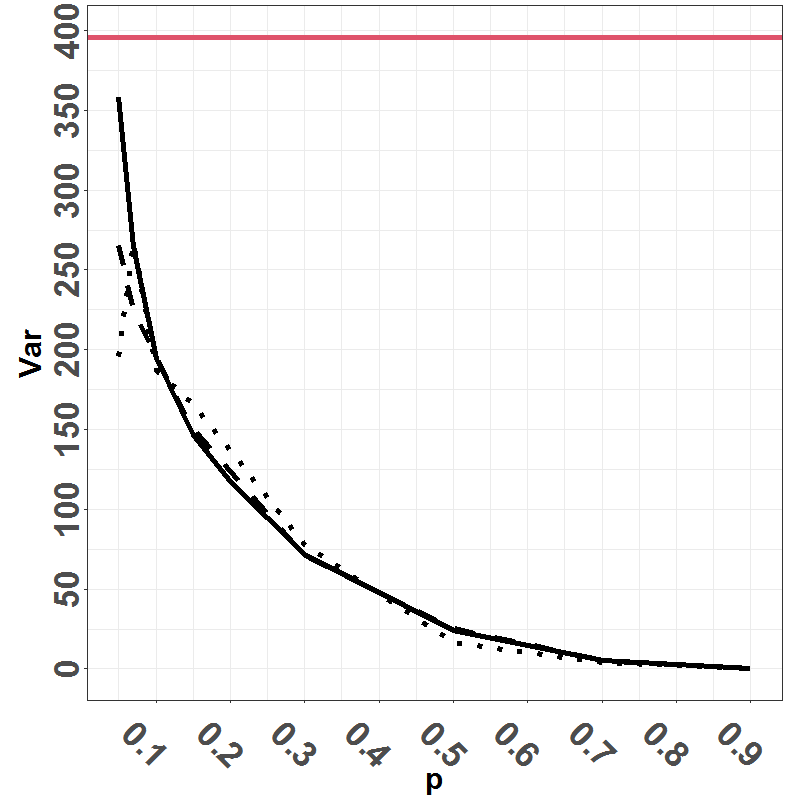}
\includegraphics[scale=0.15]{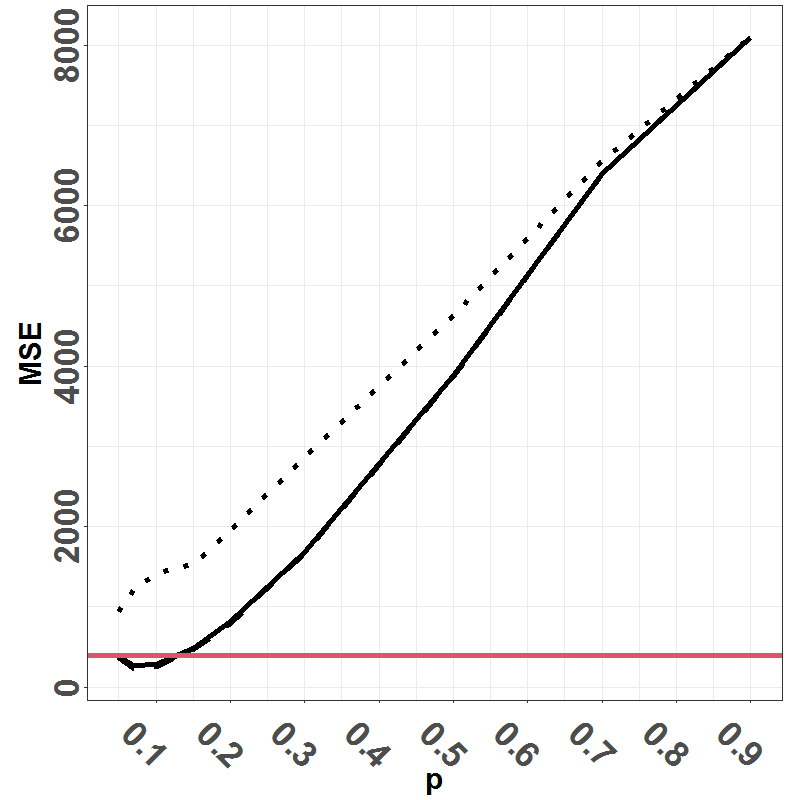}
    \caption{
    Parameter estimation results based on 100 realisations of a hard core model on $W=[0,1]^2$ with parameters $R_0=0.05$ and $\beta_0=100$. Loss functions:  $\mathcal{L}_1$ (solid curves, --), $\mathcal{L}_2$ (dashed curves, $---$) and $\mathcal{L}_3$ (dotted curves, $\cdots$), in combination with MCCV, where $p \in \{ 0.05, 0.07, 0.1, 0.15, 0.2, 0.3, 0.5, 0.7, 0.9 \}$ and $k=400$, and the test functions $f(x)=1/x$ (top row) and $f(x)=1/\sqrt{x} $ (bottom row). 
    The red lines represent pseudolikelihood estimation. 
    From left to right, the plots correspond to absolute bias ($\mathrm{AB}$), variance ($\mathrm{Var}$), and mean squared error ($\mathrm{MSE}$), respectively. 
    }
    \label{f:Strauss}
\end{figure}

\subsection{Non-parametric kernel intensity estimation}
\label{s:Kernel}

We next turn to non-parametric intensity estimation. Recalling Section \ref{s:NonParametrics}, we here consider a non-parametric intensity estimator
$\widehat\rho_{\theta}(u,\y)$, $u\in W$, $\y\in\X$, $\theta\in\Theta$, and our aim is to choose the tuning parameter $\theta$ in some optimal way, using an observed point pattern $\x\subseteq W$. 

From Section \ref{s:NonParametricsCV} we have that the bivariate innovations here are given by
\begin{align}
\label{e:InnovationKernel}
      \mathcal{I}_{\xi_{\theta}^1}^{h_{\theta}}(W;\x_i^V,\x_i^T)
        =&
        \mathcal{I}_{p(1-p)^{-1}\widehat\rho_{\theta}}^{h_{\theta}}(W;\x_i^V,\x_i^T)
        \\
        =&
        \sum_{x\in\x_i^V\cap W}
        h_{\theta}(x;\x_i^T)
      -
        \frac{p}{1-p}
        \int_{W}
        h_{\theta}(u;\x_i^T)
        \widehat\rho_{\theta}(u;\x_i^T)
        \de u
        ,
        \quad i=1,\ldots,k,
        \nonumber
\end{align}
and the tuning parameter selection/estimation may be carried out using
these innovations in e.g.~any of the loss functions
\eqref{e:EstFunGeneralMedian}, \eqref{e:EstFunGeneral} and
\eqref{e:EstFunGeneralMean}.  Aside from choosing some suitable
estimator $\widehat\rho_{\theta}$ and CV parameters $p$ and $k$, the
crucial choice to be made here is clearly the test function. Motivated
by \citet{cronie2018bandwidth}, we here consider the form
$h_{\theta}(u,\x_i^T)=f(p\widehat\rho_{\theta}(u;\x_i^T)/(1-p))$, for
some test function $f\geq0$. 
This results in 
\begin{align}
  \label{e:NonParam}
  \widetilde{\mathcal{I}}_{\xi_{\theta}^1}^{h_{\theta}}(W;\x_i^V,\x_i^T)
  =&
     I_i
     \left(
     \sum_{x\in \x_i^V}f\left(\frac{\widehat\rho_{\theta}(x,\x_i^T)p}{1-p}\right)
     -
     \int_W f\left(\frac{\widehat\rho_{\theta}(u,\x_i^T)p}{1-p}\right)
     \frac{\widehat\rho_{\theta}(u,\x_i^T)p}{1-p}
     \de u
     \right)
     ,
\end{align}
where $I_i=\1\{1\leq\#\x_i^V\leq \#\x-1\}$, $i=1,\ldots,k$.

\subsubsection{Kernel intensity estimation}
In what follows, we will focus on kernel estimation for point processes in $\R^d$; see Section \ref{s:NonParametrics} for details.  More specifically, we will focus on optimal selection of
the bandwidth $\theta\in\Theta=(0,\infty)$, given some point pattern 
$\x\subseteq W\subseteq\R^d$. 
Our main objective here is to study what effect our point process learning approach has on bandwidth
selection and to do so we will use the approach of  \citet{cronie2018bandwidth} as benchmark since, generally speaking, it outperformed its
predecessors/competitors. 

\begin{remark}
It should be emphasised that we may normalise intensity estimators to become density estimators, so the conclusions below apply equally well to density estimation (in the context of generalised random sampling).  
\end{remark}

Recall from Section
\ref{s:NonParametrics} that \citet{cronie2018bandwidth} conjectured
that \eqref{e:CvL} with $f(x)=1/x$ results in \eqref{e:CvL} being a
convex function, when using a Gaussian kernel and no edge
correction, i.e.~$w_{\theta}(u,x)\equiv1$. Assuming that their conjecture is
true, also $\mathcal{L}_2(\theta)$ in combination with \eqref{e:NonParam} would be convex under these conditions
and consequently there would be a global minimiser. 
Hence, we here consider a Gaussian kernel estimator, using no edge
correction when we select the bandwidth, and let the test function be
given by
$$
h_{\theta}(u,\x_i^T)
=
f(p\widehat\rho_{\theta}(u;\x_i^T)/(1-p))
=
1/(p\widehat\rho_{\theta}(u;\x_i^T)/(1-p)). 
$$
This
puts us in a setting which is equivalent to that in
\citet{cronie2018bandwidth} and, consequently, we may explicitly
analyse how our point process learning framework improves the performance of the approach of
\citet{cronie2018bandwidth} by simply comparing the performances of
the two approaches. We further note that the above choice of test function
is a natural and convenient one for our purposes but 
many other choices may perform equally well/better. Taking the observations in Section \ref{s:ConstantIntensity} into account, we will also look closer at the choice
$$
h_{\theta}(u,\x_i^T)
=
f(p\widehat\rho_{\theta}(u;\x_i^T)/(1-p))
=
1/\sqrt{p\widehat\rho_{\theta}(u;\x_i^T)/(1-p)}. 
$$

\begin{remark}
  As an aside, it should be noted that the Poisson process likelihood
  leave-one-out CV approach (see Section \ref{s:NonParametrics} for details) can be straightforwardly altered such that
  any of our CV-based approaches would replace the leave-one-out
  part in it. One would maximise
  \( \theta \mapsto \frac{1}{k}
     \sum_{i=1}^k
     I_i( \sum_{x\in
    \x_i^V}\log(\widetilde\rho_{\theta}(x,\x_i^T)) - \int_W
  \widetilde\rho_{\theta}(u,\x_i^T)\de u ) \) to select the bandwidth.
\end{remark}


To evaluate the proposed setup numerically, we let the study region be
given by
$W=[0,1]^2$ and we consider 100 realisations of each of three different models, which
constitute a subset of the models evaluated in \citet{cronie2018bandwidth}. 
The models considered, which represent aggregation, complete randomness 
and inhibition are the following.
\begin{itemize}
    
  \item Recalling Section \ref{s:Cox}, we here consider a log-Gaussian Cox process (LGCP) with random
    intensity $\Lambda(u)=\exp\{Z(u)\}$,
    $u=(u_1,u_2)\in W=[0,1]^2$, where $Z$ is a Gaussian
    random field with mean function $u=(u_1,u_2)\mapsto 10+80u_1$,
    $u\in W$, and exponential covariance function
    $(u,v)\mapsto \sigma^2\exp\{-r\|u-v\|_2\}$, $u,v\in W$, with $(\sigma^2,r)=(2\log5,50)$. 
    Consequently, the intensity function is given
    by $\rho(u)=(10+80u_1)\e^{\sigma^2/2}$, $u\in W$, and the expected number of points in $W$ is 250. 
    

    \item We consider a linear trend Poisson process (recall Section \ref{s:Poisson}) with intensity function $\rho(u)=\rho(u_1,u_2)=10+a u_1$, $u\in W=[0,1]^2$, where $a=480$, so that the expected number of points is given by 250.

    
  \item We consider a homogeneous determinantal point process (DPP) with kernel
    $(u,v)\mapsto \sigma^2\exp\{-r\|u-v\|_2\}$,
    $u,v\in W=[0,1]^2$, where $(\sigma^2,\beta)=(250,50)$; recall Section \ref{s:DPP}. 
    By
    applying independent thinning to it, using the retention
    probability function $u=(u_1,u_2)\mapsto(10+80u_1)/90$, $u\in W$,
    we obtain an inhomogeneous determinantal point process with
    intensity function $\rho(u)=\rho(u_1,u_2)=\sigma ^2(10+80u_1)/90$, $u\in W$, and expected total point count given by 138.9. 
    
    
\end{itemize}

As noted above, we let the kernel $\kappa$ be a Gaussian
kernel and 
we let $w_{\theta}(u,x)\equiv1$, i.e.~we use no edge correction, when we carry out the bandwidth selection. 
Then, when we
generate the final intensity estimates based on the selected
bandwidths, $\widehat\theta$, we use the local edge correction
$w_{\widehat\theta}(u,x)=\int_W\kappa_{\widehat\theta}(u-x)\de x$. 
Moreover, to measure the performance, for each
model and bandwidth selection approach we report estimates of the integrated absolute bias ($\mathrm{IAB}$), 
the integrated squared bias ($\mathrm{ISB}$),
the integrated variance ($\mathrm{IV}$) 
and the
mean integrated squared error ($\mathrm{MISE}$):
\begin{align*}
\mathrm{IAB}
=& \int_W|\widehat\E[\widehat\rho_{\widehat\theta}(u,X)]-\rho(u)|\de u,
\\
\mathrm{ISB}
=& \int_W(\widehat\E[\widehat\rho_{\widehat\theta}(u,X)]-\rho(u))^2\de
u,
\\
\mathrm{IV}
=& 
\int_W\widehat\Var(\widehat\rho_{\widehat\theta}(u,X))\de u,
\\
\mathrm{MISE}
=& \mathrm{ISB} + \mathrm{IV}.
\end{align*}
For a given model, these estimates are obtained by averaging with
respect to the outcomes for the 100 simulated realisations. 

What we will show here is that, in terms of $\mathrm{MISE}$, our new approach quite substantially outperforms the state
of the art, which here is represented by the approach of
\citet{cronie2018bandwidth}. 
What we specifically do is to considerate the following settings:
\begin{itemize}
    \item For each model, we use each of the loss functions \eqref{e:EstFunGeneralMedian}, \eqref{e:EstFunGeneral} and  \eqref{e:EstFunGeneralMean} with $n=1$, the innovations \eqref{e:InnovationKernel} and the test functions $h_{\theta}(u,\x_i^T) = f(p\widehat\rho_{\theta}(u;\x_i^T)/(1-p))$, where $f(x)=1/x$; recall \eqref{e:NonParam}. 
    In the case of MCCV, we consider the sequence $p=0.1, 0.3, 0.5, 0.7, 0.9$ and $k=400$, whereas in the case of multinomial CV we evaluate the performance for the sequence $k=2,3,\ldots,10$. 
    The numerical output for the MCCV case can be found in Figure \ref{f:Bandwidth} and the output for the multinomial CV case can be found in Figure \ref{f:kFold}.
    
    \item For each model, we also consider the approach of \citet{cronie2018bandwidth}, i.e.~the loss function \eqref{e:CvL} with $f(x)=1/x$, which has been implemented in the function \verb|bw.CvL| in the \textsf{R} package \textsf{spatstat} \citep{BRT15}. The numerical results can be found in Table \ref{tab:CvL}.
    
    \item Finally, to shed some light on the choice of test function, we evaluate the loss function $\mathcal{L}_2(\theta)$ in combination with \eqref{e:NonParam}, i.e.~\eqref{e:EstFunGeneral} with $n=1$, using the test function $f(x)=1/\sqrt{x}$ as well as MCCV with $p=0.1, 0.3, 0.5, 0.7, 0.9$ and $k=400$. The corresponding numerical results can be found in Figure \ref{f:Pearson}. 
\end{itemize}

First of all, regarding the choice $k=400$ in the MCCV case, it should be stated here that we have observed that increasing $k$ beyond 100 essentially has little/no effect on the chosen performance measures so our general suggestion is to fix
$k\geq100$. Choosing $k$ is clearly a question of computational cost and as a computationally cheaper alternative in the MCCV case, one could
instead sequentially increase $k$ and stop once the loss function shows signs
of converges (theoretically, the convergence is guaranteed by the law
of large numbers). 

Most importantly, comparing Figure \ref{f:Bandwidth}, Figure \ref{f:Pearson} and Figure \ref{f:kFold} with Table \ref{tab:CvL}, the first thing we note is that regardless of the choice of $p$, $k$ and model, all of the point process learning approaches outperform the approach of \citet{cronie2018bandwidth} in terms of $\mathrm{MISE}$. 
Although the approach of \citet{cronie2018bandwidth} performs slightly better in terms of bias, it performs comparatively poorly in terms of variance, which is consequently the reason for its higher $\mathrm{MISE}$; it is worth emphasising that it is precisely the lower variance which ensures that the \citet{cronie2018bandwidth} approach outperforms its predecessors \citep{cronie2018bandwidth,Moradi2019}, e.g.~the bandwidth selection approach in \eqref{e:PPL}. We do however hypothesise that if $p\to0$ in the MCCV case (possibly in combination with $k\to\infty$), or $k\to\infty$ in the multinomial CV case (e.g. in combination with $\mathcal{L}_3$), we would reach the same bias level as the \citet{cronie2018bandwidth} approach, but still with a significantly lower $\mathrm{MISE}$. 
In conclusion, 
the approach of
\citet{cronie2018bandwidth} is improved substantially by framing it within our point process learning approach. 

Having compared the performances of the new approaches to the state of the art, we next look closer at the different choices we have made. 
First of all, it seems that $\mathcal{L}_3$ favours a lower bias over a lower variance/$\mathrm{MISE}$, whereas $\mathcal{L}_2$ favours the opposite; $\mathcal{L}_1$ seems to offer some middle-ground between the two. 
We further see in Figure \ref{f:Bandwidth} and Figure \ref{f:Pearson} that, in the case of MCCV, $p\in[0.5,0.7]$ tends to be a safe/good choice, which balances the trade-off between bias
and variance, irrespective of the degree of clustering/inhibition of the underlying model. 
Comparing the two associated test functions, i.e.~$f(x)=1/x$ (Figure \ref{f:Bandwidth}) and $f(x)=1/\sqrt{x}$ (Figure \ref{f:Pearson}), we see that the latter reduces the bias with respect to the former, but at the cost of increasing the variance and, consequently, $\mathrm{MISE}$. 
Moreover, it seems that the performance of multinomial CV in terms of $\mathrm{MISE}$ is the best when $k=2$ (see Figure \ref{f:kFold}), which is equivalent to considering MCCV with $p=0.5$ and $k=1$. Since multinomial CV overall performs only slightly poorer 
than the MCCV approach with $p=0.5$ and $k=400$, irrespective of the choice of $k$, we draw the conclusion that the most relevant aspect of the choice of CV setting in the bandwidth selection context is that we split the data roughly in half. 

Turning to the computational aspects, we always have to compute $k$ innovations, irrespective of whether we use $k$-fold multinomial CV or MCCV. This e.g.~means that 2-fold MCCV is roughly 400 times faster than MCCV with $k=400$ and $p=0.5$. It should be stated, however, that parallelisation may reduce this speed difference by roughly as many times as there are cores to be accessed. Using a standard laptop, we have experienced that for a single pattern of moderate size the speed difference between the two approaches is barely noticeable.  
However, since multinomial CV is faster than MCCV, it would be one's go-to method if computational aspects are the main priority, whereas MCCV is the go-to method if precision is what one is going for.



The fact that in the case of clustered models the performance in terms of MISE is improved when the training sets have few points, i.e.~when $p$ is large, corresponds precisely with observations made in \citet{Moradi2019}. We
hypothesise that the reason for clustered processes requiring more
thinning is that here the points need to be sufficiently sparse for the approach to capture the general trends of the underlying intensity.  

\begin{figure}[!htpb]
\centering
\includegraphics[scale=0.16]{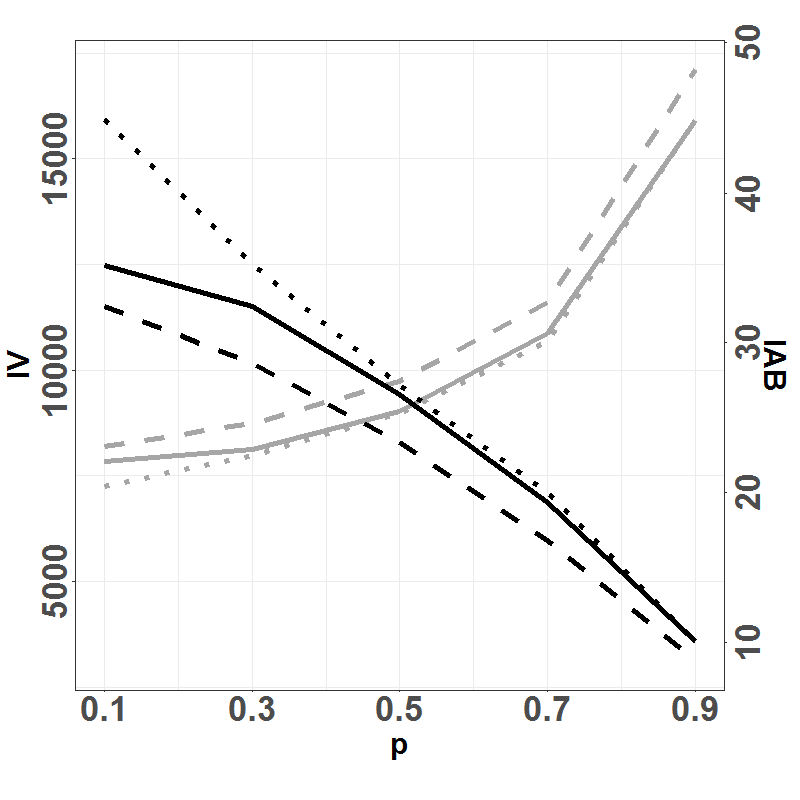}
\includegraphics[scale=0.16]{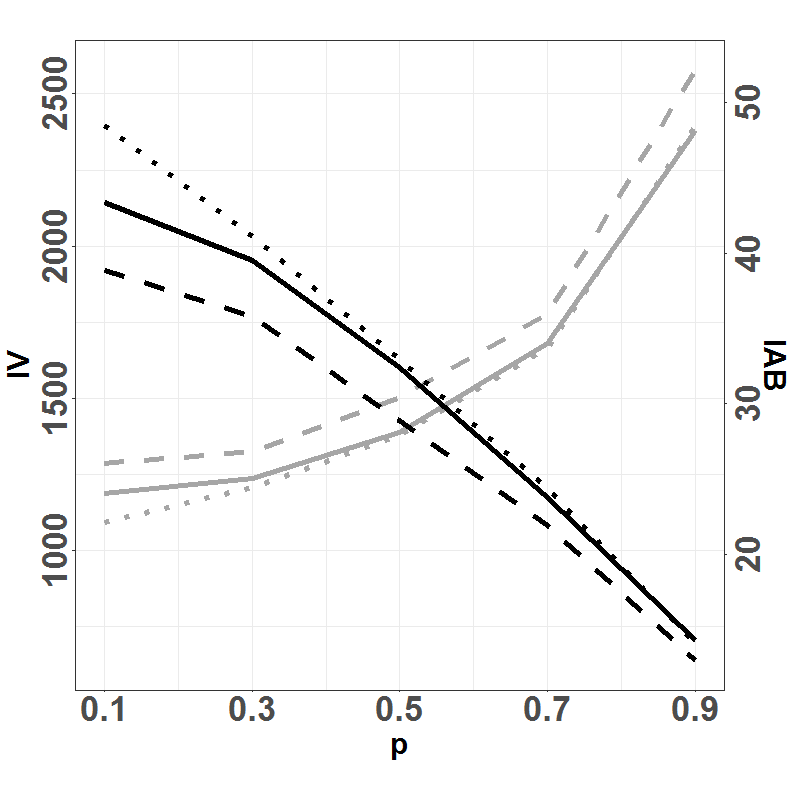}
\includegraphics[scale=0.16]{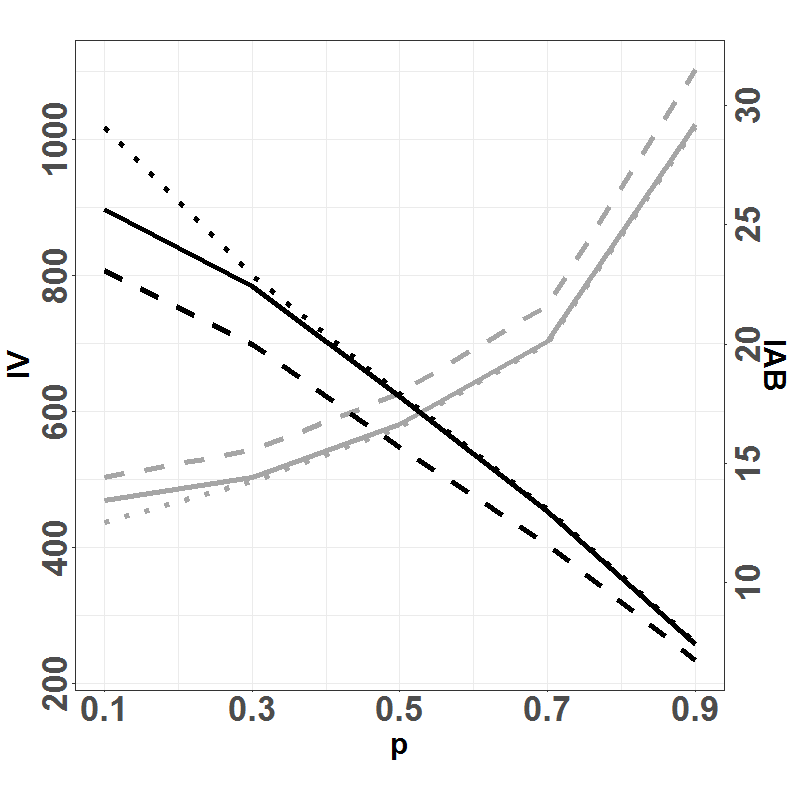}

\includegraphics[scale=0.16]{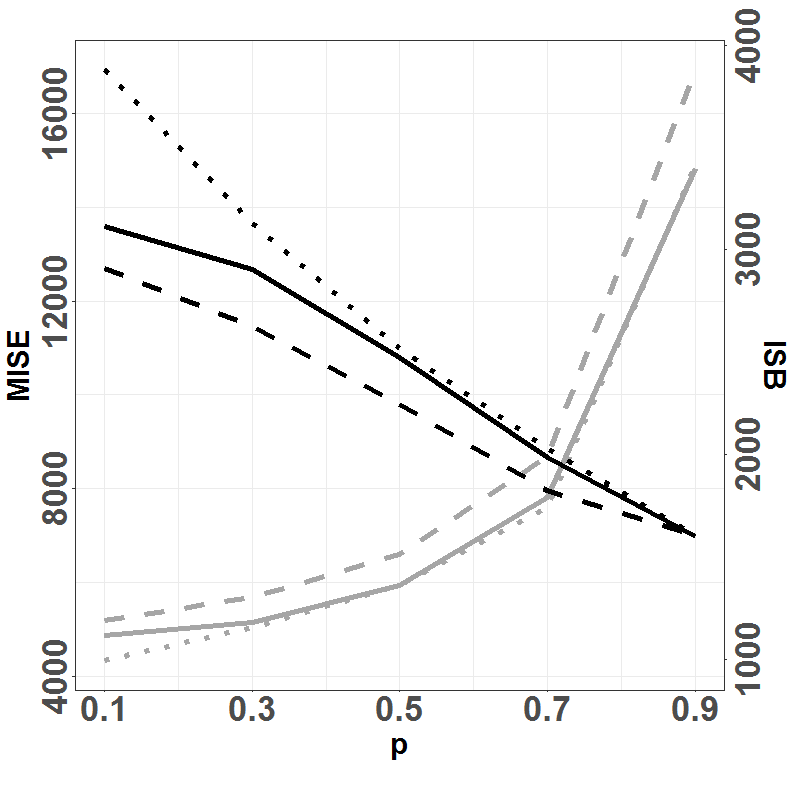}
\includegraphics[scale=0.16]{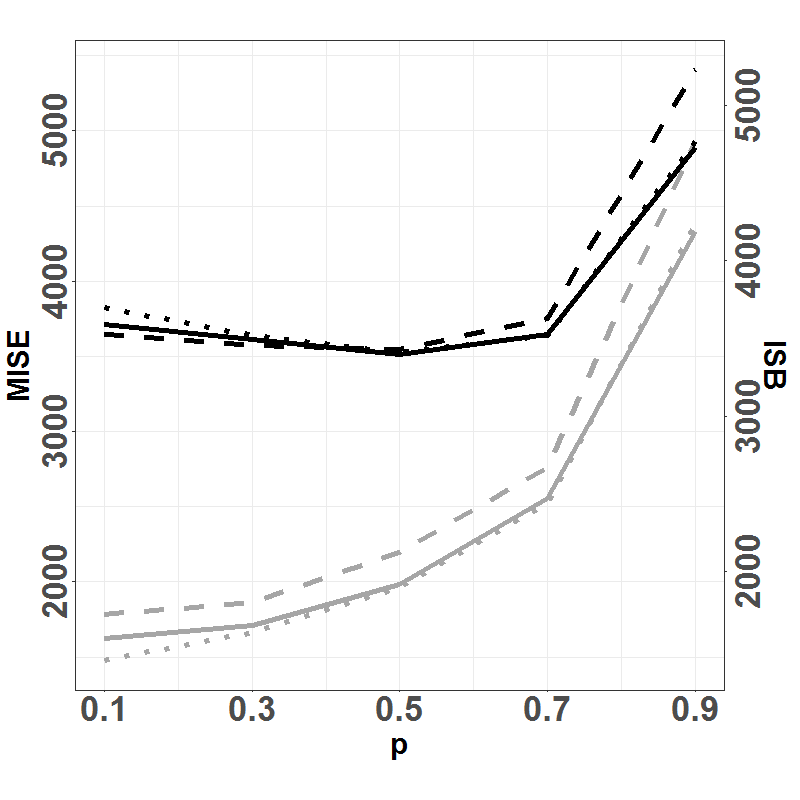}
\includegraphics[scale=0.16]{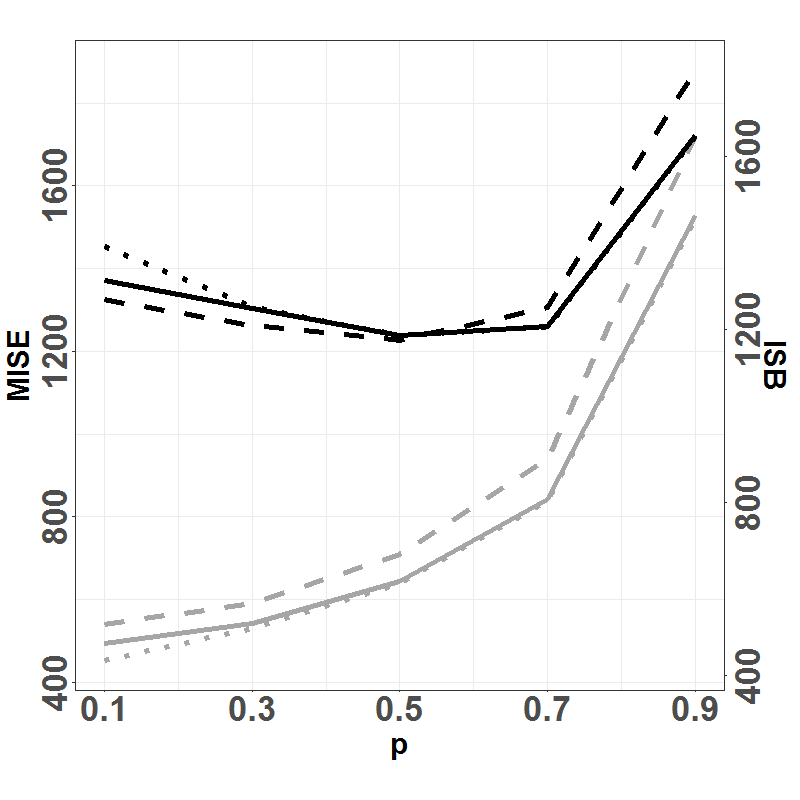}

\caption{
Performance of the loss functions $\mathcal{L}_1$ (solid curves, --), $\mathcal{L}_2$ (dashed curves, $---$) and $\mathcal{L}_3$ (dotted curves, $\cdots$), using MCCV with $p=0.1,0.3,0.4,0.7,0.9$ and $k=400$ together with the test function $f(x)=1/x$. 
Columns: 
LGCP (left), Poisson (middle) and DPP (right). 
Top row: $\mathrm{IAB}$ (grey curve, right axis) and $\mathrm{IV}$ (black curve, left axis). 
Bottom row: $\mathrm{ISB}$ (grey curve, right axis) and $\mathrm{MISE}$ (black curve, left axis). 
}
\label{f:Bandwidth}
\end{figure}

\begin{figure}[!htpb]
 \centering
\includegraphics[scale=0.16]{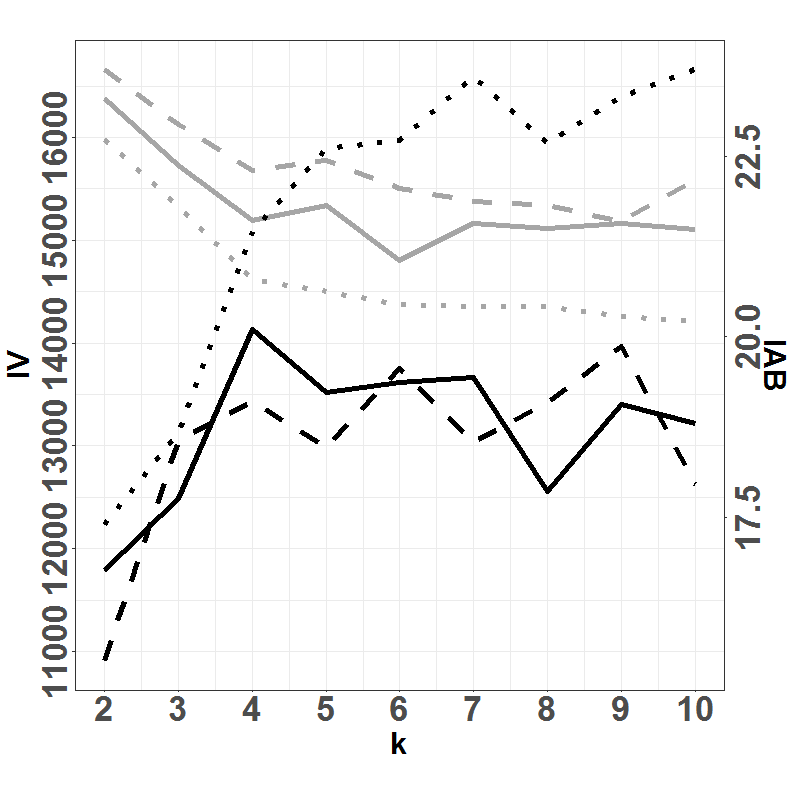}
\includegraphics[scale=0.16]{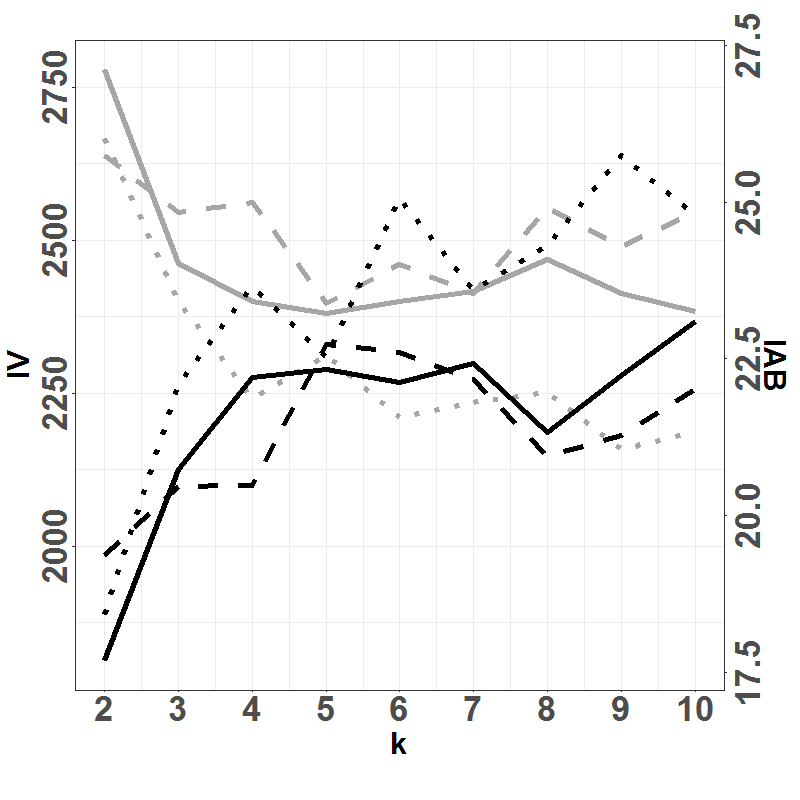}
\includegraphics[scale=0.16]{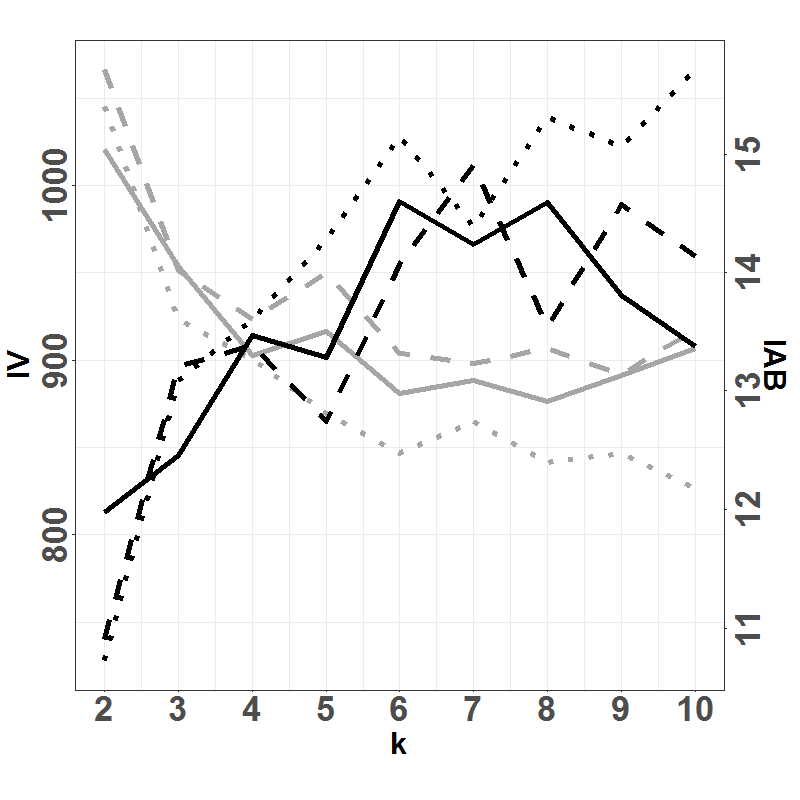}

\includegraphics[scale=0.16]{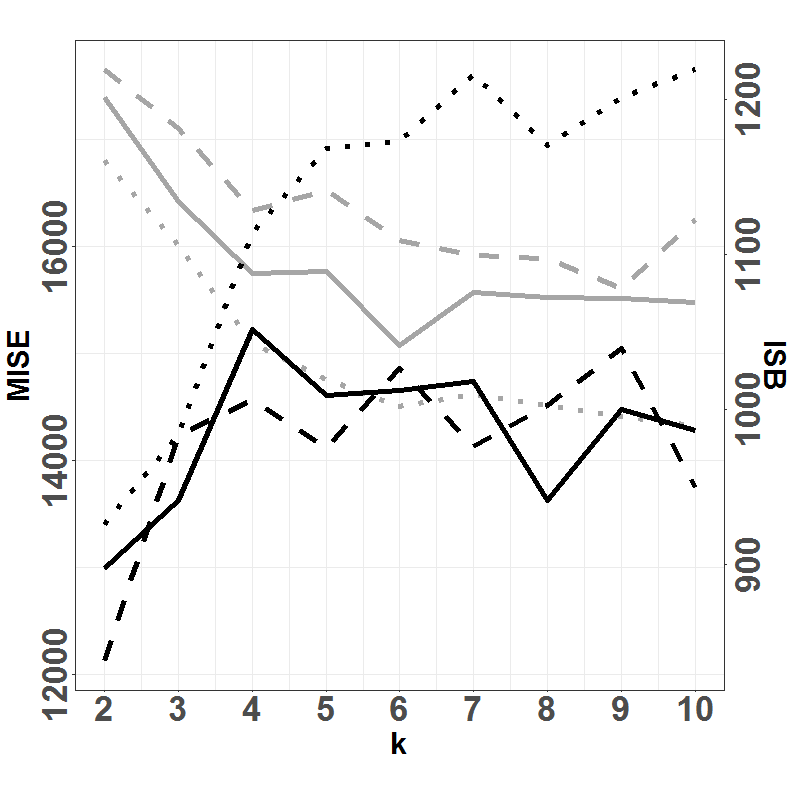}
\includegraphics[scale=0.16]{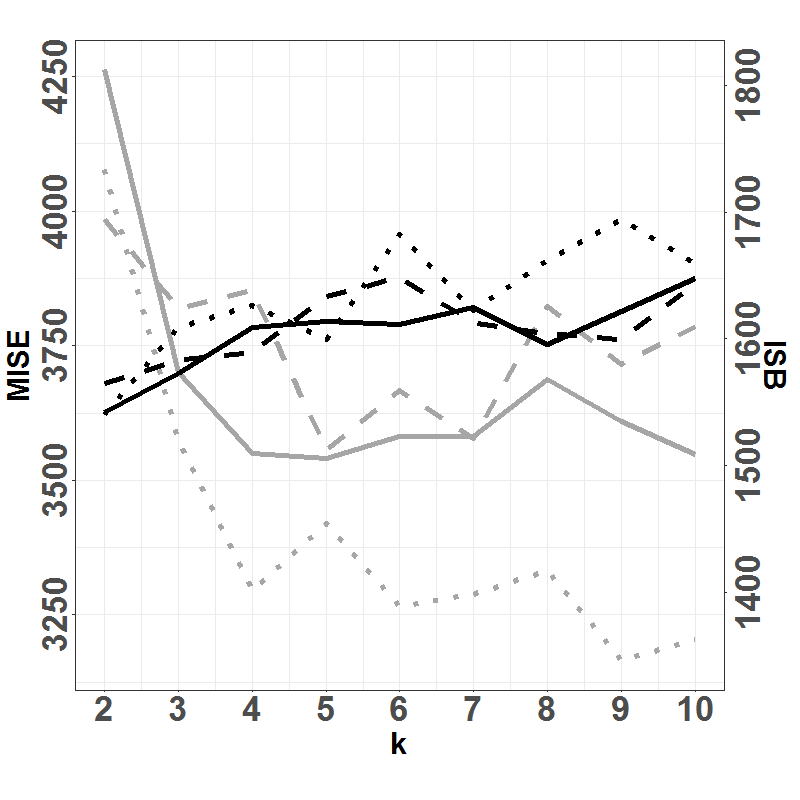}
\includegraphics[scale=0.16]{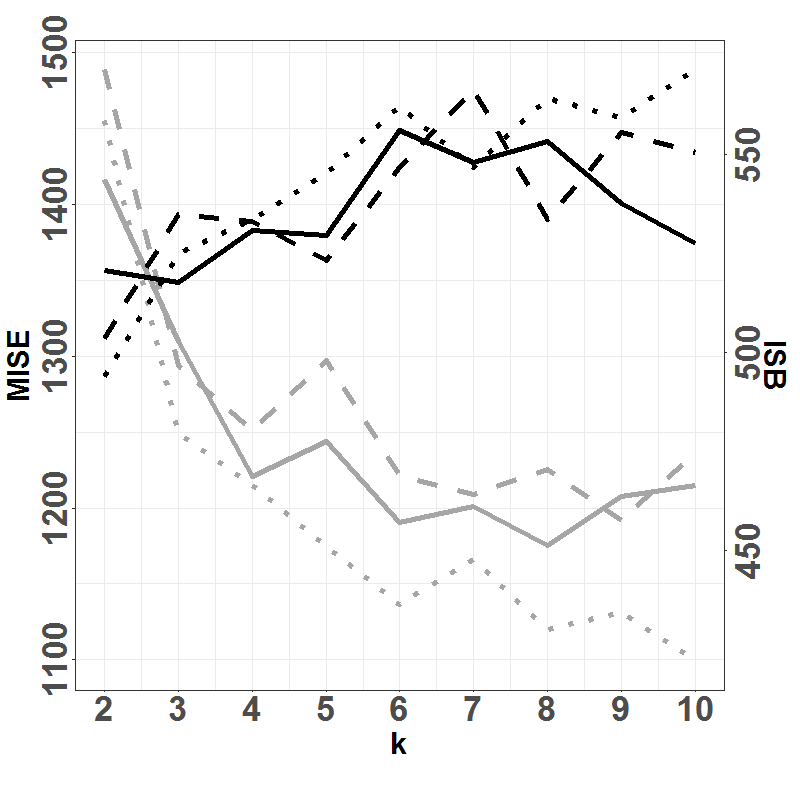}

\caption{
Performance of the loss functions $\mathcal{L}_1$ (solid curves, --), $\mathcal{L}_2$ (dashed curves, $---$) and $\mathcal{L}_3$ (dotted curves, $\cdots$), using multinomial CV with $k=2,3,\ldots,10$ and the test function $f(x)=1/x$. 
Columns: 
LGCP (left), Poisson (middle) and DPP (right). 
Top row: $\mathrm{IAB}$ (grey curve, right axis) and $\mathrm{IV}$ (black curve, left axis). 
Bottom row: $\mathrm{ISB}$ (grey curve, right axis) and $\mathrm{MISE}$ (black curve, left axis). 
}
\label{f:kFold}
\end{figure}

\begin{figure}[!htpb]
 \centering
\includegraphics[scale=0.16]{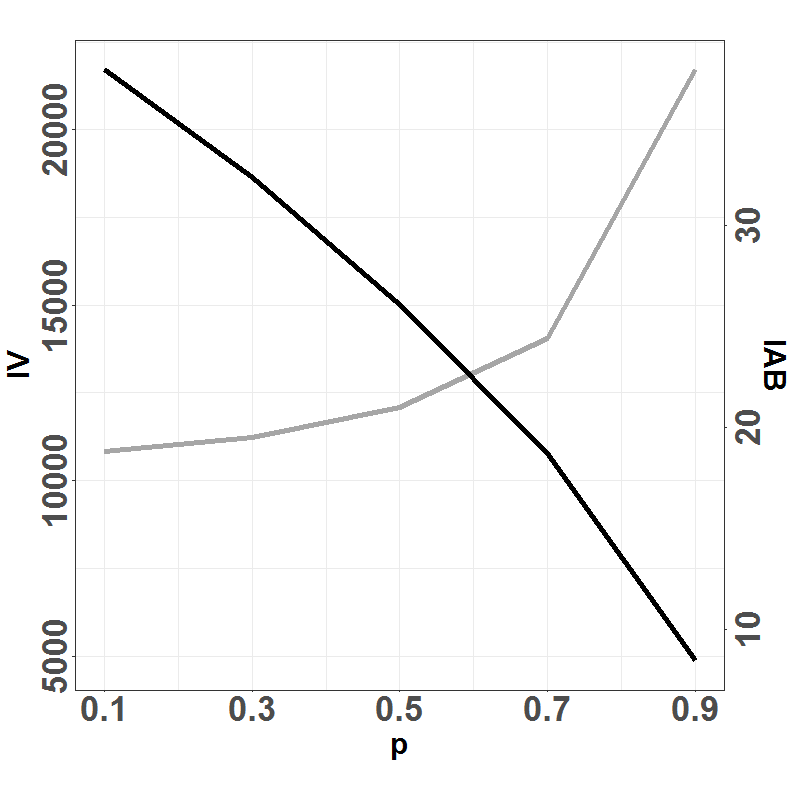}
\includegraphics[scale=0.16]{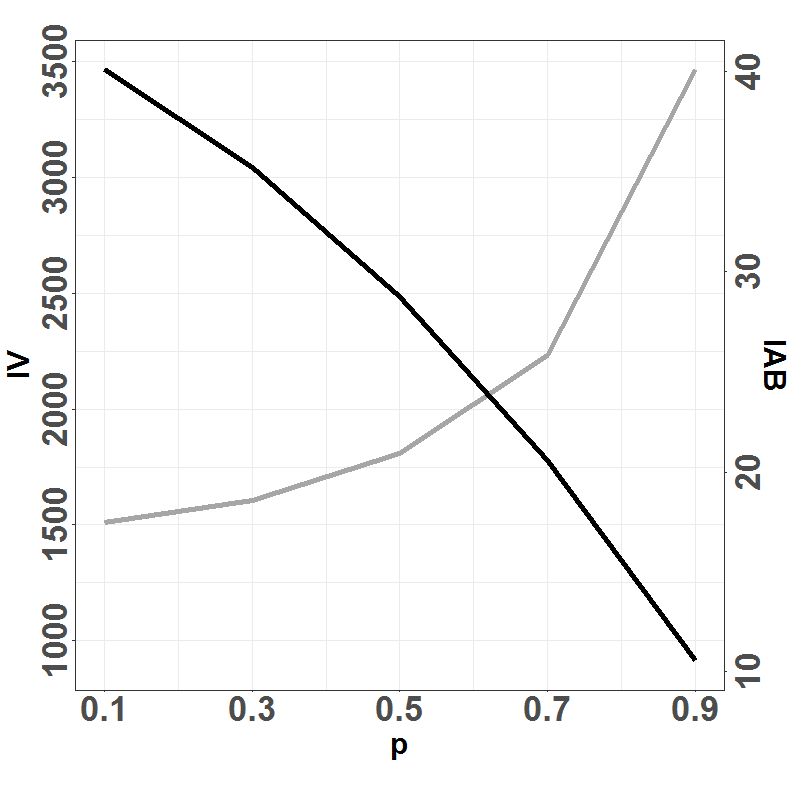}
\includegraphics[scale=0.16]{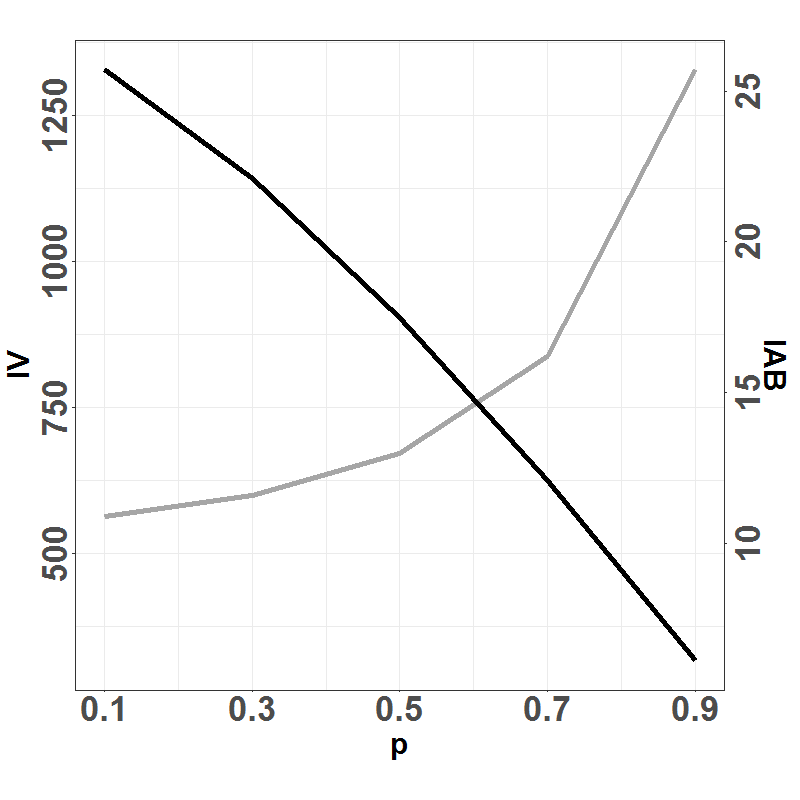}

\includegraphics[scale=0.16]{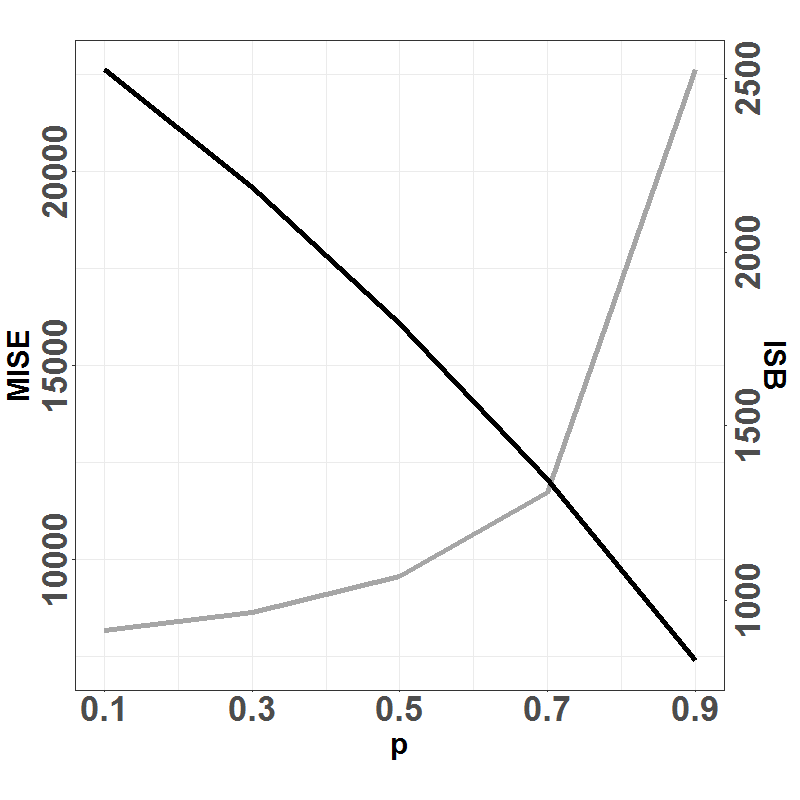}
\includegraphics[scale=0.16]{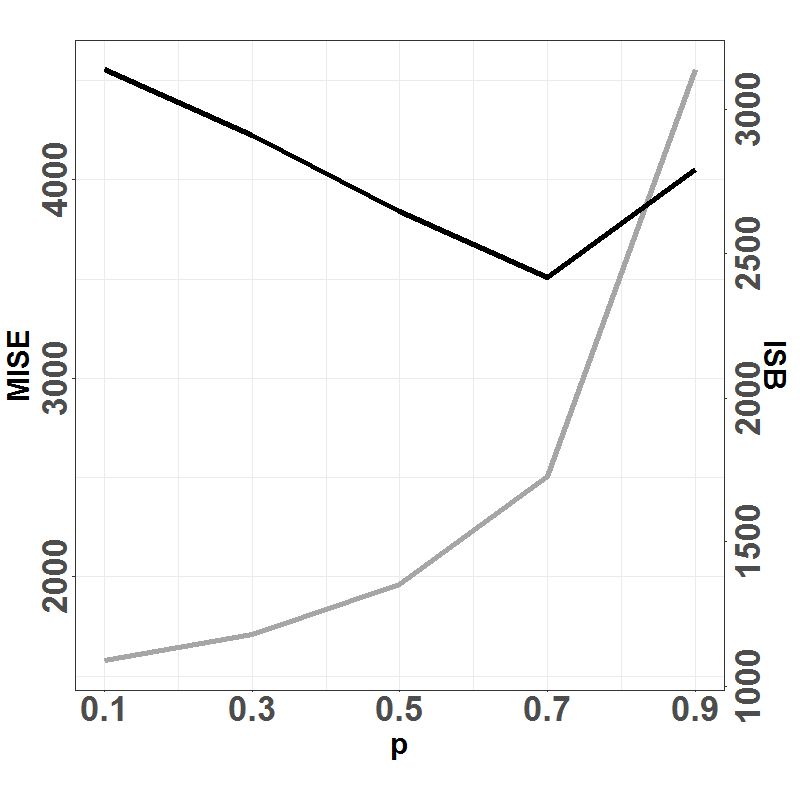}
\includegraphics[scale=0.16]{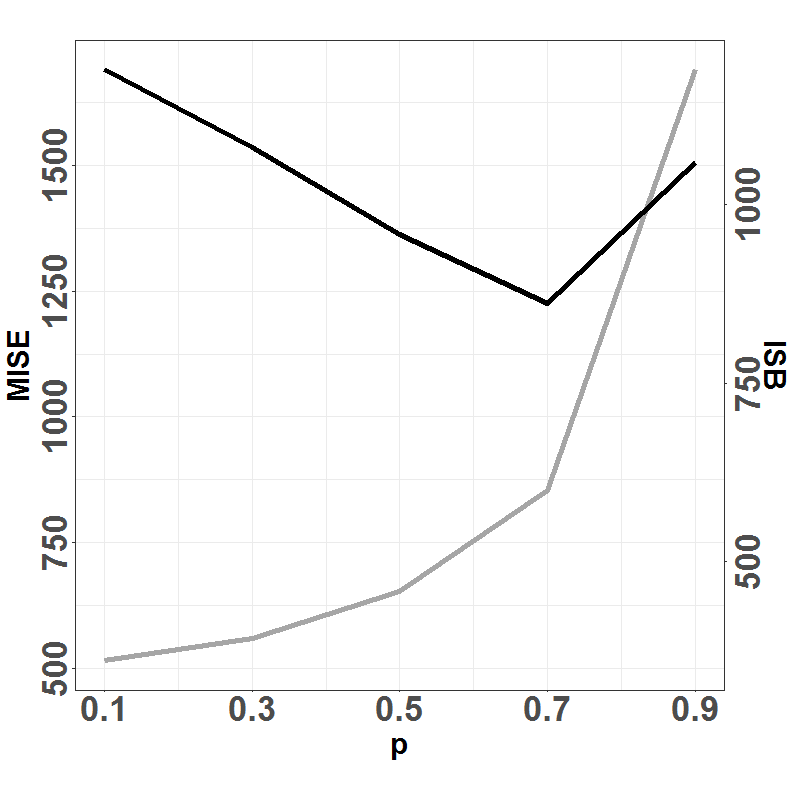}

\caption{
Performance of the loss function $\mathcal{L}_2$, using MCCV with $p=0.1,0.3,0.4,0.7,0.9$ and $k=400$ together with the test function $f(x)=1/\sqrt{x}$. 
Columns: 
LGCP (left), Poisson (middle) and DPP (right). 
Top row: $\mathrm{IAB}$ (grey curve, right axis) and $\mathrm{IV}$ (black curve, left axis). 
Bottom row: $\mathrm{ISB}$ (grey curve, right axis) and $\mathrm{MISE}$ (black curve, left axis). 
}
\label{f:Pearson}
\end{figure}

\begin{table}[!htpb]
    \centering
    \caption{The numerical performance of the method of \citet{cronie2018bandwidth}, i.e.~\eqref{e:CvL} in combination with $f(x)=1/x$, for each of the models above.}
    \begin{tabular}{l|rrrr}
    & IAB & ISB & IV & MISE \\
    \hline
    LGCP    &19.48 & 963.47 & 17597.99 & 18561.47\\
    Poisson &  15.80 & 921.82 & 4408.21 & 5330.04\\
    DPP & 9.14 & 276.75 & 2002.55 & 2279.31\\
    \end{tabular}
    \label{tab:CvL}
\end{table}








\subsubsection{Real data example}
We next consider an application of our point process learning-based kernel intensity estimation approach, in the context of a point pattern $\x\subseteq W\subseteq\R^2$. 
More precisely, we consider MCCV with $(k,p)=(400,0.7)$, the test function $f(x)=1/x$ and the loss function $\mathcal{L}_2$. 
Figure \ref{f:bei} shows the estimated intensity for the dataset, which consists of the locations of 3605 tropical rain forest trees of the species Beilschmiedia pendula (Lauraceae), sampled on Barro Colorado Island, Panama \citep{hubbell1983diversity}. 
This dataset, which is accessible through the \textsf{R} package \textsf{spatstat} \citep{BRT15}, has previously been analysed in various papers \citep[see e.g.][]{moller2007modern}. We obtain a bandwidth of $56.65$ (meters) and from Figure \ref{f:bei} we can see that the selected bandwidth leads to an estimated intensity which adapts well to the inhomogeneity of the spatial locations of the trees.


\begin{figure}[!h]
    \centering
    \includegraphics[scale=0.3]{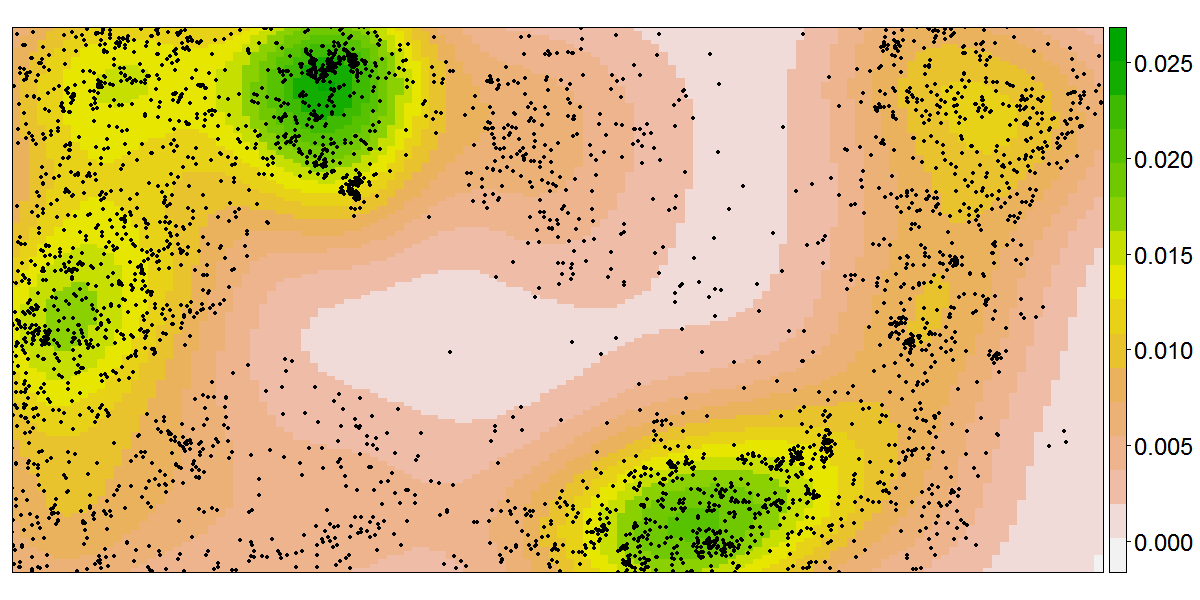}
    \caption{Kernel intensity estimate for a dataset consisting of the locations of tropical rain forest trees on Barro Colorado Island, Panama.}
    \label{f:bei}
\end{figure}



\subsection{Fixed size samples}

For completeness, we here look closer at the setting in Section \ref{s:RandomSamples}, i.e.~the case where we condition on the total point count $X(S)=N\geq1$. 

We here have the parametric intensity function family $\rho_{\theta}(u)=Nf_1(u;\theta)$, $u\in S$, where $f_1(\cdot;\theta)$, $\theta\in\Theta$, is a parametric family of density functions and $f_1(\cdot;\theta_0)$ is the true common marginal density function of $x_1,\ldots,x_N$. From \eqref{e:InnovationParametric} we obtain the innovations 
\begin{align*}
\mathcal{I}_{(1-p)\rho_{\theta}}^{h_{\theta}}(S;\x_i^T)
    =&
  \sum_{x\in\x_i^T}
     h_{\theta}(x)
     -
     (1-p)
     \int_{S}
     h_{\theta}(u)
     \rho_{\theta}(u)
     \de u
     \\
     =&
     \sum_{i=1}^N
     \1\{x_i\in\x_i^T\}
     h_{\theta}(x)
     -
     (1-p)N
     \int_{S}
     h_{\theta}(u)
     f_1(u;\theta)
     \de u; 
\end{align*}
we have to require that the test functions are such that the expectation $\int_{S} h_{\theta}(u) f_1(u;\theta) \de u$ is finite. 
In particular, by setting $h_{\theta}(\cdot)=((1-p) N f_1(\cdot;\theta))^{-1}\nabla_{\theta} f_1(\cdot;\theta)$ in this unbiased estimating equation (provided that the gradient $\nabla_{\theta} f_1(\cdot;\theta)$ exists), since $f_1(\cdot;\theta)$ is a density function, we obtain
\begin{align*}
\mathcal{I}_{(1-p)\rho_{\theta}}^{\nabla_{\theta} f_1(\cdot;\theta)/((1-p)Nf_1(\cdot;\theta))}(S;\x_i^T)
    =&
    \frac{1}{(1-p)N}
     \sum_{i=1}^N
     \1\{x_i\in\x_i^T\}
     \frac{\nabla_{\theta} f_1(x_i;\theta)}{f_1(x_i;\theta)}
     -
     \nabla_{\theta} 
     \int_{S}
     f_1(u;\theta)
     \de u
     \\
     =&
     \frac{1}{N}
     \sum_{i=1}^N
     \frac{\1\{x_i\in\x_i^T\}}{1-p}
     \frac{\nabla_{\theta} f_1(x_i;\theta)}{f_1(x_i;\theta)}
     -
     1,
\end{align*}
assuming that integration and differentiation may be interchanged. This is connected to the likelihood score function based on the sample $\x_i^T$. Hence, we obtain a form of subsampled maximum likelihood estimation approach. Note that we here have not imposed the assumption that $x_1,\ldots,x_N$ are independent, i.e.~that we are considering a Binomial point process.

We next consider the case where we want to fit a multivariate density $f_N(u_1,\ldots,u_N;\theta)$, $u_1,\ldots,u_N\in S$,  $\theta\in\Theta$, to $x_1,\ldots,x_N$, where we assume that the true joint density of $x_1,\ldots,x_N$ is given by  $f_N(u_1,\ldots,u_N;\theta_0)$, $u_1,\ldots,u_N\in S$. 
Recall from Section \ref{s:RandomSamples} the associated Papangelou conditional intensity $\lambda_{\theta}(u,\x) = n f_1(u|\x;\theta)$, $u\in S$, $\x=\{x_1,\ldots,x_{n-1}\}\subseteq S^{n-1}$, $1\leq n\leq N-1$, where $n=1$ yields $f_1(u|\x;\theta)=f_1(u|\emptyset;\theta)=f_1(u;\theta)$. 
Here the innovations in \eqref{e:PapangelouInnovation1} become
\begin{align*}
  \mathcal{I}_{\xi_{\theta}^1}^{h_{\theta}}(S;\x_i^V,\x_i^T)
  =&
     \mathcal{I}_{p(1-p)^{-1}\lambda_{\theta}}^{h_{\theta}}(S;\x_i^V,\x_i^T)
  \\
  =&
     \sum_{x\in\x_i^V}
     h_{\theta}(x;\x_i^T)
     -
     \frac{p}{1-p}
     \int_{S}
     h_{\theta}(u;\x_i^T)
     \lambda_{\theta}(u;\x_i^T)
     \de u
     \\
     =&
     \sum_{j=1}^N
     \1\{x_j\in\x_i^V\}
     h_{\theta}(x_j;\x_i^T)
     -
     \frac{p}{1-p}
     (\#\x_i^T+1) 
     \int_{S}
     h_{\theta}(u;\x_i^T)
     f_1(u|\x_i^T;\theta)
     \de u
     ,
\end{align*}
and we use $\widetilde{\mathcal{I}}_{\xi_{\theta}^1}^{h_{\theta}}(S;\x_i^V,\x_i^T)=I_i\mathcal{I}_{\xi_{\theta}^1}^{h_{\theta}}(S;\x_i^V,\x_i^T)$, where $I_i=\1\{1\leq\#\x_i^T\leq\#\x-1\}=\1\{0<\#\x_i^T< N\}$, to build the loss function. Arguing as in the marginal case, if we set $h_{\theta}(\cdot;\x_i^T)=
\nabla_{\theta} f_1(\cdot|\x_i^T;\theta)/f_1(\cdot|\x_i^T;\theta)p$, we obtain a subsampled conditional likelihood estimation approach;
\begin{align*}
  \mathcal{I}_{\xi_{\theta}^1}^{h_{\theta}}(S;\x_i^V,\x_i^T)
  =&
     \sum_{j=1}^N
     \frac{
     \1\{x_j\in\x_i^V\}}{p}
     \frac{\nabla_{\theta} f_1(x_j|\x_i^T;\theta)}{f_1(x_j|\x_i^T;\theta)}
     -
     \frac{
     (\#\x_i^T+1)
     }{1-p}
     .
\end{align*}

Note that, as we have seen in the previous sections, by changing the chosen test function we also change the emphasis put on either the bias or the variance of the resulting estimators. Hence, we may obtain estimators which behave quite differently than likelihood-type estimators.

\section{Discussion}
\label{s:Discussion}



In this paper we present a general theory for statistical learning for point processes, point process learning, which essentially consists of two components. The first is what we refer to as bivariate innovations, which essentially are measures of discrepancy between two point processes. These are used to ``predict'' one point process from another point process and they are motivated by integral formulas such as Campbell, Campbell-Mecke and Georgii-Nguyen-Zessin (GNZ) formulas, which relate expectations of sums over the points of a point process to integrals with respect to various distributional characteristics of the point process, including~factorial moment measures and Papangelou conditional intensity functions. A particular instance of our bivariate innovations reduce to the ``classical'' innovations of  \citet{baddeley2005residual,baddeley2008properties}, which were introduced to define point process residuals. The second component of our statistical learning approach is thinning-based cross-validation for point processes. Due to the many appealing  distributional properties of independent thinnings, which are obtained by independently retaining the points of a point process according to some given probability function, we study the case of independent thinning-based cross-validation in detail. 
By using the bivariate innovations to predict cross-validation-generated validation sets, using the corresponding training sets, we connect the two components to form our (supervised) statistical learning approach. Through a range of applications, including kernel intensity estimation, parametric intensity estimation and Papangelou conditional intensity fitting, we show that our learning approach generally performs better than the classical statistical state of the art for point processes.

Our hope is that this paper becomes the starting point for a research direction which deals with various aspects of statistical learning for point processes. Aside from the range of future/ongoing work topics indicated here and there in this paper, we are currently also looking into further topics. To begin with, we are studying additional theoretical properties of bivariate innovations, in particular asymptotic properties, where there are clear connections to recent work in stochastic geometry \citep[see e.g.][]{biscio2018note, biscio2019alpha, yukich2019fastdecay, biscio2020TDA}. 
Moreover, a natural extension of our work in this paper is to apply our point process learning framework to marked point processes, since these enable regression-type estimation in the context of dependent samples.  Further ongoing work on extensions is mentioned in Section \ref{s:DataDrivenLoss} below; the sections below discuss various alternatives and extensions for the framework developed in this paper. 






\subsection{Alternative cross-validation approaches}

As a sort of mix between the cross-validation (CV) methods in Definition \ref{def:MCCV}
  and Definition \ref{def:kFoldMCCV}, one could consider letting
  $\x_i^V$ be a $p$-thinning of $\x$ with retention probability
  $p_i=i/k$ and letting $\x_i^T=\x\setminus\x_i^V$,
  $i=1,\ldots,k\geq1$. On the one hand, the possible advantage over
  MCCV is that we only have to choose the parameter $k$. On the other
  hand, it is not likely that it would perform better than an
  "optimally" chosen pair $(k,p)$ for MCCV. 
  A further variant of the CV method in Definition \ref{def:MCCV} is to consider an "empirical Bayes" type CV approach, where the retention probability used 
  would be estimated non-parametrically, e.g.~by
  means of a scaled intensity estimate, using (a part of) the
  data. The issue here is that we do not actually employ independent
  thinning to generate the training and validation sets. 
It is not clear whether there are any actual benefits of doing this, but this may be worth exploring further.

Another natural and intuitive approach is what we call {\em domain partitioning CV}. The idea is to split the study region $W$ into $k\geq1$ pieces $W_i$, $i=1,\ldots,k$, and let the $i$th validation set be $\x_i^V=\x\cap W_i$; 
see the left panel of Figure  \ref{f:Splitting} for an illustration. 
This may be obtained through a sequence of $k$ independent thinnings, where in the $i$th run we use the retention probability $p_i(u)=\1\{u\in W_i\}$, $u\in W$, $i=1,\ldots,k$. 
Note the philosophical difference between domain partitioning CV and independent thinning based CV strategies. In the former we "block", or sample "from left to right/top to bottom", since we first split $W$ and then generate the validation sets. In the latter, however, we sample "from above"; see the right panel of Figure  \ref{f:Splitting}.

\begin{figure}[!htpb]
  \centering \includegraphics[width=0.4\textwidth]{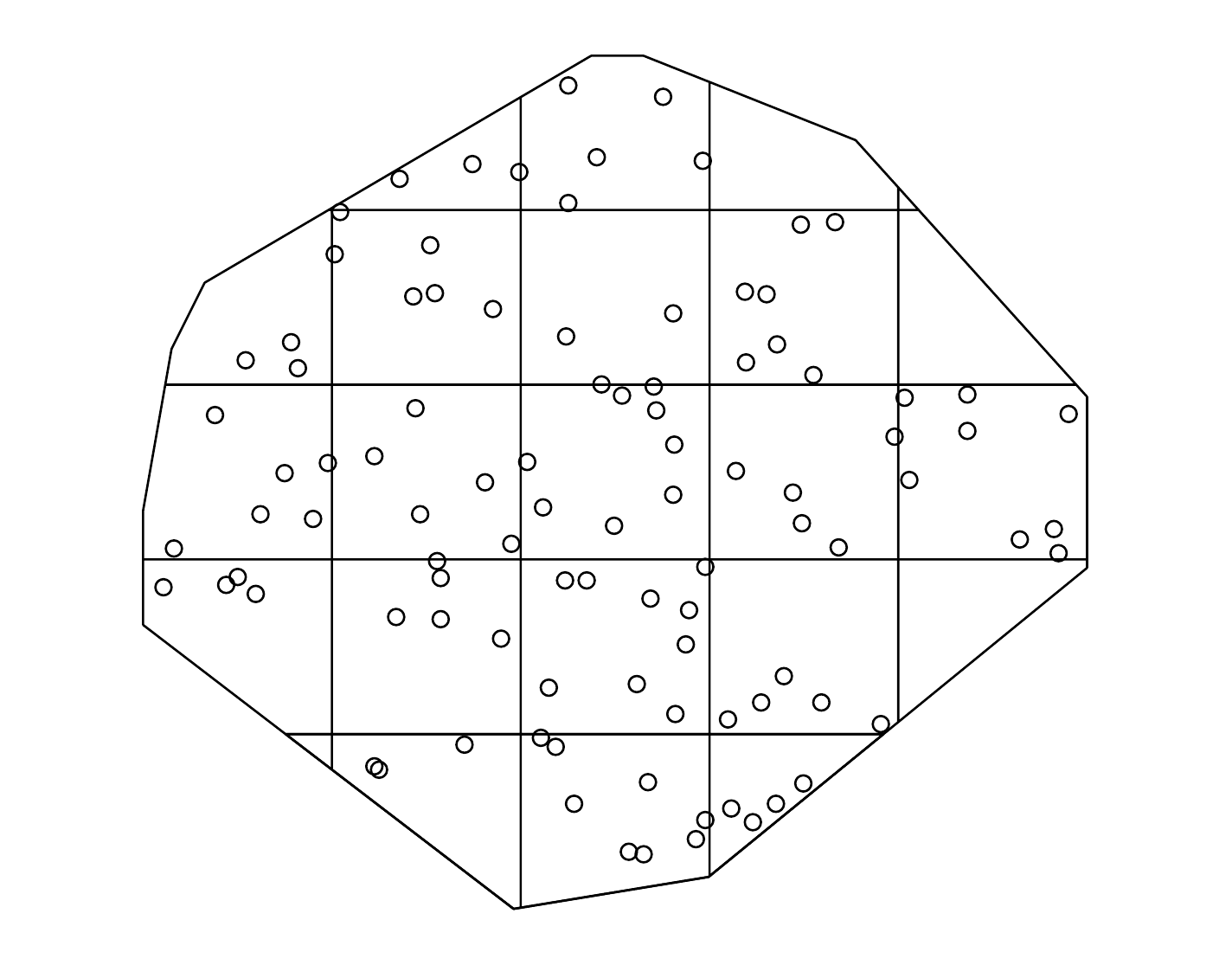}
  \includegraphics[width=0.4\textwidth]{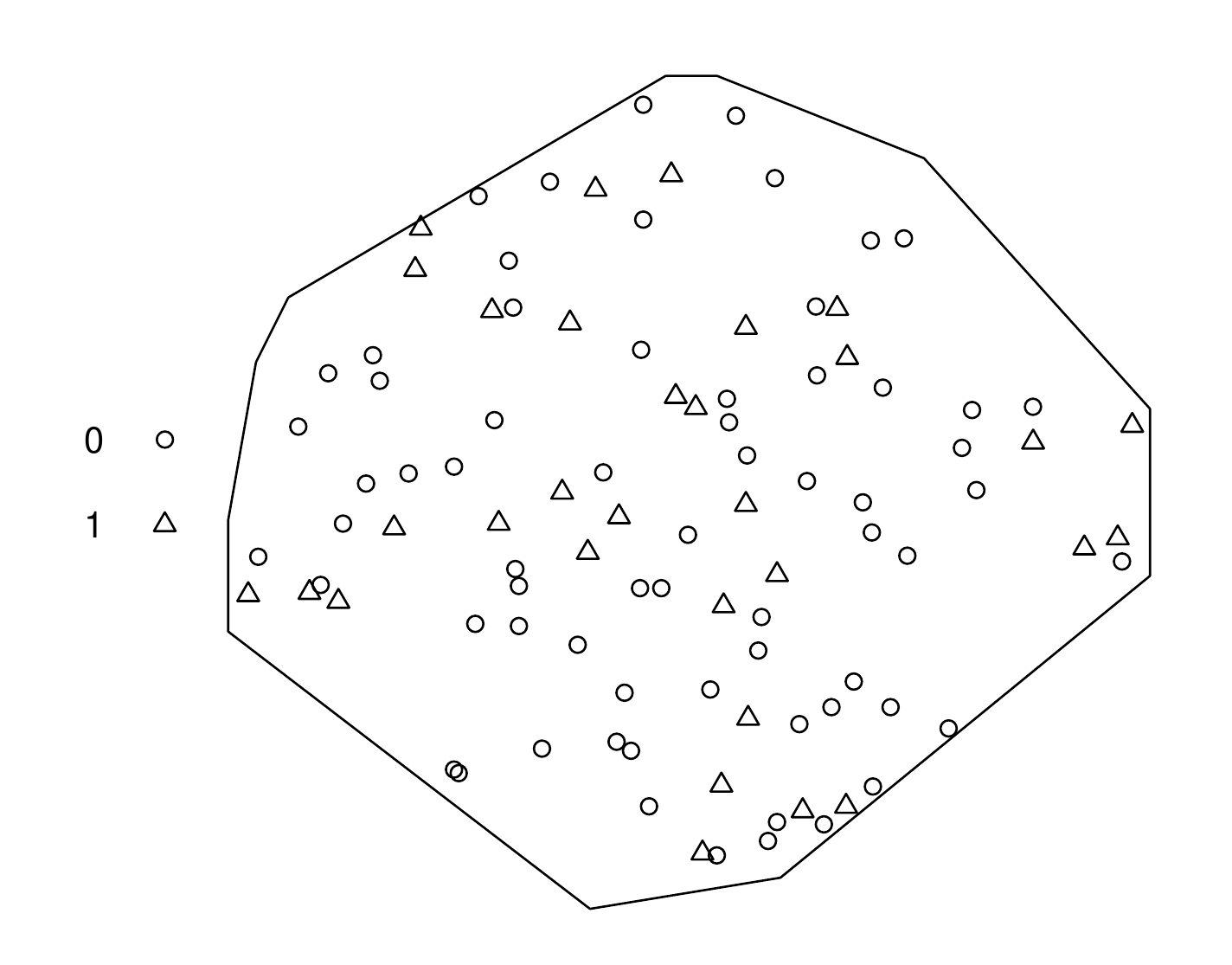}
  \caption{ A point pattern $\x$ in a study region $W\subseteq\R^2$.
    Left: A partition $\{W_i\}_{i=1}^k$ of $W$, which in turn yields
    the validation sets $\x_i^V=\x\cap W_i$, $i=1,\ldots,k\geq2$.
    Right: A validation set $\x_i^V$ (illustrated by $\Delta$) which
    is given by a $p$-thinning of $\x$.  }
  \label{f:Splitting}
\end{figure}

However natural and intuitive domain partitioning CV may seem, there are problems related to our point process learning approach. 
The main problem is related to the retention probability for $\x_i^V$ being 1 and the retention probability for $\x_i^T$ being 0. 
If $\Xi_{\Theta}^n=\{\xi_{\theta}^n: \theta\in\Theta\}$ and $\mathcal{H}_{\Theta}=\{h_{\theta}:\theta\in\Theta\}$ are of the form \eqref{e:GeneralEstimator}, then  \eqref{e:WeightsPapangelou} tells us that we should have $\xi(u_1,\ldots,u_n;X_i^T)
  =
  \prod_{j=1}^n p(u_j)
  (\prod_{j=1}^n (1-p(u_j)))^{-1}
  \lambda_Y^{(n)}(u_1,\ldots,u_n;X_i^T)
  =
  \1\{u_1,\ldots,u_n\in W\}
  \1\{u_1,\ldots,u_n\notin W\}^{-1}
  \lambda_Y^{(n)}(u_1,\ldots,u_n;X_i^T)$ a.e., 
  which is infinite if any $u_j$ lies in $W_i$ and 0 if any $u_j$ lies outside $W_i$; this would mean the our loss functions would be infinite. 
  Moreover, what we here are dealing with is "extrapolation/interpolation", i.e.~prediction outside the observation domain which is generally a hard task. One may also speculate about whether there emerge other problems here; 
  the folds may have highly varying point counts and such an approach may introduce edge/boundary effects which one may have to correct for \citep{cronie2011edge,CSKWM13,Diggle14Book,BRT15}. Finally, it is not clear how exactly the partitioning of $W$ should be chosen; how many folds there should be, whether they should be equally sized etc.






\subsection{Optimal loss functions, test functions and hyperparameters}
\label{s:DataDrivenLoss}

A burning question is how one should choose the test function family and the CV parameter $p$ here. The current paper has taken a proof-of-concept approach, with the aim of showing that there exist test functions with associated values for $p$ which result in (substantially) improved fits in different contexts. As previously indicated, these may be viewed as hyperparameters and, quite naturally, one may ask how a hyperparameter may be chosen "optimally". 

Turning to the literature, in the context of univariate innovation-based parametric intensity estimation (see Section \ref{s:ParametricIntensity}), more precisely composite likelihood estimation, \citet{guan2015quasi} exploited estimating equation theory to find an optimal test function in the form of a Fredholm integral equation. Further,  \citet{coeurjolly2016towards} gave an extensive account on test function selection in the context of using classical innovations for the estimation (see Section \ref{s:PapangelouUnivariateInnovation}), so-called Takacs-Fiksel estimation. In addition, motivated by \citet{guan2015quasi}, they found a similar Fredholm integral equation approach to finding a (semi-)optimal test function. 
For our purposes, however, since our point process learning framework (in general) cannot be expressed through estimating equations, we believe these findings to be of limited use here.

A further idea for finding an optimal test function is to apply calculus of variations to find a minimiser of the variances in Theorem \ref{thm:InnovationsMean}. 
Aside from the possible associated mathematical intractability of such an approach, the optimality may be model specific and the solution may not solve the issue with the remaining hyperparameters. 
Arguing from the point of view of Taylor expansion, another potential idea would be to let the test function $h_{\theta}$ be given by a combination of derivatives of $\xi_{\theta}^n$.




\subsubsection{Data-driven hyperparameter selection}

A further approach, which we are currently exploring in a parallel paper, embraces the statistical learning paradigm to a larger extent than the approaches indicated above. It is (arguably) also more general in the sense that it deals with hyperparameters in general, and thereby test functions and CV parameters in particular. Its algorithm, which is laid out below, is motivated by an algorithm commonly encountered in the statistical learning literature \citep{james2013introduction}. 
Our aim here is to have a 
completely data-driven approach with a minimal amount of hard-coded parameters.
Roughly speaking, it states that out of a class of potential
loss functions, the one which (in some suitable sense) best predicts the validation sets $\x_i^V$, $i=1,\ldots,k$, will be labelled the optimal one; clearly, in some cases this is equivalent to finding an optimal
family of test functions $\mathcal{H}_{\Theta}$.  

Consider a collection of loss functions of either of the forms
    \begin{align*}
    &\{\mathcal{L}_{\gamma}(\theta;\z,\y) : \z,\y\in\X,
    \theta\in\Theta\},
    \\
    &\{\mathcal{L}_{\gamma}(\theta;\y) : \y\in\X,
    \theta\in\Theta\},
    \end{align*}
    where $\gamma\in\Theta_{\gamma}\subseteq\R^{l_{\gamma}}$,  $l_{\gamma}\geq1$, is the associated hyper-parameter. 
    In addition, consider some measure $\mathcal{G}:\Theta\times\X\to[0,\infty)$ of
    goodness of fit, where small means a good fit. 
    Most naturally here,
    $\mathcal{L}_{\gamma}(\theta;\z,\y)=|\widetilde{\mathcal{I}}_{\xi_{\theta}^n}^{h_{\theta}^{\gamma}}(W^n;\z,\y)|$ if $\xi_{\theta}^n$ and $h_{\theta}$ are
  of the form \eqref{e:GeneralEstimator} or $\mathcal{L}_{\gamma}(\theta;\y)=|\widetilde{\mathcal{I}}_{\xi_{\theta}^n}^{h_{\theta}^{\gamma}}(W^n;\y)|$ if $\xi_{\theta}^n$ and $h_{\theta}$ are
  of the form \eqref{e:GeneralEstimatorNoPP}, 
    where $\mathcal{H}_{\Theta}^{\gamma}=\{h_{\theta}^{\gamma}:\theta\in\Theta\}$,
    $\gamma\in\Theta_{\gamma}$, are different
    families of test functions, say
    $h_{\theta}^{\gamma}(\cdot)=\xi_{\theta}^n(\cdot)^{\gamma}$,
    $\gamma\in\Theta_{\gamma}\subseteq\R$. However, $\gamma$ can also e.g.~incorporate the CV parameter $p$, if the algorithm below is run for different CV settings.
    Similarly, as goodness of fit criterion we may e.g.~consider a univariate
    innovation-based one,
    $\mathcal{G}(\theta;\y)=|\mathcal{I}_{\xi_{\theta}^n}^{h_E}(W^n;\y)|^i$,
    $i\in\{1,2\}$, for some fixed test function $h_E(\cdot)$.

  \begin{remark}
    In a parallel paper we exploit the current data-driven approach in the context of 
    penalisation, most notably regularisation \citep{james2013introduction,yue2015variable,Rajala2018,choiruddin2018convex}. Here we let $\mathcal{L}_{\gamma}(\theta;\z,\y) =
    |\widetilde{\mathcal{I}}_{\xi_{\theta}^n}^{h_{\theta}}(W;\z,\y)| +
    \gamma R(\theta)$ or 
    $\mathcal{L}_{\gamma}(\theta;\y) =
    |\widetilde{\mathcal{I}}_{\xi_{\theta}^n}^{h_{\theta}}(W;\y)| +
    \gamma R(\theta)$, where  $\mathcal{H}_{\Theta}=\{h_{\theta}:\theta\in\Theta\}$ is a fixed family of test functions,
    $\gamma\geq0$ and $R(\cdot)$ is a suitable penalty, e.g.~a  regularisation
    penalty such as the elastic-net penalty 
    \citep{zou2005regularization}. 
    Another penalisation setting which is parallely being looked into is smoothness penalisation in the context of non-parametric intensity estimation;  specifically, the case of soap film smoothing \citep{wood2008soap} is currently being studied. 
\end{remark}

The general approach described in the algorithm below illustrates how an ``optimal'' $\mathcal{L}_{\gamma}$ may be found in a data-driven way.

\begin{enumerate}

\item When the loss functions are of the form $\mathcal{L}_{\gamma}(\theta;\z,\y)$, as in Section \ref{s:TestSet}, consider triples
  $(\x_i^T,\x_i^V,\x_i^E)$, $i=1,\ldots,k$, which have been obtained
  through two layers of CV-partitioning of $\x$; here $\x_i^E$,
  $i=1,\ldots,k$, are generated by a retention probability
  $p_E\in(0,1)$ and $(\x_i^T,\x_i^V)$,
  $\x_i^T\cup\x_i^V=\x\setminus\x_i^E$, $i=1,\ldots,k$, are generated
  by a retention probability $p\in(0,1)$, both using some of the
  CV-approaches in Definition \ref{def:MCCV} and Definition 
  \ref{def:kFoldMCCV}. 
  
  When the loss functions are of the form $\mathcal{L}_{\gamma}(\theta;\y)$, we instead consider CV-splittings $(\x_i^T,\x_i^V)$,
  $\x_i^V=\x\setminus\x_i^T$, $i=1,\ldots,k$, using one of the
  CV-approaches in Definition \ref{def:MCCV} and \ref{def:kFoldMCCV} with some retention probability $p\in(0,1)$. 
    
  \item For each $i=1,\ldots,k$:
    \begin{enumerate}
    \item For each $\gamma\in\Theta_{\gamma}$, find a minimiser $\widehat\theta_i(\gamma)\in\Theta$ of
        $\mathcal{L}_{\gamma}(\theta;(\x_i^T,\x_i^V))$ or $\mathcal{L}_{\gamma}(\theta;\x_i^T)$.
        
      \item Consider the $i$th "prediction error path"
        $\mathcal{G}_i(\gamma)=\mathcal{G}(\widehat\theta_i(\gamma);\x_i^E)$,
        $\gamma\in\Theta_{\gamma}$, or $\mathcal{G}_i(\gamma)=\mathcal{G}(\widehat\theta_i(\gamma);\x_i^V)$,
        $\gamma\in\Theta_{\gamma}$ (when $\mathcal{G}$
        is innovation-based, in keeping with
        \citet{baddeley2005residual}, these should be called
        residuals).
      \end{enumerate}
    
    \item Let $\widehat\gamma$ be a minimiser of
      $\gamma\mapsto\frac{1}{k}\sum_{i=1}^k
      \mathcal{G}_i(\gamma)$,
      $\gamma\in\Theta_{\gamma}$.
    
    \item The optimal loss function is given by
      $\mathcal{L}_{\widehat\gamma}(\theta;\z,\y)$,
      $\z,\y\in\X$, $\theta\in\Theta$, or $\mathcal{L}_{\widehat\gamma}(\theta;\y)$, 
      $\y\in\X$, $\theta\in\Theta$.
    
    \item Generate a new collection of training-validation splittings
      $(\x_i^T,\x_i^V)$, $i=1,\ldots,k$, of the full dataset $\x$
      (possibly using a different $k\geq1$ and a different retention probability $p\in(0,1)$)
      and consider some test function generated by a combination of $\mathcal{L}_{\widehat\gamma}(\theta;\x_i^T)$, $i=1,\ldots,k$, or $\mathcal{L}_{\widehat\gamma}(\theta;\x_i^V,\x_i^T)$, $i=1,\ldots,k$, e.g.~as was done in Section \ref{s:FixedTestFunctionLoss}, or, in the former case, the ($p\to0$) limiting case,  $\mathcal{L}_{\widehat\gamma}(\theta;\x)$. 
      We obtain the final estimate $\widehat\theta$ by minimising
      this loss function.
      

  \end{enumerate}

A further option is to (additionally) let the choice of $\xi_{\theta}^n$ vary with $\gamma$ as well, e.g.~if we want to compare different models which are represented by different families $\Xi_{\Theta}^n$ -- in some sense we here would jointly carry out fitting, model selection and model validation. In the context of large and complex datasets, one could e.g.~think of hybrid Gibbs processes \citep{baddeley2013hybrids} or pairwise interaction processes \citep{VanLieshoutBook} with an increasing number of components.

\section{Acknowledgements}
We are grateful to Achmad Choiruddin (Institut Teknologi Sepuluh Nopember, Indonesia) for feedback on an early draft and to Mathew Penrose (University of Bath, UK) for supplying us with details on the proof of \citet[Exercise 5.9]{last2017lectures}. 

\bibliographystyle{dcu}
\bibliography{main.bib}

\clearpage
\appendix
\appendixpage

\section{Univariate innovations}
\label{s:UnivariateInnovations}

In the subsequent subsections we will consider different settings
where the statistical inference fits into the univariate innovations framework. We
stress that there may be existing approaches which do not fit into
this framework and, also, that there may be approaches which do fit in but
which (accidentally) have not been included; we have focused on
approaches which deal with inhomogeneous point processes.

\subsection{Non-parametric intensity estimation}
\label{s:NonParametrics}
As the intensity function $\rho(u)$, $u\in W$, governs the univariate
distributions of the points of $X\cap W$, it is not hard to imagine
that its estimation forms the very foundation of subsequent
statistical analyses. When we do not have access to any covariates
(on $W$), we have to resort to non-parametric estimation.  Although a
considerable amount of progress has been made regarding non-parametric estimation of
intensity functions
\citep{vanIntensity,BRT15,cronie2018bandwidth,moradi2018spatial,Moradi2019,van2020infill},
there is still significant room for improvement.

We may represent non-parametric intensity estimation using the general
estimator structure in \eqref{e:GeneralEstimator}:
$$
\widehat\rho_{\theta}(u,\y)=\xi_{\theta}^1(u;\y), \quad u\in W, \quad
\y\in\X,
$$
where $\theta\in\Theta\subseteq\R^l$, $l\geq1$, is the
tuning/smoothing parameter associated to the estimator.
Given an observed point pattern $\x$, to obtain an estimate/selection
$\widehat\theta_W(\x)$
we consider $\widehat\rho_{\theta}(u,\x)=\xi_{\theta}^1(u;\x)$ and
minimise
$\mathcal{L}(\theta;\x) =
\mathcal{I}_{\widehat\rho_{\theta}}^{h_{\theta}}(W;\x)^2$,
$\theta\in\Theta$,
where
\[
\mathcal{I}_{\widehat\rho_{\theta}}^{h_{\theta}}(W;\x)
=
\sum_{x\in\x\cap W}
h_{\theta}(x;\x\setminus\{x\})
-
\int_{A}
h_{\theta}(u;\x)
\widehat\rho_{\theta}(u;\x)
\de u,
\]
using some family of test functions $h_{\theta}$, $\theta\in\Theta$. We
then plug $\widehat\theta_W(\x)$ into the intensity estimator,
i.e.~$\widehat\rho_{\widehat\theta_W(\x)}(u,\x)$, $u\in W$, to obtain
the final estimate of the intensity function.

\begin{remark}
  If the estimator has no tuning parameter $\theta$, which e.g.~is the
  case for the Voronoi intensity estimator
  \citep{barr2010voronoi,Moradi2019}
\begin{align*}
  \widehat\rho_{\theta}(u,\y)
  &=
    \sum_{x\in\y}\frac{\1\{u\in\mathcal{V}_x\}}{|\mathcal{V}_x|}, 
    \quad u\in W,
  \\
  \mathcal{V}_x&=\{v\in W:d(v,x)\leq d(v,y)\text{ for all }y\in\y\setminus\{x\}\}, 
                 \quad x\in\y,
\end{align*}
we can simply set $\widehat\rho_{\theta_1}=\widehat\rho_{\theta_2}$
for any $\theta_1,\theta_2\in\Theta$ or $\Theta=\emptyset$. In the
case of the so-called Resample-smoothed Voronoi intensity estimator of
\citet{Moradi2019} on the other hand, there are two tuning parameters
to be selected.
\end{remark}


\subsubsection{Kernel estimation}
The arguably most prominent intensity estimation technique is kernel
estimation \citep{IPSS08,vanIntensity,Diggle14Book,BRT15}. E.g., when $W\subseteq S=\R^d$, $d\geq1$, in its
most typical form a kernel estimate is given by
\begin{align}
    \label{e:Kernel}
  \widehat\rho_{\theta}(u,\x)
  =\sum_{x\in\x}
  \frac{\kappa_{\theta}(u-x)}{w_{\theta}(u,x)}
  =\sum_{x\in\x}
  \frac{\theta^{-d}\kappa((u-x)/\theta)}{w_{\theta}(u,x)}, 
  \quad u\in W, 
\end{align}
where the kernel $\kappa$ is a symmetric density function on $\R^d$,
$w_{\theta}(u,x)$ is an edge correction term which compensates for
possible interaction between points inside and outside $W$, and the
smoothing parameter $\theta\in\Theta=(0,\infty)$ is the so-called
bandwidth.  A common edge correction factor is the local factor
$w_{\theta}(u,x)=\int_W\kappa_{\theta}(v-x)\de v$, which also ensures
that $\int_W\widehat\rho_{\theta}(u,\x)\de u = \#\x$.  It should be
emphasised that for point processes in other (non-Euclidean) domains
$S$, kernel functions $\kappa$, and thereby kernel estimators, may
generally take other forms and can be quite abstract entities \citep{pelletier2005kernel,mcswiggan2017,moradi2018spatial,mateu2019spatio}. In certain cases, however, there are straightforward extensions of \eqref{e:Kernel}; for a point pattern on a linear network $S=L$, we may e.g.~consider the kernel estimate obtained by leaving the kernel unchanged in \eqref{e:Kernel} and simply replacing the edge correction $w_{\theta}(u,x)$ by $w_{\theta}(u,x)=\int_L\kappa_{\theta}(v-x)\de v$, where the integration is with respect to arc length, i.e.~1-dimensional Hausdorff measure in $\R^2$  \citep{rakshit2019fast}.

In kernel estimation, the main challenge is optimal bandwidth selection; often the
choice of kernel is of much less importance than the choice of bandwidth \citep{S86}. To the best of our
knowledge, the two best-performing approaches currently available for point processes in $\R^d$ are
the Poisson process likelihood leave-one-out cross-validation approach
\citep{Load99,BRT15} and the Campbell formula-based approach of
\citet{cronie2016bandwidth,cronie2018bandwidth}. Given a point pattern $\x\subseteq W$,
the Poisson process likelihood leave-one-out cross-validation approach
maximises
\begin{align}
\label{e:PPL}
  \theta\mapsto
  \sum_{x\in \x\cap W}\log\widehat\rho_{\theta}(x,\x\setminus\{x\})
  -
  \int_W \widehat\rho_{\theta}(u,\x)\de u
\end{align}
in order to obtain an optimal bandwidth; the Poisson process
log-likelihood function is a direct consequence of
\eqref{e:PoissonLogLikelihood}.  To express this as a loss function,
we may either multiply \eqref{e:PPL} by $-1$ or consider its
derivative with respect to $\theta$, assuming sufficient
differentiability; the latter results in a univariate innovation of
the kind given in Definition \eqref{def:Innovations}. This approach is
particularly suited when the data come from a Poisson process
\citep{cronie2018bandwidth}.  Moreover, \citet{cronie2018bandwidth}
proposed to select the bandwidth by minimising the squared innovation
\[
\mathcal{L}(\theta)=
\mathcal{I}_{\widehat\rho_{\theta}}^{f(\widehat\rho_{\theta}(\cdot))}(W;\x)^2
=
\left(
\sum_{x\in \x\cap W}f(\widehat\rho_{\theta}(x,\x\setminus\{x\}))
-
\int_W f(\widehat\rho_{\theta}(u,\x))\widehat\rho_{\theta}(u,\x)\de u
\right)^2,
\]
where 
$f:\R\to\R$. They studied in detail the choice $f(x)=1/x$, $x\in\R$, which
results in the so-called inverse/Stoyan-Grabarnik
statistic/diagnostic, and found that 
\begin{align}
\label{e:CvL}
  \mathcal{L}(\theta)=
  \left(
  \sum_{x\in \x\cap W}f(\widehat\rho_{\theta}(x,\x))
  -
  \int_W f(\widehat\rho_{\theta}(u,\x))\widehat\rho_{\theta}(u,\x)\de u
  \right)^2
\end{align}
with the choice $f(x)=1/x$, $x\in\R$, gave rise to \eqref{e:CvL} being
the square of a (conjectured) monotonic function of $\theta\geq0$ when
using $w_{\theta}(u,x)\equiv1$ (no edge correction) and a Gaussian
kernel; the unmodified version is not monotonic. Note that $f(x)=1/x$,
$x\in\R$, sets the integral in \eqref{e:CvL} to $|W|$ when
$\widehat\rho_{\theta}(u,\x)>0$, $u\in W$, so we choose the bandwidth
by estimating the (known) size of the study region by the sum of
reciprocal intensity estimates.
\citet{Moradi2019} indicated that this particular choice of test
function favours a low variance with respect to the variance-bias
trade off, which makes it particularly suited for aggregated point
processes.


\subsection{Parametric product density and Palm-likelihood estimation}
\label{s:ParametricIntensity}

The general idea in parametric product density estimation is to fit
some parametric family of functions
$\rho_{\theta}^{(n)}(u_1,\ldots,u_n)=\xi_{\theta}^1(u_1,\ldots,u_n)\geq0$,
$u_1,\ldots,u_n\in W$, $\theta\in\Theta$, to the observed point
pattern $\x\subseteq W$. The corresponding innovations
$\mathcal{I}_{\rho_{\theta}^{(n)}}^{h_{\theta}}(\cdot;\y)$,
$\theta\in\Theta$, $\y\in\X$, and loss function are given by
\begin{align}\label{e:QuasiLikProdDens}
  \mathcal{I}_{\rho_{\theta}^{(n)}}^{h_{\theta}}(A;\y)
  =&
     \sum_{(x_1,\ldots,x_n)\in\y_{\neq}^n\cap A}
     h_{\theta}(x_1,\ldots,x_n)
  \\
   &-
     \int_{A}
     h_{\theta}(u_1,\ldots,u_n)
     \rho_{\theta}^{(n)}(u_1,\ldots,u_n)
     \de u_1\cdots\de u_n,
     \quad 
     A\subseteq S^n,
     \nonumber
  \\
  \mathcal{L}(\theta)
  =&
     \mathcal{I}_{\rho_{\theta}^{(n)}}^{h_{\theta}}(W^n;\x)^2,
     \nonumber
\end{align}
respectively, and by minimising $\mathcal{L}(\theta)$,
$\theta\in\Theta$, we consequently obtain an estimate
$\widehat\theta=\widehat\theta_W(\x)\in\Theta$ of $\theta_0$, which
yields the final intensity estimate
$\rho_{\widehat\theta_W(\x)}^{(n)}(u_1,\ldots,u_n)$,
$u_1,\ldots,u_n\in W$.

\subsubsection{Intensity estimation}
In particular, when $n=1$, i.e.~when we are carrying out parametric
intensity estimation by means of some parametric family of intensity
functions $\rho_{\theta}(u)=\xi_{\theta}^1(u)\geq0$, $u\in W$,
$\theta\in\Theta$, we obtain the innovations
\begin{align}\label{e:QuasiLik}
  \mathcal{I}_{\rho_{\theta}}^{h_{\theta}}(A;\y)
  =&\sum_{x\in\y\cap A}
     h_{\theta}(x)
     -
     \int_{A}
     h_{\theta}(u)
     \rho_{\theta}(u)
     \de u,
     \quad 
     A\subseteq S,
     \quad 
     \theta\in\Theta,
     \quad \y\in\X.
\end{align}
Viewed as an unbiased estimating equation approach (recall Lemma
\ref{lemma:InnovationUnbiased}), the corresponding loss function here
essentially corresponds to the quasi-likelihood approach of
\citet{guan2015quasi}. 
Moreover, when one has the access to a vector of
$l\geq1$ covariates $z(u)=( z_1(u),\ldots,z_l(u))^\top$ measured at
location $u\in W$, it is often natural to model the intensity function
parametrically through these covariates.
The arguably most common and popular model choice for the intensity
function is the log-linear form
\begin{align}
\label{eq:int}
\rho_{\theta}(u)=\exp\{\theta^\top z(u)\}, \quad u \in W, 
\end{align}
where $\theta=(\theta_1,\ldots,\theta_l)^\top\in\Theta\subseteq\R^l$
is a real-valued $l$-dimensional parameter \citep{coeurjolly2019understanding}. Note that an
intercept may be included by setting $z_1(\cdot)\equiv 1$ and, in
addition, (projections of) the location $u\in W$ itself may also be
included in $z(u)$ if explicit spatial dependence is warranted.  When
we assume that $X$ is a Poisson process, the standard procedure is to
maximise the log-likelihood function (see
\eqref{e:PoissonLogLikelihood}) with respect to $\theta\in\Theta$,
which for the intensity model~\eqref{eq:int} is given by
\begin{align}\label{eq:likepois}
  \ell(\theta;\x) 
= \sum_{x\in\x \cap W} \log\rho_{\theta}(u) - \int_{W} \rho_{\theta}(u)\de u
= \sum_{x\in\x \cap W} \theta^\top z(u) - \int_{W} \exp\{\theta^\top z(u)\}\de u.
\end{align}
This has gradient
\begin{equation}\label{eq:Un}
  \nabla\ell(\theta;\x) 
  =
  \sum_{u \in \x\cap W}z(u) - \int_{W} z(u) \rho_{\theta}(u)\de u\in\R^l,
\end{equation}
and solving $\nabla\ell(\theta;\x) =0 \in \R^l$ is equivalent to
minimising an $l$-dimensional vector of squared innovations of the
form \eqref{e:QuasiLik}, where entry $i=1\ldots,l$ is based on the
test function $h_i(u)=z_i(u)$, $u\in W$. Even when $X$ is not a
Poisson process, we obtain that \eqref{eq:Un} remains an unbiased
estimating equation (recall Lemma
\ref{lemma:InnovationUnbiased}). Hence, the maximum of \eqref{eq:likepois} still makes sense for non-Poissonian models, an the obtained estimator may be interpreted as a composite likelihood
estimator \citep{waagepetersen2007estimating,guan2010weighted,guan2015quasi}.

\subsubsection{Palm-likelihood estimation}

In certain cases, the go-to approach is Palm-likelihood estimation
\citep{BRT15,coeurjolly2019understanding}. Following \citet{prokevsova2017two}, for a SOIRS point
process $X$, one instance of log-Palm-likelihood estimation is
obtained by maximising
\begin{align*}
  \ell_P(\theta;\x)=&
                      \mathop{\sum\nolimits\sp{\ne}}_{x_1,x_2\in\x\cap W}
                      \1\{x_1\in W_{\ominus r}\}
                      \1\{d(x_1,x_2)\leq r\}
                      \log\rho_{\theta}(x_1)\rho_{\theta}(x_2)
                      g_{\theta}^{(2)}(x_1-x_2)
  \\
                    &-
                      \int_W\int_W
                      \1\{u\in W_{\ominus r}\}
                      \1\{d(u_1,u_2)\leq r\}
                      \rho_{\theta}(u_1)\rho_{\theta}(u_2)g_{\theta}^{(2)}(u_1-u_2)
                      \de u
\end{align*}
for some $r>0$, where $W_{\ominus r}$ is an $r$-trimming of $W$, and
$\rho_{\theta}(\cdot)$ and $g_{\theta}^{(2)}(\cdot)$ are the
intensity function and the pair correlation function of the model parametrised by $\theta\in\Theta$. 
\citet{prokevsova2017two} considered a two-step procedure where the
intensity was estimated externally and then plugged into the above
log-Palm-likelihood function, thus replacing $\rho_{\theta}(\cdot)$.
The name Palm-likelihood stems from the original approach \citep{ogata1991maximum} for
stationary point processes in $\R^d$, where the idea is to apply
Poisson process likelihood heuristics to the intensity functions of
the first order reduced Palm processes $X_{u}^!$; recall expression
\eqref{e:ReducedPalmIntensity}.  Assuming sufficient
differentiability, if we let
\begin{align*}
  \xi_{\theta}^2(u_1,u_2)
  =&
     \rho_{\theta}(u_1)\rho_{\theta}(u_2)g_{\theta}^{(2)}(u_1-u_2),
  \\
  h_{\theta}(u_1,u_2)
  =&
     \1\{u\in W_{\ominus r}\}
     \1\{d(u_1,u_2)\leq r\}
     \frac{\nabla(\rho_{\theta}(u_1)\rho_{\theta}(u_2)g_{\theta}^{(2)}(u_1-u_2))}
     {\rho_{\theta}(u_1)\rho_{\theta}(u_2)g_{\theta}^{(2)}(u_1-u_2)},
\end{align*}
then the corresponding score function is given by the $l$-dimensional
vector of innovations
\begin{align*}
  &\nabla\ell_P(\theta;\x)
    =
    \mathcal{I}_{\xi_{\theta}^2}^{h_{\theta}}(W^2;\x)=
  \\
  =&
     \mathop{\sum\nolimits\sp{\ne}}_{x_1,x_2\in\x\cap W}
     \1\{x_1\in W_{\ominus r}\}
     \1\{d(x_1,x_2)\leq r\}
     \frac{\nabla(\rho_{\theta}(x_1)\rho_{\theta}(x_2)g_{\theta}^{(2)}(x_1-x_2))}
     {\rho_{\theta}(x_1)\rho_{\theta}(x_2)g_{\theta}^{(2)}(x_1-x_2)}
  \\
  &-
    \int_W\int_W
    \1\{u\in W_{\ominus r}\}
    \1\{d(u_1,u_2)\leq r\}
    \nabla(\rho_{\theta}(u_1)\rho_{\theta}(u_2)g_{\theta}^{(2)}(u_1-u_2))
    \de u
    \in\R^l,
\end{align*}
provided that we can interchange differentiation and
integration. Hence, carrying out Palm-likelihood is equivalent to
minimising a vector of squared innovations of the form
\eqref{e:QuasiLikProdDens}.

\begin{remark}
  One could, potentially, consider a similar approach for other
  (higher-order) reduced Palm product densities/correlation functions,
  where other (higher-order) innovations would be exploited.
\end{remark}

\subsection{Papangelou conditional intensity estimation}
\label{s:PapangelouUnivariateInnovation}
The idea here is to assume that the observed point pattern
$\x\subseteq W$ is a realisation of a point process $X$ with
Papangelou conditional intensity $\lambda_{\theta_0}$,
$\theta_0\in\Theta$, which is a member of some parametric family of
Papangelou conditional intensity functions $\Xi_{\Theta}^1=
\{\xi_{\theta}^1:\theta\in\Theta\}=\{\lambda_{\theta}:\theta\in\Theta\}$,
$\Theta\subseteq\R^l$, $l\geq1$.  
Hence, we obtain classical innovations and the
innovation-based loss function
\begin{align*}
  \mathcal{L}(\theta)
  =
  \mathcal{I}_{\lambda_{\theta}}^{h_{\theta}}(W;\x)^2
  =&
     \left(
     \sum_{x\in\x\cap W}
     h_{\theta}(x;\x\setminus\{x\})
     -
     \int_{W}
     h_{\theta}(u;\x)
     \lambda_{\theta}(u;\x)
     \de u
     \right)^2,
     \quad
     \theta\in\Theta,
\end{align*}
for some suitable family of test functions $\mathcal{H}_{\Theta}$. In the
literature, this is sometimes referred to as Takacs–Fiksel estimation
\citep[see e.g.][and the references
therein]{moller2017some,coeurjolly2019understanding}.  The most
prominent special case hereof, namely pseudo-likelihood estimation
\citep{besag1974spatial,ripley1991statistical,VanLieshoutBook,MW04,BRT15}, is obtained by setting
$h_{\theta}(\cdot)
=\nabla\lambda_{\theta}(\cdot)/\lambda_{\theta}(\cdot)\in\R^{l}$,
where $\nabla\lambda_{\theta}(\cdot)$ is the (well-defined) gradient
of $\lambda_{\theta}$. Note that this yields the score function of the
(log-)pseudo-likelihood function, which in turn may be
expressed as the loss function
\begin{align*}
  \mathcal{L}(\theta)
  =&
     \int_{W}
     \lambda_{\theta}(u;\x)
     \de u
     -
     \sum_{x\in\x\cap W}
     \log\lambda_{\theta}(x;\x\setminus\{x\})
     ,
     \quad
     \theta\in\Theta.
\end{align*}

\subsection{Minimum contrast estimation of second-order summary statistics}
\label{s:MinimumContrast}

We next turn to the case where the general parametrised estimator
family is of the form \eqref{e:GeneralEstimatorNoPP} and is given by
$\xi_{\theta}^2=\rho_{\theta}^{(2)}$, $\theta\in\Theta$, for some
family of second-order product densities. Assuming that the point
process $X$ in $S=\R^d$
has the second-order product density $\rho_{\theta_0}^{(2)}$ for some
$\theta_0\in\Theta$, we consider the test function
$$
h(u_1,u_2;r)= \1\{d(u_1,u_2)\leq r\}/(|W|\rho(u_1)\rho(u_2)), \quad
u_1,u_2\in W,
$$
where $r\geq0$, $\rho(u)$, $u\in W$, is the true (unknown) intensity
of $X$ and $d(u_1,u_2)=\|u_1-u_2\|_2$ is the Euclidean metric on
$\R^d$. This yields the innovations
\begin{align}
\label{e:Kinnovation}
  \mathcal{I}_{\rho_{\theta}^{(2)}}^{h(\cdot;r)}(W\times\R^d;X)
  =&
     \frac{1}{|W|}
     \sum_{(x_1,x_2)\in X_{\neq}^2\cap(W\times\R^d)}
     \frac{
     \1\{d(x_1,x_2)\leq r\}
     }{\rho(x_1)\rho(x_2)}
  \\
   &-
     \frac{1}{|W|}
     \int_W\int_{b(u_1,r)}
     g_{\theta}^{(2)}(u_1,u_2)
     \de u_1\de u_2
\quad\theta\in\Theta,
\nonumber
\end{align}
where $b(u,r)$ denotes the closed Euclidean $r$-ball around $u\in\R^d$
and
$g_{\theta}^{(2)}(u_1,u_2)=\rho_{\theta}^{(2)}(u_1,u_2)/(\rho(u_1)\rho(u_2))$
is the pair correlation function corresponding to $\theta$.  If the
true second-order product density of $X$ is given by
$\rho_{\theta_0}^{(2)}$ for some $\theta_0\in\Theta$, by Lemma
\ref{lemma:InnovationUnbiased} we obtain that
$\E[\mathcal{I}_{\rho_{\theta_0}^{(2)}}^{h}(W;X)]=0$.  Under the
assumption of second-order intensity reweighted stationarity (recall Section \ref{s:ProductDensities}), i.e.~when the intensity is positive and the pair
correlation functions satisfy
$g_{\theta}^{(2)}(u_1,u_2)=g_{\theta}^{(2)}(u_1-u_2)$,
$u_1,u_2\in\R^d$, $\theta\in\Theta$,
we have that the integral term in \eqref{e:Kinnovation} becomes
$$
K_{\rm inhom}(r;\theta)
=\int_{b(v,r)}g_{\theta}^{(2)}(u)\de u
$$
for any $W\subseteq\R^d$, $|W|>0$, and any $\theta\in\Theta$; these
functions are constant as functions of $v\in\R^d$.  Note that the
function $K_{\rm inhom}(r)=K_{\rm inhom}(r;\theta_0)$, $r\geq0$, is
the inhomogeneous $K$-function of $X$ \citep{InhomK2000}, which a
measure of interaction. For a Poisson process $X$ we have that
$K_{\rm inhom}(r)=|b(v,r)|$, $v\in\R^d$, so
$K_{\rm inhom}(r)>|b(v,r)|$ indicates clustering/aggregation between
points with inter-point distance at most $r$ whereas
$K_{\rm inhom}(r)<|b(v,r)|$ indicates inhibition/regularity.

In practice one considers
$\mathcal{I}_{\rho_{\theta}^{(2)}}^{h(\cdot;r)}(W\times W;X)$ because
we only observe events within $W$.  Consequently, one needs to include
an edge-correction/normalisation factor $e(\cdot,\cdot)$ in the test
function. This is done by simply changing the test function to
\begin{align*}
  \widetilde h_e(u_1,u_2;r)=e(u_1,u_2)h(u_1,u_2;r).
\end{align*}
An edge-correction factor adjusts for (unobserved) interactions
between events in $X\cap W$ and those in $X\cap \R^d\setminus
W$. 
Ideally, including $e(\cdot,\cdot)$ will ensure that
$\mathcal{I}_{\rho_{\theta}^{(2)}}^{\widetilde h(\cdot;r)}(W^2;X)$ has
expectation 0. For further details, see e.g.~\citep{InhomK2000,MW04,BRT15}. When employing
$\mathcal{I}_{\rho_{\theta}^{(2)}}^{\widetilde h_e(\cdot;r)}(W^2;X)$,
i.e.~using $\widetilde h_e$ as test function, we refer to the sum-term
in \eqref{e:Kinnovation} as a non-parametric estimator of
$K_{\rm inhom}(r)$.

Assuming second-order intensity reweighted stationarity, by minimising the loss function
$$
\mathcal{L}(\theta)=\int_0^{r_{max}}\mathcal{I}_{\rho_{\theta}^{(2)}}^{\widetilde h_e(\cdot;r)}(W^2;X)^2\de r, \quad 
\theta\in\Theta, 
$$
for some $r_{max}>0$, we obtain a minimum contrast estimator
$\widehat\theta_W(X)$ of $\theta_0$. There are also further
developments of this concept, where metrics other than the
$L_2$-distance are used to measure discrepancies between
$K_{\rm inhom}(r;\theta)$, $r\geq0$, and its estimator
\citep{BRT15,Diggle14Book}.

\section{Proofs}
\label{s:Proofs}

\subsection{Proof of Theorem \ref{lemma:Thinning}}

\begin{proof}[Proof of Theorem \ref{lemma:Thinning}]
Starting with expression \eqref{e:ThinningProdDens}, the form of the Papangelou conditional intensity is a direct consequence of \citet[Theorem 4.7]{decreusefond2018stein} and \eqref{e:nthPapangelou}. The result on the product densities follows from e.g.~\citet[Proposition 2.3.24]{baccelli2020}.

The structure of the proof of the prediction formula in \eqref{eq:thinning Z vs Y} follows the lines of the proof of  \citet[Exercise 5.9]{last2017lectures}. 
Consider the random measure representation of $X$, where there are random variables $N =X(S)\in \{0,\ldots,\infty \}$ and $X_1, \ldots X_N\in S$ such that $X(A) = \sum_{i=1}^{N} \delta_{X_i}(A)=\sum_{i=1}^{N} \1\{X_i\in A\}$, $A\subseteq S$.  
An independent thinning of $X$ has the same distribution as 
$Z(\cdot)=\sum_{i=1}^{N} B_i \delta_{X_i}(\cdot)$, where, conditionally on $N$ and $X_1,\ldots,X_N$, the random variables $B_1,\ldots, B_N$ are
mutually independent and, for any $i=1,\ldots,N$, conditionally on $X_i$, the random variable $B_i$ is Bernoulli distributed with parameter $p(X_i)$. Similarly, $Y=X\setminus Z$ has the random measure representation $Y(\cdot)=X(\cdot)-Z(\cdot)= \sum_{i=1}^{N} (1-B_i) \delta_{X_i}(\cdot)$. 
For any $m\geq n$, 
let $\mathcal{A}_m$ be the set of all $n$-tuples of distinct 
integers $i_1,\ldots,i_n\in \{1,\ldots, m\}$;
if $m$ is infinite, 
we let $i_1,\ldots,i_n$ be finite. 
It now follows that 
\begin{align*}
    &\E\left[\mathop{\sum\nolimits\sp{\ne}}_{x_1,\ldots,x_n\in Z}
    h(x_1,\ldots,x_n, Y)
    \prod_{i=1}^n(1-p(x_i))\right]
    =\\
    =&
    \E \left[\sum_{i_1,\ldots,i_n \in \mathcal{A}_N}
    h(X_{i_1},\ldots,X_{i_n}, Y)
    \prod_{j=1}^{n} B_{i_j} (1-p(X_{i_j})) 
    \right] \\
    =& 
    \E \left[\sum_{i_1,\ldots,i_n \in \mathcal{A}_N}
    h(X_{i_1},\ldots,X_{i_n}, Y)
    \prod_{j=1}^{n}\E[1-B_{i_j}| X] 
    \prod_{j=1}^{n} B_{i_j} 
    \right]
    \\
    =& 
    \E\left[\sum_{i_1,\ldots,i_n \in \mathcal{A}_N}
    h(X_{i_1},\ldots,X_{i_n}, Y\setminus \{X_{i_1},\ldots, X_{i_n}\})
    \prod_{j=1}^{n}\E[1-B_{i_j}| X] 
    \prod_{j=1}^{n} B_{i_j} 
    \right],
\end{align*}
where we have used that $p(X_{i_j})=\E[B_{i_j} | X]$ and that $Y\cap\{ X_{i_1},\ldots, X_{i_n} \}=\emptyset$, i.e.~$Y= Y \setminus \{ X_{i_1},\ldots, X_{i_n} \}$, when $B_{i_j}=1$ for all $j=1,\ldots,n$. 
By the conditional independence of the $B_{i_j}$'s, we have that 
$\prod_{j=1}^{n}\E[1-B_{i_j}| X]
=
\E[\prod_{j=1}^{n}(1-B_{i_j})| X]$, and writing 
$\tilde{h}_{i_1,\ldots,i_n}(X,Y) =  h(X_{i_1},\ldots,X_{i_n}, Y \setminus \{ X_{i_1},\ldots, X_{i_n} \})$, we obtain
 \begin{align*}
    &\E\left[\sum_{i_1,\ldots,i_n \in \mathcal{A}_N}
    h(X_{i_1},\ldots,X_{i_n}, Y\setminus \{X_{i_1},\ldots, X_{i_n}\})
    \prod_{j=1}^{n}\E[1-B_{i_j}| X] 
    \prod_{j=1}^{n} B_{i_j} 
    \right]=
    \\
    =&
    \E\left[
    \sum_{i_1,\ldots,i_n \in \mathcal{A}_N}
    \tilde{h}_{i_1,\ldots,i_n}(X,Y) \left.\E\left[\prod_{j=1}^{n}(1-B_{i_j})\right| X\right] 
    \prod_{j=1}^{n} B_{i_j} 
    \right] \\
    =&
    \sum_{i_1,\ldots,i_n \in \mathcal{A}_\infty}
    \E \left[
    \1\{N \geq \max\{i_1,\ldots,i_n\}\}
    \tilde{h}_{i_1,\ldots,i_n}(X,Y) \left.\E\left[\prod_{j=1}^{n}(1-B_{i_j})\right| X\right]
    \prod_{j=1}^{n} B_{i_j} 
    \right] \\
    =&
    \sum_{i_1,\ldots,i_n \in \mathcal{A}_\infty}
    \E\left[
    \left.
    \E\left[
    \1\{N \geq \max\{i_1,\ldots,i_n\}\}
    \tilde{h}_{i_1,\ldots,i_n}(X,Y)
    \prod_{j=1}^{n}(1-B_{i_j})\right| X\right]
    \prod_{j=1}^{n} B_{i_j} 
    \right],
\end{align*}
where the last equality follows from the "pulling out known factors" property of conditional expectations; $N$ and $\tilde{h}_{i_1,\ldots,i_n}(X,Y)$ are measurable with respect to the $\sigma$-algebra generated by $X$. 
By the law of total expectation it follows that 
\begin{align*}
    &\sum_{i_1,\ldots,i_n \in \mathcal{A}_\infty}
    \E\left[
    \left.
    \E\left[
    \1\{N \geq \max\{i_1,\ldots,i_n\}\}
    \tilde{h}_{i_1,\ldots,i_n}(X,Y)
    \prod_{j=1}^{n}(1-B_{i_j})\right| X\right]
    \prod_{j=1}^{n} B_{i_j} 
    \right]
    =\\
    =&
    \sum_{i_1,\ldots,i_n \in \mathcal{A}_\infty}
    \E\left[
    \left.
    \E\left[
    \left.
    \E\left[
    \1\{N \geq \max\{i_1,\ldots,i_n\}\}
    \tilde{h}_{i_1,\ldots,i_n}(X,Y)
    \prod_{j=1}^{n}(1-B_{i_j})\right| X\right]
    \prod_{j=1}^{n} B_{i_j} 
    \right| X\right]
    \right]\\
    =&
    \sum_{i_1,\ldots,i_n \in \mathcal{A}_\infty}
    \E\left[
    \left.
    \E\left[
    \1\{N \geq \max\{i_1,\ldots,i_n\}\}
    \tilde{h}_{i_1,\ldots,i_n}(X,Y)
    \prod_{j=1}^{n}(1-B_{i_j})\right| X\right]
    \left.
    \E\left[
    \prod_{j=1}^{n} B_{i_j} 
    \right| X\right]
    \right]
    \\
    =&
    \sum_{i_1,\ldots,i_n \in \mathcal{A}_\infty}
    \E\left[
    \left.
    \E\left[
    \1\{N \geq \max\{i_1,\ldots,i_n\}\}
    \tilde{h}_{i_1,\ldots,i_n}(X,Y)
    \prod_{j=1}^{n}(1-B_{i_j})\right| X\right]
    \prod_{j=1}^{n} p(X_{i_j})
    \right]
    \\
    =&
    \sum_{i_1,\ldots,i_n \in \mathcal{A}_\infty}
    \E\left[
    \left.
    \E\left[
    \1\{N \geq \max\{i_1,\ldots,i_n\}\}
    \tilde{h}_{i_1,\ldots,i_n}(X,Y)
    \prod_{j=1}^{n}(1-B_{i_j})
    \prod_{j=1}^{n} p(X_{i_j})
    \right| X\right]
    \right]
    \\
    =&
    \sum_{i_1,\ldots,i_n \in \mathcal{A}_\infty}
    \E\left[
    \1\{N \geq \max\{i_1,\ldots,i_n\}\}
    \tilde{h}_{i_1,\ldots,i_n}(X,Y)
    \prod_{j=1}^{n}(1-B_{i_j})
    \prod_{j=1}^{n} p(X_{i_j})
    \right]
    \\
    =&
    \E\left[
    \sum_{i_1,\ldots,i_n \in \mathcal{A}_{N}}
    h(X_{i_1},\ldots,X_{i_n}, Y \setminus \{ X_{i_1},\ldots, X_{i_n} \})
    \prod_{j=1}^{n}(1-B_{i_j})
    \prod_{j=1}^{n} p(X_{i_j})
    \right]
    ,
\end{align*}
where we have used the fact that 
$\E[\prod_{j=1}^{n} B_{i_j} | X ]
=   
\prod_{j=1}^{n} \E[B_{i_j}| X ] 
= 
\prod_{j=1}^{n} p(X_{i_j})$ by the conditional independence of the $B_{i_j}$'s, as well as the above-mentioned property of conditional expectations for $\prod_{j=1}^{n} p(X_{i_j})$ and the $\sigma$-algebra generated by $X$. Exploiting the representation $Y(\cdot)=X(\cdot)-Z(\cdot)= \sum_{i=1}^{N} (1-B_i) \delta_{X_i}(\cdot)$, we finally obtain that 
\begin{align*}
&
    \E\left[
    \sum_{i_1,\ldots,i_n \in \mathcal{A}_{N}}
    h(X_{i_1},\ldots,X_{i_n}, Y \setminus \{ X_{i_1},\ldots, X_{i_n} \})
    \prod_{j=1}^{n}(1-B_{i_j})
    \prod_{j=1}^{n} p(X_{i_j})
    \right]=
    \\
    =&
    \E \left[
    \mathop{\sum\nolimits\sp{\ne}}_{x_1,\ldots,x_n\in Y}
    h(x_1,\ldots,x_n, Y \setminus \{ x_1,\ldots, x_n \}) 
    \prod_{i=1}^{n} p(x_i) 
    \right]
    ,
\end{align*}
which proves \eqref{eq:thinning Z vs Y}.

Next, let 
$\psi_0(\breve{\x})=\{x:(x,m)\in
\breve{\x}\cap S\times \{0\} \}$, 
$\breve{\x}\in\breve{\X}$, and consider any non-negative or integrable
$h:S^n\times\X\to\R$. 
Applying the GNZ formula to the left-hand side of \eqref{eq:thinning Z vs Y} yields 
\begin{align*}
&\E\left[\mathop{\sum\nolimits\sp{\ne}}_{x_1,\ldots,x_n\in Z}
    h(x_1,\ldots,x_n, Y)
    \prod_{i=1}^n(1-p(x_i))
    \right] 
=
\\
=&
\E\left[\mathop{\sum\nolimits\sp{\ne}}_{(x_1,m_1),\ldots,(x_n,m_n)\in \breve{X}}
    h(x_1,\ldots,x_n,\psi_0(\breve{X}))
\prod_{i=1}^{n}m_i(1-p(x_i))
    \right] 
\\
=&
\int_{S^n}
    \E\left[
    \prod_{i=1}^n(1-p(u_i))
    h(u_1,\ldots,u_n; \psi_0(\breve{X}))
    \breve\lambda^{(n)}((u_1,1),\ldots,(u_n,1);\breve{X})
    \right]
    \de u_1\cdots\de u_n
\\
=&
\int_{S^n}
    \E\left[
    \prod_{i=1}^n(1-p(u_i))
    h(u_1,\ldots,u_n; Y)
    \breve\lambda^{(n)}((u_1,1),\ldots,(u_n,1);\breve{X})
    \right]
    \de u_1\cdots\de u_n,
\end{align*}
since the reference measure on the mark space is the counting measure on $\M=\{0,1\}$. 
On the other hand, applying the GNZ formula to the right hand side of \eqref{eq:thinning Z vs Y} yields 
\begin{align*}
&\E\left[\mathop{\sum\nolimits\sp{\ne}}_{x_1,\ldots,x_n\in Y}
    h(x_1,\ldots,x_n, Y\setminus \{x_1,\ldots,x_n\})
    \prod_{i=1}^n p(x_i)
    \right]
=
\\
=&
\int_{S^n}
\E\left[
    \prod_{i=1}^n p(u_i)
    h(u_1,\ldots,u_n; Y)
    \lambda_Y^{(n)}(u_1,\ldots,u_n;Y)
    \right]
    \de u_1\cdots\de u_n
    .
\end{align*}
The equality of these two expressions for arbitrary $h:S^n\times\X\to\R$ yields that for almost every $u_1,\ldots,u_n\in S^n$ and every non-negative or integrable $h^*:\X\to\R$ we have that
\[
\E\left[
    h^*(Y)\left(
    \breve\lambda^{(n)}((u_1,1),\ldots,(u_n,1);\breve{X})
    -
    \frac{\prod_{i=1}^n p(u_i)}{\prod_{i=1}^n (1-p(u_i))}\lambda_Y^{(n)}(u_1,\ldots,u_n;Y)
    \right)
    \right]
=0,
\]
which concludes the proof.

\end{proof}

\subsection{Proof of Lemma \ref{lemma:InnovationUnbiased}}
\begin{proof}[Proof of Lemma \ref{lemma:InnovationUnbiased}]
  This is an immediate consequence of the GNZ
  formula \eqref{eq:GNZ} and the Campbell formula \eqref{eq:Campbell}.
\end{proof}

\subsection{Proof of Theorem \ref{thm:InnovationsMean}}
\begin{proof}[Proof of Theorem \ref{thm:InnovationsMean}]
  For ease of notation, we sometimes write $\de u$ for $\de u_1 \cdots \de u_n$.
  
  \subsubsection*{When $h$ and $\xi$ are of the form
\eqref{e:GeneralEstimatorNoPP}}

When $h$ and $\xi$ are of the form
\eqref{e:GeneralEstimatorNoPP}, by the Campbell formula we have that
\begin{align*}
\E[\mathcal{I}_{\xi}^{h}(A;Z,Y)]
=&
\E\left[
\sum_{(x_1,\ldots,x_n)\in Z_{\neq}^n\cap A}
       h(x_1,\ldots,x_n)
\right]
-
\int_{A}
     h(u_1,\ldots,u_n)
     \xi(u_1,\ldots,u_n)
     \de u
\\
=&
\int_A
     h(u_1,\ldots,u_n)
     \left(
     \rho_Z^{(n)}(u_1,\ldots,u_n)
     \de u
     -
     \xi(u_1,\ldots,u_n)
     \right)
     \de u, \qquad A \subseteq S^n.
\end{align*}
Hence,
$\E[\mathcal{I}_{\xi}^{h}(A;Z,Y)]=0$ for any
(bounded) $A\subseteq S^n$ and function $h$ if and only if
\begin{equation*}
\xi(u_1,\ldots,u_n)
=
\rho_Z^{(n)}(u_1,\ldots,u_n),   
\end{equation*}
for $|\cdot|^n$-almost every $(u_1,\ldots,u_n)\in S^n$; see e.g.~\citet[Section 2.3.3]{MW04}. 

We further have that 
\begin{align*}
\Var(\mathcal{I}_{\xi}^{h}(A;Z,Y))
=& 
\Var\left(
\sum_{(x_1,\ldots,x_n)\in Z_{\neq}^n\cap A}
       h(x_1,\ldots,x_n)
\right)
\\
=&
\E\left[\left(
\sum_{(x_1,\ldots,x_n)\in Z_{\neq}^n\cap A}
       h(x_1,\ldots,x_n)
\right)^2\right]
\\
&
 - \left(
\int_{A}
     h(u_1,\ldots,u_n)
     \rho_Z^{(n)}(u_1,\ldots,u_n)
     \de u\right)^2,
\end{align*}
where, by \citet[Equation (B.3)]{poinas2019mixing},
\begin{align*}
&\E\left[\left(
\sum_{(x_1,\ldots,x_n)\in Z_{\neq}^n\cap A}
       h(x_1,\ldots,x_n)
\right)^2\right]
=
\\
=&
\E\left[\left(
\mathop{\sum\nolimits\sp{\ne}}_{x_1,\ldots,x_n\in Z}
\1\{(x_1,\ldots,x_n)\in A\}
h(x_1,\ldots,x_n)
\right)^2\right]
\\
=&
\E\left[\left(
\sum_{\y\subseteq Z}
n!
\1\{\#\y=n\}\1\{\y\in A\}h(\y)
\right)^2\right]
\\
=&
\sum_{j=0}^n
\frac{(n!)^2}{(2n-j)!}
\binom{n}{j}
\binom{2n-j}{n}
\int_{S^{2n-j}}
h(u_1,\ldots,u_n)
h(u_1,\ldots,u_j,u_{n+1},\ldots,u_{2n-j})
\times
\\
&\times
\1\{(u_1,\ldots,u_n)\in A\}
\1\{(u_1,\ldots,u_j,u_{n+1},\ldots,u_{2n-j})\in A\}
\times
\\
&\times
\rho_Z^{(2n-j)}(u_1,\ldots,u_{2n-j})
\de u_1\cdots\de u_{2n-j}.
\end{align*}


\subsubsection*{When $h$ and $\xi$ are of the form
\eqref{e:GeneralEstimator}}

We here let $h$ and $\xi$ be of the form
\eqref{e:GeneralEstimator} and start by defining
  \begin{align*}
    H_1(A)
    =&
       \sum_{(x_1,\ldots,x_n)\in Z_{\neq}^n\cap A}
       h(x_1,\ldots,x_n;Y\setminus\{x_1,\ldots,x_n\})
       ,
    \\
    H_2(A)=&
     \int_{A}
     h(u_1,\ldots,u_n;Y)
     \xi(u_1,\ldots,u_n;Y)
     \de u,
    \\
    \mu_1(A)
    =&\E\left[
       H_1(A)\right],
    \\
    \mu_2(A)=&
    \E\left[
    H_2(A)
    \right],
    \quad A\subseteq S^n,
\end{align*}
where we note that
\begin{align}
  \E[\mathcal{I}_{\xi}^{h}(A;Z,Y)]
  =&
     \mu_1(A)-\mu_2(A),
     \notag
  \\
  \E[\mathcal{I}_{\xi}^{h}(A;Z,Y)^2]
  =&
     \E[H_1(A)^2] + \E[H_2(A)^2]
     - 2\E[H_1(A) H_2(A)],
     \notag
  \\
  \Var(\mathcal{I}_{\xi}^{h}(A;Z,Y)) 
  =& \E[H_1(A)^2] + \E[H_2(A)^2]
     - 2\E[H_1(A) H_2(A)]
    -
     (\mu_1(A)-\mu_2(A))^2.
     \label{e:variance expectation innov I H1 H2}
\end{align}




Next, recall the associated marked point process $\breve{X}$ in \eqref{e:ZuYmpp}, with Papangelou conditional
intensity $\breve\lambda^{(n)}(\cdot)$. 
Given $\psi_0(\breve{\x})=\{x:(x,m)\in
\breve{\x}\cap S\times \{0\} \}$, $\breve{\x}\in\breve{\X}$, by the GNZ formula, 
\begin{align*}
  &\mu_1(A)=\\
    =&\E \left[
    \sum_{(x_1,\ldots,x_n)\in Z_{\neq}^n\cap A}
    h(x_1,\ldots,x_n;Y\setminus\{x_1,\ldots,x_n\})
    \right]
  \\
  =&
     \E\left[
     \sum_{((x_1,m_1),\ldots,(x_n,m_n))\in \breve{X}_{\neq}^n\cap (A\times\M^n)}
     \prod_{i=1}^n m_i
    h(x_1,\ldots,x_n;
    \psi_0(\breve{X} \setminus \{(x_1,m_1),\ldots,(x_n,m_n)\}))
    \right]\\
    =&
    \int_A\sum_{m_1,\ldots,m_n\in\{0,1\}}
    \E\left[
    \prod_{i=1}^n m_i
    h(u_1,\ldots,u_n;
    \psi_0(\breve{X}))
    \breve\lambda^{(n)}((u_1,m_1),\ldots,(u_n,m_n);\breve{X})
    \right]
    \de u
    \\
    =&
\int_A
    \E \left[
    h(u_1,\ldots,u_n; \psi_0(\breve{X}))
    \breve\lambda^{(n)}((u_1,1),\ldots,(u_n,1);\breve{X})
    \right]
\de u
\\
  =&
\int_A
     \E \left[
     h(u_1,\ldots,u_n;Y)
     \breve\lambda^{(n)}((u_1,1),\ldots,(u_n,1);\breve{X})
     \right]
     \de u
    ,
\end{align*}
since the reference measure on the mark space is given by the counting measure on the mark space $\M=\{0,1\}$. On the other hand, by the Fubini-Tonelli theorem, 
\begin{align*}
\mu_2(A)  =&
  \E \left[
  \int_{A}
  h(u_1,\ldots,u_n;Y)
  \xi(u_1,\ldots,u_n;Y)
\de u 
    \right]
\\
=&
\int_{A}
    \E \Big[
    h(u_1,\ldots,u_n;Y)
    \xi(u_1,\ldots,u_n;Y)
    \Big]
     \de u
     .
\end{align*}
%
Hence, $\E[\mathcal{I}_{\xi}^{h}(A;Z,Y)]=0$ for
any (bounded) $A\subseteq S^n$ if and only if
\begin{align*}
\E \left[ h(u_1,\ldots,u_n;Y)
    \left(
    \breve\lambda^{(n)}((u_1,1),\ldots,(u_n,1);\breve{X})
    -
    \xi(u_1,\ldots,u_n;Y)
    \right)
    \right]
    =
    0,
\end{align*}
for $|\cdot|^n$-almost every $(u_1,\ldots,u_n)\in S^n$; see e.g.~\citet[Section 2.3.3]{MW04}. 
Moreover, under the assumption that 
$\E[\breve\lambda^{(n)}((u_1,1),\ldots,(u_n,1);\breve{X})^2]<\infty$ and $\E[h(u_1,\ldots,u_n;Y)^2]<\infty$, 
$L_2$-projection
yields that 
\begin{equation*}
\xi(u_1,\ldots,u_n;Y) 
=
\E[\breve\lambda^{(n)}((u_1,1),\ldots,(u_n,1);\breve{X})|Y]
.
\end{equation*}
 



We next turn to the variance. Similarly to \citet[Equation (B.3)]{poinas2019mixing}, we find that 
\begin{align*}
  &\E[H_1(A)^2]
    =\\
    =&
    \E\Bigg[
    \sum_{(x_1,\ldots,x_n)\in Z_{\neq}^n}
    \sum_{(y_1,\ldots,y_n)\in Z_{\neq}^n}
    \1\{(x_1,\ldots,x_n),(y_1,\ldots,y_n)\in A\}
    \times
  \\
  &\times
    h(x_1,\ldots,x_n;Y\setminus\{x_1,\ldots,x_n\})
    h(y_1,\ldots,y_n;Y\setminus\{y_1,\ldots,y_n\})
    \Bigg]
  \\
  =&
     n!^2
     \E\left[
     \sum_{\x=\{x_1,\ldots,x_n\}\subseteq Z}
     \sum_{\y=\{y_1,\ldots,y_n\}\subseteq Z}
     \1\{\x,\y\in A\}
     h(\x;Y\setminus\x)
     h(\y;Y\setminus\y)
     \right]
  \\
  =&
     n!^2
     \sum_{j=0}^n
     \E\Bigg[
     \sum_{\x=\{x_1,\ldots,x_n\}\subseteq Z}
     \sum_{\y=\{y_1,\ldots,y_n\}\subseteq Z}
     \1\{\#(\x\cap\y)=j\}
     \1\{\x,\y\in A\}
     h(\x;Y\setminus\x)
     h(\y;Y\setminus\y)
     \Bigg]
     ,
\end{align*}
where the factor $n!^2$ comes from the fact that when we go from
$n$-subsets to $n$-tuples we count the same thing $n!$ times; we can
rearrange $(x_1,\ldots,x_n)$ in $n!$ different ways.  In the last sum,
assuming that $\#(\x\cap\y)=j$, i.e.~that $\x$ and $\y$ have $j$
elements $x_i=y_{i'}\in Z$ in common, there are $\binom{n}{j}$ ways in
which the elements in $\x\cap\y$ can be chosen from $\x$ and $\y$.
The remaining $2n-j$ elements now need to be assigned to
$\x\setminus(\x\cap\y)$ and $\y\setminus(\x\cap\y)$.  There are
$\binom{2n-j}{n-j}=\binom{2n-j}{n}$ ways to assign elements of
$(\x\cap\y)^c$ to $\x\setminus(\x\cap\y)$ so that $\#\x=n$; the
remaining elements will automatically be assigned to
$\y\setminus(\x\cap\y)$. 
In the following, we let $z_1,\ldots,z_j$ denote the elements in $\x\cap \y$, $z_{j+1},\ldots,z_n$ the elements only in $\x$, and 
$z_{n+1},\ldots,z_{2n-j}$  the ones only in $\y$.
Consequently,
\begin{align*}
  &\E[H_1(A)^2]
    =
  \\
  =&
     n!^2
     \sum_{j=0}^n
     \binom{n}{j}
     \binom{2n-j}{n}
     \E\Bigg[
     \sum_{\{z_1,\ldots,z_{2n-j}\}\subseteq Z}
     \1\{\{z_1,\ldots,z_n\}\in A\}
     \times
  \\
  &\times
    \1\{\{z_1,\ldots,z_j,z_{n+1},\ldots, z_{2n-j}\}\in A\}
    h(z_1,\ldots,z_n;Y\setminus\{z_1,\ldots,z_n\}) 
    \times
  \\
  &\times
    h(z_1,\ldots,z_j,z_{n+1},\ldots, z_{2n-j};Y\setminus\{z_1,\ldots,z_j,z_{n+1},\ldots, z_{2n-j}\})
    \Bigg]
  \\
  =&
     \sum_{j=0}^n
     \binom{n}{j}
     \binom{2n-j}{n}
     \frac{n!^2}{(2n-j)!}
     \E\Bigg[
     \sum_{(z_1,\ldots,z_{2n-j})\subseteq Z_{\neq}^{2n-j}}
     \1\{(z_1,\ldots,z_n)\in A\}
     \times
  \\
  &\times
    \1\{(z_1,\ldots,z_j,z_{n+1},\ldots, z_{2n-j})\in A\}
    h(z_1,\ldots,z_n;Y\setminus\{z_1,\ldots,z_n\}) 
    \times
  \\
  &\times
    h(z_1,\ldots,z_j,z_{n+1},\ldots, z_{2n-j};Y\setminus\{z_1,\ldots,z_j,z_{n+1},\ldots, z_{2n-j}\})
    \Bigg],
\end{align*}
where we note that
\[
\frac{(n!)^2}{(2n-j)!}
\binom{n}{j}
\binom{2n-j}{n}
=
\binom{n}{j}
\frac{(n!)^2}{(2n-j)!}
\frac{(2n-j)!}{n!(n-j)!}
=
j!
\binom{n}{j}^2
.
\]
By applying the GNZ formula
to each term in the sum in the last equation, it follows that
\begin{align}
\label{e:variance innov H1 square}
 &\E[H_1(A)^2]=
 \\
  =&
     \sum_{j=0}^n
     j! \binom{n}{j}^2
     \int_{S^{2n-j}}
     \1\{(u_1,\ldots,u_n),(u_1,\ldots,u_j,u_{n+1},\ldots, u_{2n-j})\in A\}
     \notag
  \\
  &\times
  \E\Big[
     h(u_1,\ldots,u_n;Y)
     h(u_1,\ldots,u_j,u_{n+1},\ldots, u_{2n-j};Y)
    \breve\lambda^{(2n-j)}((u_1,1),\ldots, (u_{2n-j},1);\breve{X})
    \Big]
      \notag
    \\
    &\times \de u_1\cdots \de u_{2n-j}
    .
    \notag
\end{align}
We further have that 
\begin{align}
  &\E[H_2(A)^2] = 
  \nonumber
  \\
  =&
     \int_{A}\int_{A}
     \E[
     h(u_1,\ldots,u_n;Y)
     h(v_1,\ldots,v_n;Y)
     \xi(u_1,\ldots,u_n;Y)
     \xi(v_1,\ldots,v_n;Y)
     ]
     \de u
     \de v
     \label{e:variance innov H2 square}
\end{align}
and 
\begin{align*}
  &\E[H_1(A)H_2(A)] =
  \\
    =&
     \E\Bigg[
     \sum_{(x_1,\ldots,x_n)\in Z_{\neq}^n\cap A}
     h(x_1,\ldots,x_n;Y\setminus\{x_1,\ldots,x_n\})
     \times
  \\
  &\times
    \int_{A}
    h(v_1,\ldots,v_n;Y)
    \xi(v_1,\ldots,v_n;Y)
    \de v
    \Bigg]
  \\
  =&
     \E\Bigg[
     \sum_{(x_1,\ldots,x_n)\in Z_{\neq}^n\cap A}
     h(x_1,\ldots,x_n;Y\setminus\{x_1,\ldots,x_n\})
     \times
  \\
  &\times
    \Bigg(
    \int_{A}
    h(v_1,\ldots,v_n;(Y\setminus\{x_1,\ldots,x_n\})\cup\{x_1,\ldots,x_n\})
    \times
  \\
  &\times
    \xi(v_1,\ldots,v_n;(Y\setminus\{x_1,\ldots,x_n\})\cup\{x_1,\ldots,x_n\})
    \de v
    \Bigg)
    \Bigg]
  \\
  =&
     \E\left[
     \sum_{(x_1,\ldots,x_n)\in Z_{\neq}^n\cap A}
     \widetilde h(x_1,\ldots,x_n;Y\setminus\{x_1,\ldots,x_n\})
     \right].
\end{align*}
where 
\begin{align*}
    \widetilde h(x_1,\ldots,x_n;Y\setminus\{x_1,\ldots,x_n\}) 
    =&
    h(x_1,\ldots,x_n;Y\setminus\{x_1,\ldots,x_n\}) 
    \\
    &\times
    \int_{A}
    h(v;
    (Y\setminus\{x_1,\ldots,x_n\})\cup\{x_1,\ldots,x_n\})
    \\
    &\times
    \xi(v;(Y\setminus\{x_1,\ldots,x_n\})\cup\{x_1,\ldots,x_n\})
    \de v.
\end{align*}
Hence, 
\begin{align}
  &\E[H_1(A)
    H_2(A)]
    = 
    \notag
    \\
  =&
    \int_A
    \E\Bigg[
    \widetilde h(u_1,\ldots,u_n;Y)
    \breve\lambda^{(n)}((u_1,1),\ldots,(u_n,1);\breve{X})
    \Bigg]
    \de u
    \notag
  \\
  =&
     \int_A\int_A
     \E\Bigg[
     h(u_1,\ldots,u_n;Y)
     h(v_1,\ldots,v_n;Y\cup\{u_1,\ldots,u_n\})
     \times
     \notag
  \\
  &\times
    \xi(v_1,\ldots,v_n;Y\cup\{u_1,\ldots,u_n\})
    \breve\lambda^{(n)}((u_1,1),\ldots,(u_n,1);\breve{X})
    \Bigg]
    \de u
    \de v
    \label{e:variance innov H1H2}
\end{align}
and, consequently, by combining~\eqref{e:variance expectation innov I H1 H2}
with~\eqref{e:variance innov H1 square},~\eqref{e:variance innov H2 square} and
\eqref{e:variance innov H1H2},
the variance is given by
\begin{align*}
  &\Var(\mathcal{I}_{\xi}^{h}(A;Z,Y)) 
    =
    \E[\mathcal{I}_{\xi}^{h}(A;Z,Y)^2]
    -
    \E[\mathcal{I}_{\xi}^{h}(A;Z,Y)]^2
  \\
  =& 
  \sum_{j=0}^n
     j! \binom{n}{j}^2
     \int_{S^{2n-j}}
     \1\{(u_1,\ldots,u_n),(u_1,\ldots,u_j,u_{n+1},\ldots, u_{2n-j})\in A\}
     \E\Big[
     h(u_1,\ldots,u_n;Y)
  \\
  &\times
     h(u_1,\ldots,u_j,u_{n+1},\ldots, u_{2n-j};Y)
    \breve\lambda^{(2n-j)}((u_1,1),\ldots, (u_{2n-j},1);\breve{X})
    \Big]
    \de u_1\cdots \de u_{2n-j}
  \\
  &+ 
     \int_{A}\int_{A}
     \E[
     h(u_1,\ldots,u_n;Y)
     h(v_1,\ldots,v_n;Y)
     \times
     \notag
  \\
   &\times
     \xi^n(u_1,\ldots,u_n;Y)
     \xi^n(v_1,\ldots,v_n;Y)
     ]
     \de u
     \de v
    \\
    &- 2\Bigg( 
    \int_A\int_A
     \E\Big[
     h(u_1,\ldots,u_n;Y)
     h(v_1,\ldots,v_n;Y\cup\{u_1,\ldots,u_n\})
     \times
     \notag
  \\
  &\times
    \xi(v_1,\ldots,v_n;Y\cup\{u_1,\ldots,u_n\})
    \breve\lambda^{(n)}((u_1,1),\ldots,(u_n,1);\breve{X})
    \Big]
    \de u
    \de v
    \Bigg)  
    \\
&- 
\Bigg( 
\E\left[h(u_1,\ldots,u_n;Y)
    \left(
    \breve\lambda^{(n)}((u_1,1),\ldots,(u_n,1);\breve{X})
    -
    \xi(u_1,\ldots,u_n;Y)
    \right)
    \right]
\Bigg)^2
\end{align*}
when $h$ and $\xi$ are of the form
\eqref{e:GeneralEstimator}. 

\end{proof}

\subsection{Proof of Theorem \ref{thm:ConstantIntensity}}
\begin{proof}[Proof of Theorem \ref{thm:ConstantIntensity}]
By
\eqref{e:EstFunGeneralMedian} 
and~\eqref{e:InnovationParametricIntensity}, 
we have that 
\begin{equation*}
    \mathcal{L}_1(\theta) 
    =
    \frac{1}{k}
    \sum_{i=1}^k
    |\mathcal{I}_{\xi_{\theta}^1}^{h}(W;\x_i^T)|
    =
    \frac{1}{k}
    \sum_{i=1}^k
    \left|\sum_{x\in\x_i^T\cap W}
    h(x)
    -
    \theta (1-p)
    \int_{W}
    h(u)
    \de u\right| \1\{\#\x_i^T\geq1\},
\end{equation*}
which has the same minimum with respect to $\theta$ as 
\begin{equation*}
   \frac{k \mathcal{L}_1(\theta) }{(1-p)|\int_{W}h(u)\de u|}
    =
    \sum_{i=1}^k
    \left|
    \frac{\sum_{x\in\x_i^T\cap W}
     h(x)}{(1-p)
     \int_{W}
     h(u)
     \de u}
     -
     \theta
     \right|
     \1\{\#\x_i^T\geq1\}.
\end{equation*}
The derivative of the last function is defined for almost all $\theta$ by
\begin{equation*}
    \sum_{i=1}^k 
    \mathrm{sgn}\left(
    \frac{\sum_{x\in\x_i^T\cap W}
     h(x)}{(1-p)
     \int_{W}
     h(u)
     \de u} -\theta
     \right)
    \1\{\#\x_i^T\geq1\},
\end{equation*}
where for all $x\in \R$, $\mathrm{sgn}(x)=\1\{x\geq 0\}
+ \1\{x< 0\}$.
This sum is  null only if on the set of all 
$i\in \mathcal{T}_k$, half of the 
$(\sum_{x\in\x_i^T\cap W} h(x)) / ((1-p)
\int_{W}
h(u)
\de u)$
are greater than $\theta$ and half are less than $\theta$. 
Therefore we have that a minimiser is given by $\widehat{\theta}_1(\{(\x_i^T,\x_i^V)\}_{i=1}^k,p,W,h) 
=
\med
\{\widehat\theta_i^h : i \in \mathcal{T}_k\}$.

Elementary calculus shows that the derivatives of $\mathcal{L}_2$ and 
$\mathcal{L}_3$ with respect to $\theta$ are null if $\theta$ satisfies
\begin{equation*}
    \sum_{i=1}^k
    \left(\sum_{x\in\x_i^T\cap W}
    h(x)
    -
    \theta (1-p)
    \int_{W}
    h(u)
    \de u\right) \1\{\#\x_i^T\geq1\}
    =
    0
\end{equation*}
which is equivalent to finding $\theta$ such that
\begin{equation*}
    \sum_{i=1}^k
    \left(
    \frac{\sum_{x\in\x_i^T\cap W}
    h(x)}{(1-p)
    \int_{W}
    h(u)
    \de u}
    -
    \theta \right) 
    \1\{\#\x_i^T\geq1\}
    =
    0.
\end{equation*}
Therefore, for $j=2,3$ we have that 
\begin{align*}
    \widehat{\theta}_j(\{(\x_i^T,\x_i^V)\}_{i=1}^k,p,W,h) 
    =& 
    \frac{1}{\#\mathcal{T}_k}
    \sum_{i\in \mathcal{T}_k} 
    \frac{\sum_{x\in\x_i^T\cap W}
    h(x)}{(1-p)
    \int_{W}
    h(u)
    \de u}
    \\
    =&
    \frac{\sum_{x\in\x\cap W}
    h(x)\frac{1}{\#\mathcal{T}_k}
    \sum_{i\in \mathcal{T}_k}\1\{x\in\x_i^T\}}{(1-p)
    \int_{W}
    h(u)
    \de u}.
\end{align*}

Further, when $j=2,3$, for any point configuration $\x$ we have that
\begin{align*}
    \E[\widehat{\theta}_j(\{(\x_i^T,\x_i^V)\}_{i=1}^k,p,W,h)] 
    =&
    \E\left[ 
    \frac{1}{\#\mathcal{T}_k}
    \sum_{i\in \mathcal{T}_k} 
    \left.
    \E\left[  
    \frac{\sum_{x\in\x\cap W}
    h(x)\1\{x\in\x_i^T\}}{(1-p)
    \int_{W}
    h(u)
    \de u}  \right| \mathcal{T}_k
    \right]
    \right]
    \\
    =&
    \frac{\sum_{x\in\x\cap W}
    h(x)}{(1-p)
    \int_{W}
    h(u)
    \de u}
    \E\left[ 
    \frac{1}{\#\mathcal{T}_k}
    \sum_{i\in \mathcal{T}_k} 
    \E[ 
    \1\{x\in\x_i^T\}
    | \mathcal{T}_k
    ]
    \right]
    \\
    =&
    \frac{\sum_{x\in\x\cap W}
    h(x)}{(1-p)
    \int_{W}
    h(u)
    \de u}
    \E[\E[\1\{x\in\x_1^T\}|\mathcal{T}_k]]
    =
    \frac{\sum_{x\in\x\cap W}
    h(x)}{\int_{W}
    h(u)
    \de u}
\end{align*}
by the law of total expectation, 
since the validation/training set assignments are independent and identically distributed with $\P(x\in\x_i^T)=1-p$, $i=1,\ldots,k$, $x\in\x$. It is straightforward to see that the same holds true if $\mathcal{T}_k=\{1,\ldots,k\}$.
\end{proof}

\subsection{Proof of Lemma~\ref{lemma:constant intensity majoration variance}}

We here want to show that 
\begin{equation*}
  \Var\left(\frac{1}{\#\mathcal{T}_k}
\sum_{i \in \mathcal{T}_k} 
\widehat\theta_i^h\right)
\geq
  \Var\left( 
      \frac{1}{k}
      \sum_{i=1}^k 
      \widehat\theta_i^h
      \right)
     =\frac{p}{k(1-p)} \frac{\sum_{x\in \x\cap W}h(x)^2}
  {(\int_{W} h(u) \de u)^2}.
  \end{equation*}
Staring with the right hand side, 
we have that
\begin{align*}
 \Var\left(\frac{1}{k}
      \sum_{i=1}^k 
      \widehat\theta_i^h\right)
      =&
      \Var\left(
      \frac{1}{k}
      \sum_{i=1}^k 
      \frac{
  \sum_{x\in\x\cap W} h(x)  \1\{x\in\x_i^T\} }{ (1-p) \int_{W} h(u) \de u}
      \right)
    =
      \frac{\Var\left(
  \sum_{x\in\x\cap W} b_x h(x) 
      \right)}{k^2(1-p)^2 (\int_{W} h(u) \de u)^2},
\end{align*}
where, for any $x\in X$, 
$b_x = \sum_{i=1}^k
\1\{x \in \x_i^T \cap W\}$ follows a binomial distribution with parameters $k$ and $(1-p)$, with
$\E[b_x]=k(1-p)$ and $\E[b_x^2]=k p(1-p) + (k(1-p))^2=k(1-p)(p+k(1-p))$. Since the events $x\in \x_i^T$ and $y\in \x_{i'}^T$ are independent for all $i,i'=1,\ldots,k$ and  $x\neq y$, $b_x$ and $b_y$ are independent when $x\neq y$, whereby
\begin{align*}
 \Var\left(\frac{1}{k}
      \sum_{i=1}^k 
      \widehat\theta_i^h\right)
      =&
      \frac{\sum_{x\in\x\cap W} \Var(b_x)h(x)^2}{k^2(1-p)^2 (\int_{W} h(u) \de u)^2}
      =  \frac{p}{k(1-p)} \frac{\sum_{x\in \x\cap W}h(x)^2}
  {(\int_{W} h(u) \de u)^2}.
\end{align*}



We next turn to the inequality. Since $\sum_{x\in\emptyset}
h(x)=0$, we a.s.~have that 
\begin{align*}
    \widehat{\theta}_j(\{(\x_i^T,\x_i^V)\}_{i=1}^k,p,W,h) 
    =&
    \frac{1}{\#\mathcal{T}_k}
\sum_{i \in \mathcal{T}_k} 
\widehat\theta_i^h
=
    \frac{1}{\#\mathcal{T}_k}
    \sum_{i=1}^{k} 
    \frac{\sum_{x\in\x_i^T\cap W}
    h(x)}{(1-p)
    \int_{W}
    h(u)
    \de u}
    \\
    =&
    \frac{k}{\#\mathcal{T}_k}
    \frac{1}{k}
    \sum_{i=1}^{k} 
    \frac{\sum_{x\in\x_i^T\cap W}
    h(x)}{(1-p)
    \int_{W}
    h(u)
    \de u}
    \\
    =&
    \frac{k}{\#\mathcal{T}_k}
    \frac{\sum_{x\in\x\cap W}
    h(x)\frac{1}{k}
    \sum_{i=1}^k\1\{x\in\x_i^T\}}{(1-p)
    \int_{W}
    h(u)
    \de u},
\end{align*}
whereby 
\begin{align*}
\Var(\widehat{\theta}_j(\{(\x_i^T,\x_i^V)\}_{i=1}^k,p,W,h))
    &=
    \frac{\Var\left( 
    \frac{k}{\#\mathcal{T}_k}
    \sum_{x\in\x\cap W}
    \frac{h(x)}{k}
    \sum_{i=1}^k\1\{x\in\x_i^T\}
    \right)}{(1-p)^2
    (\int_{W}
    h(u)
    \de u)^2}.
\end{align*}
The numerator is given by
\begin{align}
    \Var\left( 
    \frac{k}{\#\mathcal{T}_k}
    \sum_{x\in\x\cap W}
    \frac{h(x)}{k}
    \sum_{i=1}^k\1\{x\in\x_i^T\}
    \right)
    =&
    \E\left[\left(\frac{k}{\#\mathcal{T}_k}
    \sum_{x\in\x\cap W}
    h(x)\frac{1}{k}
    \sum_{i=1}^k\1\{x\in\x_i^T\}\right)^2\right]
    \notag \\
    &- 
    \E\left[
    \frac{k}{\#\mathcal{T}_k}
    \sum_{x\in\x\cap W}
    \frac{h(x)}{k}
    \sum_{i=1}^k\1\{x\in\x_i^T\}\right]^2,
    \label{eq:proof constant intensity inequality var 1}
\end{align}
where, by Theorem~\ref{thm:ConstantIntensity},
\begin{equation*}
    \E\left[
    \frac{k}{\#\mathcal{T}_k}
    \sum_{x\in\x\cap W}
    \frac{h(x)}{k}
    \sum_{i=1}^k\1\{x\in\x_i^T\}\right]
    = (1-p)\sum_{x\in\x\cap W}  h(x).
\end{equation*}
We further have that 
\begin{align}
    \Var\left( 
    \sum_{x\in\x\cap W}
    \frac{h(x)}{k}
    \sum_{i=1}^k\1\{x\in\x_i^T\}
    \right)
    =& 
    \E \left[\left(
    \sum_{x\in\x\cap W}
    \frac{h(x)}{k}
    \sum_{i=1}^k\1\{x\in\x_i^T\}\right)^2 \right]
    \notag \\
    &- 
    \E\left[
    \sum_{x\in\x\cap W}
    \frac{h(x)}{k}
    \sum_{i=1}^k\1\{x\in\x_i^T\}\right]^2
    \label{eq:proof constant intensity inequality var 2}
\end{align}
and
\begin{equation*}
   \E\left[
    \sum_{x\in\x\cap W}
    \frac{h(x)}{k}
    \sum_{i=1}^k\1\{x\in\x_i^T\}\right]
    =
    \sum_{x\in\x\cap W}
    \frac{h(x)}{k}
    \sum_{i=1}^k\E[\1\{x\in\x_i^T\}]
    = (1-p)\sum_{x\in\x\cap W}  h(x).
\end{equation*}
Since $k\geq \#\mathcal{T}_k$, from the two expectation equalities and~\eqref{eq:proof constant intensity inequality var 1}-\eqref{eq:proof constant intensity inequality var 2} it follows that 
\begin{equation*}
     \Var\left( 
    \frac{k}{\#\mathcal{T}_k}
    \sum_{x\in\x\cap W}
    \frac{h(x)}{k}
    \sum_{i=1}^k\1\{x\in\x_i^T\}
    \right)
    \geq 
    \Var\left( 
    \sum_{x\in\x\cap W}
    \frac{h(x)}{k}
    \sum_{i=1}^k\1\{x\in\x_i^T\}
    \right),
\end{equation*}
which concludes the proof.

\subsection{Proof of Lemma~\ref{lemma:ConstantIntensityX}}

\begin{proof}[Proof of Lemma~\ref{lemma:ConstantIntensityX}]

By Campbell's theorem, we have
\begin{align*}
    \E[\widehat{\theta}_j(\{(X_i^T,X_i^V)\}_{i=1}^k,p,W,h)] 
    =&
    \E\left[ 
    \frac{1}{\#\mathcal{T}_k}
    \sum_{i\in \mathcal{T}_k} 
    \left.
    \E\left[  
    \frac{\sum_{x\in X\cap W}
    h(x)\1\{x\in X_i^T\}}{(1-p)
    \int_{W}
    h(u)
    \de u}  \right| \mathcal{T}_k
    \right]
    \right]
    \\
    =&
   \E\left[ 
    \frac{1}{\#\mathcal{T}_k}
    \sum_{i\in \mathcal{T}_k} 
    \frac{    (1-p)\theta_0 
    \int_{W}
    h(u)
    \de u}{(1-p)
    \int_{W}
    h(u)
    \de u} 
    \right]
    =
    \theta_0
\end{align*}
and
\begin{align*}
     \E\left[ 
    \frac{1}{k}
    \sum_{i=1}^k 
    \frac{\sum_{x\in X\cap W}
    h(x)\1\{x\in X_i^T\}}{(1-p)
    \int_{W}
    h(u)
    \de u}  
    \right]
    &=
     \E\left[ 
    \frac{1}{k}
    \sum_{i=1}^k 
    \left.
    \E
    \left[  
    \frac{\sum_{x\in X\cap W}
    h(x)\1\{x\in X_i^T\}}{(1-p)
    \int_{W}
    h(u)
    \de u}  \right| \mathcal{T}_k
    \right]
    \right]
    = \theta_0.
\end{align*}
Hence, by the last two equations, we can follow the same calculus as in the proof of Lemma~\ref{lemma:constant intensity majoration variance} to prove that 
\begin{equation*}
    \Var\widehat\theta_j((X_i^T,X_i^V),p,W,h)
    \geq
    \Var\left( 
    \frac{1}{k}
    \sum_{i=1}^k 
    \widehat\theta((X_i^T,X_i^V),p,W,h)
    \right).
\end{equation*}

For all $x\in X$, 
let $b_x = \sum_{i\in \mathcal{T}_k}
\1\{x \in X_i^T \cap W\}$ which, conditionally on $\mathcal{T}_k=k$,
follows a binomial distribution with parameters $k$ and $(1-p)$ so that
$\E[b_x]=k(1-p)$ and $\E[b_x^2]=k p(1-p) + (k(1-p))^2=k(1-p)(p+k(1-p))$. Thus,
\begin{align*}
  & \Var\left( 
    \frac{1}{k}
    \sum_{i=1}^k 
    \widehat\theta((X_i^T,X_i^V),p,W,h)
    \right)=
  \\
  =&
 \frac{\Var
  \left(
  \sum_{x\in X\cap W} b_x h(x) \right)}
  {(k (1-p)\int_{W} h(u) \de u)^2}
  \\
  =&
  \frac{
     \sum_{x\in X\cap W}  h(x)^2\E[
     b_x^2]
     +
     \sum_{x,y\in X\cap W}^{\neq} h(x)h(y)
     \E[b_x b_y ]
     -
     \E[
     \sum_{x\in X\cap W} b_x h(x)]^2
     }
     {(k (1-p) \int_{W} h(u) \de u)^2}\\
     =&
  \frac{
    \frac{p+k(1-p)}{k(1-p)}
    \theta_0\int_W h(u)^2\de u
     +
     \int_W\int_W h(u_1)h(u_2)\rho^{(2)}(u_1,u_2)\de u_1\de u_2
     -
     (\theta_0\int_W h(u)\de u)^2
     }
     {(\int_{W} h(u) \de u)^2}
     \\
     =&
     \left(\frac{p}{(1-p)k}
     +
     1 \right) \theta_0
     \frac{ \int_W h(u)^2\de u }
     {(\int_{W} h(u) \de u)^2}
     +
     \theta_0^2
     \left( \frac{
     \int_W\int_W h(u_1)h(u_2)g_X^{(2)}(u_1,u_2)\de u_1\de u_2
     }
     {( \int_{W} h(u) \de u)^2} - 1\right).
\end{align*}
\end{proof}

\end{document}